\def\arrvline{\hfil\kern\arraycolsep\vline\kern-\arraycolsep\hfilneg}
\def\l@subsection#1#2{}
\def\l@subsubsection#1#2{}
\newcommand{\nda}[1]{{\color{blue}[NdA: #1]}}
\newcommand{\tbc}{\begin{center}
    \color{green}\textbf{*** TO BE CONTINUED ***}
\end{center}}
\newcommand{\checked}{
}
\newcommand{\tud}[2]{^{#1}_{\phantom{#1}#2}}
\newcommand{\tudu}[3]{^{#1\phantom{#2}#3}_{\phantom{#1}#2}}
\newcommand{\tdu}[2]{_{#1}^{\phantom{#1}#2}}
\newcommand{\tdud}[3]{_{#1\phantom{#2}#3}^{\phantom{#1}#2}}
\newcommand{\tudud}[4]{^{#1\phantom{#2}#3}_{\phantom{#1}#2\phantom{#3}#4}}
\newcommand{\bea}{\begin{eqnarray}}
\newcommand{\eea}{\end{eqnarray}}
\newcommand{\be}{\begin{equation}}
\newcommand{\ee}{\end{equation}}
\def\p{\partial}
\newcommand{\sL}{{{}^*\!L}}
\newcommand{\sV}{{{}^*\!V}}
\newcommand{\sRs}{{{}^*\!R^*}}
\newcommand{\sWs}{{{}^*\!W^*}}
\newcommand{\ldv}[2]{\frac{\text{D}#1}{\dd#2}}
\newcommand{\heart}{\ensuremath\heartsuit}
\newcommand{\cell}[4]{\fill[color=#4] (#1,#2) rectangle ({#1+1.5},{#2+.75});
\node[] () at ({#1+.75},{#2+.375}) {#3}
}
\newtheorem{lemma}{Lemma}
\newtheorem{prop}{Proposition}
\renewcommand{\epsilon}{\varepsilon}
\begin{document}
\title{Complete set of quasi-conserved quantities for spinning particles around Kerr}
\author{Geoffrey Compère}
\email{geoffrey.compere@ulb.be}
\author{Adrien Druart}
\email{Adrien.Druart@ulb.be}
\affiliation{Universit\'{e} Libre de Bruxelles, Gravitational Wave Centre, \\
 International Solvay Institutes, CP 231, B-1050 Brussels, Belgium}
\date{\today}

\begin{abstract}
We revisit the conserved quantities of the Mathisson-Papapetrou-Tulczyjew equations describing the motion of spinning particles on a fixed background. 
Assuming Ricci-flatness and the existence of a Killing-Yano tensor, we demonstrate that besides the two non-trivial quasi-conserved quantities, i.e. conserved at linear order in the spin, found by R\"udiger, non-trivial quasi-conserved quantities are in one-to-one correspondence with non-trivial mixed-symmetry Killing tensors. We prove that no such stationary and axisymmetric mixed-symmetry Killing tensor exists on the Kerr geometry. We discuss the implications for the motion of spinning particles on Kerr spacetime where the quasi-constants of motion are shown not to be in complete involution.
\end{abstract}

\maketitle

\tableofcontents

\section{Introduction}

The prospective observation of extreme mass ratio inspirals (EMRIs) with the LISA mission strongly motivates the modeling of black hole binaries in the small mass ratio regime \cite{AmaroSeoane:2007aw,Babak:2017tow}. In the self-force formalism, the leading order motion in the small mass ratio expansion reduces to the motion of a spinning particle orbiting the Kerr black hole \cite{Barack:2018yvs,Pound:2021qin}. The equations of motion of such a system were established by Mathisson \cite{2010GReGr..42.1011M} and Papapetrou \cite{1951RSPSA.209..248P}, which can be completed with the spin supplementary condition of Tulczyjew \cite{Tulczyjew:1957aa,Tulczyjew:1959aa}. In astrophysical scenarios, the spin scales as the mass ratio times the mass of the Kerr black hole squared as a result of the maximally spinning bound \cite{1963PhRvL..11..237K,Penrose:1999vj}. It implies that quadratic effects in the spin are of the same order of magnitude as second order self-force effects for EMRIs which are pertinent for LISA observations \cite{Isoyama:2012bx}. Moreover, since they appear at 2PN order in the post-Newtonian expansion \cite{Kidder:1992fr,Kidder:1995zr,Will:1996zj,Gergely:1999pd,Mikoczi:2005dn,Racine:2008kj} (see \cite{Porto:2008jj,Bohe:2015ana} for the 3PN order and \cite{Kastha:2019brk} for a recent status) quadratic effects are generically relevant for gravitational waveform modeling of compact binaries.

The conservation of the mass and spin magnitude being taken into account, the Mathisson-Papapetrou-Tulczyjew (MPT) equations can be written in terms of an Hamiltonian system possessing four degrees of freedom \cite{Witzany_2019}. The energy and angular momentum provide two immediate first integrals of motion. Two additional quantities linear in the spin vector were found by R\"udiger \cite{doi:10.1098/rspa.1981.0046,doi:10.1098/rspa.1983.0012}. They are quasi-conserved, \emph{i.e.}, conserved at linear order in the spin or, equivalently, admitting an evolution along the orbit (at least) quadratic in the spin. In fact, without further quadratic corrections, R\"udiger's quasi-constants of motion are not conserved at quadratic order in the spin \cite{Batista:2020cto}.

The unanswered question of the existence of other independent quasi-conserved quantities for the MPT equations led to an undetermined status of the role of integrability and by opposition, chaos, in the dynamics of spinning particles around Kerr. While chaos has been established to appear at second order in the spin \cite{Zelenka:2019nyp}, numerical simulations suggest that no chaos occurs at linear order in the spin \cite{Kunst:2015tla,Ruangsri:2015cvg,Zelenka:2019nyp}. From the  solution of the Hamilton-Jacobi equations at linear order in the spin, one can infer that chaotic motion at linear order is negligible \cite{Witzany:2019nml}. In this paper, we will relate the existence of new quasi-conserved quantities homogeneously linear in the spin to the existence of a new tensorial structure on the background, that we will refer to as a \textit{mixed-symmetry Killing tensor}.
This result will apply to the Kerr background and more generally to Ricci-flat spacetimes admitting a Killing-Yano tensor. We will demonstrate that under the assumption of stationarity and axisymmetry, no such non-trivial structure exists on the Schwarzschild background and no non-trivial mixed-symmetry Killing tensor on Kerr can be constructed from deformations of trivial mixed-symmetry Killing tensors on Schwarzschild. In addition,  we will derive that Liouville integrability does not hold around the Kerr background at linear order in the spin since the quasi-conserved quantities are not in involution.

The rest of the paper is organized as follows. In Section \ref{sec:MPT}, we review the MPT equations governing the motion of spinning test-particles in a fixed background. In Section \ref{sec:invariants}, we review the procedure formulated by R\"udiger \cite{doi:10.1098/rspa.1981.0046,doi:10.1098/rspa.1983.0012} to build constants of motion of the MPT equations. We explicit the set of constraints that must be fulfilled for an invariant at most linear in the spin to exist. Conservation at linear order of such invariant requires to solve only two constraints. We simplify the second, most difficult, constraint in Section \ref{sec:C2}. 
We subsequently particularize our setup in Section \ref{sec:KY} to spacetimes admitting a Killing-Yano (KY) tensor. After deriving some general properties of KY tensors, we will prove a cornerstone result for the continuation of our work, which we will refer to as the \textit{central identity}. Building on all previous sections, we will solve the aforementioned constraint for Ricci-flat (vacuum) spacetimes possessing a KY tensor in Section \ref{sec:unicity}. This will enable us to study in full generality the quasi-invariants for the MPT equations that are quadratic in the combination of spin and momentum. On the one hand, we recover Rüdiger's results \cite{doi:10.1098/rspa.1981.0046,doi:10.1098/rspa.1983.0012}. On the other hand, we prove that the existence of any further quasi-invariant, which is then necessarily homogeneously linear in the spin, reduces to the existence of a non-trivial mixed-symmetry Killing tensor on the background. The significance of this result is examined for spinning test-particles in Kerr spacetime in the final Section \ref{sec:Kerr}. We show that a  stationary and axisymmetric non-trivial mixed-symmetry Killing tensor does not exist on the Kerr geometry. Consequently, an additional independent quasi-constant of motion for the linearized MPT equations does not exist. As detailed in Appendix \ref{app:pb} the linearized MPT integrals of motion are not in involution, which implies that the system is not integrable in the sense of Liouville.

\paragraph*{Conventions and notations.} We place ourselves within the framework of General Relativity. Therefore, we will always consider a 4d Lorentzian manifold equipped with a metric $g_{\mu\nu}$. The metric signature is chosen to be $(-+++)$. Lowercase Greek indices run from $0$ to $3$ and denote spacetime indices. Lowercase Latin indices represent tetrad indices. The Einstein summation convention is used everywhere. $\nabla_\alpha$ denotes the Levi-Civita connexion, the Riemann tensor is defined such that $\comm{\nabla_\alpha}{\nabla_\beta}A_\mu=-R\tud{\lambda}{\mu\alpha\beta}A_\lambda$ and the Ricci tensor is defined as $R_{\mu\nu}=R\tud{\lambda}{\mu\lambda\nu}$.

\section{Motion of spinning test-particles in general relativity}
\label{sec:MPT}
Let us consider the motion of a object described by the stress-energy tensor $T_{\mu\nu}$ in a background metric $g_{\mu\nu}$. We have here in mind the motion of a ``small'' astrophysical object (stellar mass black hole or neutron star) around a hypermassive black hole. In this situation, the former can be viewed as a perturbation of the spacetime geometry created by the later. Furthermore, if the small object is \textit{compact} (\textit{i.e.} if its typical size is much smaller than the typical lengthscale describing the binary system), its internal structure can be fully described in terms of an infinite collection of multipole moments defined on a worldline $X^\mu(\tau)$ \cite{2010GReGr..42.1011M,1970RSPSA.314..499D,1970RSPSA.319..509D}:
\begin{equation}
    \int_{x^0=\text{constant}}\dd[3]{x}\sqrt{-g}T^{\mu\nu}\delta x^{\alpha_1}\ldots\delta x^{\alpha_n}\checked
\end{equation}
where $\delta x^\mu\triangleq x^\mu-X^{\mu}(\tau)$ with $\tau$ being the small object's proper time. This representation is usually called the \textit{gravitational skeletonization}.

Hereafter, we will restrict our setup to the \textit{pole-dipole} approximation, which consists into neglecting all the moments but the two first ones, namely the linear momentum $p^\mu$ and the skew-symmetric\footnote{The symmetric part of the dipole moment is vanishing when the worldline is chosen as the body's center of mass, \textit{i.e.} when the spin supplementary condition will be enforced. For more details, see \textit{e.g.} the excellent review \cite{Harte_2015}.} spin-dipole $S^{\mu\nu}$:
\begin{align}
    p^\mu&\triangleq\int_{x^0=\text{constant}}\dd[3]{x}\sqrt{-g}T^{\mu 0},\checked\\
    S^{\mu\nu}&\triangleq\int_{x^0=\text{constant}}\dd[3]{x}\sqrt{-g}\qty(\delta x^\mu T^{\nu0}-\delta x^{\nu}T^{\mu 0}).\checked
\end{align}
Physically, this corresponds to upgrading the test-particle geodesic motion in order to include the effects due to the small body spin, while still neglecting higher moments (quadrupole and higher) due to its more refined internal structure.

\subsection{Mathisson-Papapetrou equations}
The equations of motion for such a spinning test-particle can be worked out using the conservation of the stress tensor $\nabla_\mu T^{\mu\nu}=0$. This leads to the so-called \textit{Mathisson-Papapetrou equations of motion} (or MP equations for short) \cite{2010GReGr..42.1011M,1951RSPSA.209..248P}:
\begin{align}
    \ldv{p^\mu}{\lambda} &= -\frac{1}{2}R^\mu_{\phantom{\mu}\nu\alpha\beta}v^\nu S^{\alpha\beta},\label{MPT1}\checked\\
    \ldv{S^{\mu\nu}}{\lambda} &= 2p^{[\mu}v^{\nu]}\label{MPT2}\checked
\end{align}
where we defined the tangent vector to the worldline as $v^\mu\triangleq\dv{X^\mu}{\lambda}$ ($\lambda$ being any affine parameter) and where $\ldv{}{\lambda}\triangleq v^\mu\nabla_\mu$ is the covariant derivative along the worldline. We also introduce the notations
\begin{align}
    \mathfrak m&\triangleq -p^\mu v_\mu,\checked\\
    \mu^2&\triangleq-p^\mu p_\mu,\checked\\
    \mathcal S^2&\triangleq\frac{1}{2}S^{\mu\nu}S_{\mu\nu}.\checked
\end{align}
Here, $\mu^2$ is the dynamical rest mass of the object, \textit{i.e.} the mass of the object measured by an observer in a frame where the spatial components of the linear momentum $p^i$ do vanish; $\mathfrak m$ will be referred to as the \textit{kinetic mass} and $\mathcal S$ as the \textit{spin parameter}.

At this point, let us emphasize that the linear momentum is no aligned with the velocity, and thus not tangent to the worldline, since
\begin{equation}
    p^\mu=\frac{1}{v^2}\qty(v_\alpha\ldv{S^{\mu\alpha}}{\lambda}-\mathfrak m\,v^\mu).\label{pAsv}\checked
\end{equation}

The dynamical and kinetic masses are in general \textit{not} constants of the motion: in fact, using the MP equations \eqref{MPT1}-\eqref{MPT2}, one can show that
\begin{align}
    \dv{\mathfrak m}{\lambda}&=-\frac{1}{v^2}\ldv{v_\alpha}{\lambda} v_\beta \ldv{S^{\alpha\beta}}{\lambda},\checked\\
    \dv{\mu}{\lambda}&=-\frac{1}{\mu \,\mathfrak m}p_\alpha\ldv{p_\beta}{\lambda}\ldv{S^{\alpha\beta}}{\lambda}.\label{evolMu}\checked
\end{align}
Similarly, the evolution equation for the spin parameter reads as
\begin{equation}
    \dv{(\mathcal S^2)}{\lambda}=2S_{\mu\nu}p^\mu v^\nu.\checked
\end{equation}

\subsection{The spin supplementary condition}

The motion of the spinning test-particle is described by the fourteen dynamical quantities $v^\mu$, $p^\mu$ and $S^{\mu\nu}=S^{[\mu\nu]}$. However, we only have in our possession ten differential equations, namely the MP equations \eqref{MPT1}--\eqref{MPT2}. The system is consequently not closed, and we are left with four extra dynamical quantities. These four functions can be identified with the worldline $X^\mu$ along which we are defining the particle's multipole moments. We will choose it as being the worldline describing the position of the body's center-of-mass as seen by an observer of 4-velocity proportional to  $p^\mu$. This is practically implemented by enforcing both a choice of proper time and the (covariant)  \textit{Tulczyjew} \textit{spin supplementary conditions} (SSCs) \cite{Tulczyjew:1957aa,Tulczyjew:1959aa}
\begin{equation}
    S^{\mu\nu}p_\nu=0.\checked\label{ssc}
\end{equation}
The SSCs form a set of three additional constraints since a contraction with $p_\mu$ leads to a trivial identity. In what follows, we will choose the affine parameter driving the evolution as the particle's proper time, $\lambda=\tau$.  This enforces the 4-velocity to be normalized,
\begin{equation}\label{v2}
    v^\mu v_\mu=-1,\checked
\end{equation}
which thereby guarantees its timelike nature along the evolution of the system. These conditions consequently close our system of equations which then call the MPT equations. These conditions fix uniquely the worldline and allows to inverting Eq. \eqref{pAsv} in order to express the 4-velocity as a function of the linear momentum. It can be shown \cite{1977GReGr...8..197E,Suzuki:1997tg,Batista:2020cto} that
\begin{align}
    v^\mu=\frac{\mathfrak m}{\mu^2}\qty(p^\mu+\frac{D\tud{\mu}{\alpha}p^\alpha}{1-\frac{d}{2}})\label{velocity}
\end{align}
where we defined\footnote{Our definition of $D\tud{\mu}{\alpha}$ differs by a global `$-$' sign from the one provided in \cite{Batista:2020cto} due to the convention chosen for the signature.}
\begin{align}
    D\tud{\nu}{\beta}&\triangleq\frac{1}{2\mu^2}S^{\nu\alpha}R_{\alpha\beta\gamma\delta}S^{\gamma\delta},\checked\\
    d&\triangleq D\tud{\alpha}{\alpha}.\checked
\end{align}
The condition \eqref{v2} allows to algebraically solve for $\mathfrak m$ using Eq. \eqref{velocity}: 
\begin{align}
    \mathfrak m^2=\mu^2\left( 1-\frac{D\tud{\mu}{\alpha}p^\alpha D_{\mu\beta}p^\beta}{\mu^2\qty(1-d/2)^2}\right)^{-1} .\label{kinetic_mass}
\end{align}
The positivity of $\mathfrak m$ and the condition $d<2$ are not a consequence of the MPT equations but we will enforce these conditions for physical reasons. It will guarantee that the MPT equations perturbatively correct the geodesic motion with spin couplings\footnote{ For a proposal of a completion of the MPT equations with improved ultra-relativistic behavior, see \cite{Deriglazov:2017jub,Deriglazov:2018vwa}.}. In what follows we will always assume that $\mathfrak m>0$ and $d<2$. 

A direct consequence of the SSC conditions is that the spin parameter is constant,
\begin{equation}
    \dv{(\mathcal S^2)}{\tau}=0.\label{conservation_S}\checked
\end{equation}
Finally, by differentiating the SSC conditions and plugging them into the rest mass evolution equation \eqref{evolMu}, it is easy to show that the latter is also a constant of the motion,
\begin{equation}
    \dv{\mu}{\tau}=0.\label{conservation_mu}\checked
\end{equation}

\subsection{The spin vector}

Having imposed the Tulczyjew conditions, all the information contained in the spin-dipole tensor can be recast into a \textit{spin vector} $S^\mu$. Indeed, if one defines\footnote{The convention chosen here differs from the one of \cite{doi:10.1098/rspa.1981.0046,Batista:2020cto} by a global `$-$' sign.}
\begin{equation}
    S^\alpha\triangleq\frac{1}{2}\epsilon^{\alpha\beta\gamma\delta}\hat p_\beta S_{\gamma\delta}\checked
\end{equation}
where $\hat p^\mu\triangleq \frac{p^\mu}{\mu}$ (yielding $\hat p^2=-1$), one can invert the previous relation in order to rewrite $S^{\alpha\beta}$ in terms of $S^\alpha$:
\begin{equation}\label{defSmu}
    S^{\alpha\beta}=-\epsilon^{\alpha\beta\gamma\delta} \hat p_\gamma S_\delta.\checked
\end{equation}
This can be easily checked using the identity \cite{Wald:1984rg}
\begin{equation}
    \epsilon^{\alpha_1\ldots\alpha_j\alpha_{j+1}\ldots\alpha_n}\epsilon_{\alpha_1\ldots\alpha_j\beta_{j+1}\ldots\beta_n}=-(n-j)!\,j!\,\delta^{[\alpha_{j+1}\ldots\alpha_n]}_{\beta_{j+1}\ldots\beta_n}\label{contraction}
\end{equation}
which is valid in any Lorentzian manifold. We make use of the shortcut notation $\delta^{\mu_1\ldots\mu_N}_{\nu_1\ldots\nu_N}\triangleq\delta^{\mu_1}_{\nu_1}\ldots\delta^{\mu_N}_{\nu_N}$. By definition, the spin vector is automatically orthogonal to the 4-impulsion:
\begin{align}
    p_\mu S^\mu=0.
\end{align}
Finally, also notice that the spin parameter is simply the squared norm of the spin vector, $\mathcal S^2=S^\alpha S_\alpha$\checked.

\subsection{Independent dynamical variables}

Let us now summarize the independent dynamical variables of our system.
Under the SSCs \eqref{ssc}, one can write
\begin{align}
    \mu&=\mu(p^\alpha)=\sqrt{-p_\alpha p^\alpha},\checked\\
    \mathfrak m & = \mathfrak m (p^\alpha,S^\alpha), \\
    S^{\mu\nu}&=S^{\mu\nu}(p^\alpha,S^\alpha)=-\epsilon^{\mu\nu\alpha\beta}\hat p_\alpha S_\beta,\checked\\
    v^\mu&=v^\mu(p^\alpha, S^\alpha).
\end{align}
The explicit expression for $v^\mu$ will be worked out in the following subsection. Consequently, the system can be fully described in terms of the twelve dynamical variables $x^\mu$, $p^\mu$ and $S^\mu$. However, the four components of the spin vector $S^\mu$ are not independent, since they are subjected to the orthogonality condition $p^\mu S_\mu=0$. At the end of the day, we are left with eleven independent variables. From an Hamiltonian perspective, after imposing the constraint on the Hamiltonian $H=-\mu^2/2$ the system $(x^\mu,p^\mu)$ admits 3 degrees of freedom and taking into account the consistency of the spin $S^2$, the spin vector $S^\mu$ admits one further degree of freedom, leading to 4 degrees of freedom \cite{Witzany_2019}, see further discussion in Section \ref{sec:hamilton}.

The fact that all the components of $S^\mu$ are not independent will lead to complications when we will seek to build invariants, as detailed in Section \ref{sec:invariants}. To overcome this difficulty, we introduce
\begin{equation}
    \Pi^\mu_\nu\triangleq\delta^\mu_\nu+\hat p^\mu \hat p_\nu,\checked
\end{equation}
the projector onto the hypersurface orthogonal to $p^\mu$. It can be directly checked from the definition that $\Pi^\mu_\nu$ satisfies the properties:
\begin{align}
    \text{(I)}&\quad \Pi^\mu_\nu\,\Pi^\nu_\rho=\Pi^\mu_\rho,\checked\\
    \text{(II)}&\quad\Pi^\mu_\nu\, p^\nu=0,\checked\\
    \text{(III)}&\quad\Pi^\mu_\alpha\,S^{\alpha\nu}=S^{\mu\nu}.\checked
\end{align}

We introduce the \textit{relaxed spin vector} $s^\alpha$ from
\begin{equation}
    S^\alpha\triangleq\Pi^\alpha_\beta s^\beta\label{relaxed_spin}\checked
\end{equation}
where the part of $\mathbf s$ aligned with $\mathbf p$ is left arbitrary, but is assumed (without loss of generality) to be of the same order of magnitude. It ensures the relation $\mathcal O(s)=\mathcal O(\mathcal S)$ to hold (where $s^2\triangleq s_\alpha s^\alpha$). While working out the conservation constraints, one will often encounter the spin vector antisymmetrized with $\mathbf p$, in expressions of the type $p^{[\mu}S^{\nu]}$. In that case, we will write $p^{[\mu}S^{\nu]}=p^{[\mu}s^{\nu]}\label{S_to_s}\checked $ thereby replacing the (constrained) $\mathbf S$ by the (independent) variables $\mathbf s$.

\subsection{A convenient expression for the 4-velocity}

In the remaining of this section, we will derive a convenient expression of the 4-velocity in terms of the impulsion and the spin vector. Let us first introduce some definitions. Given any tensor $\mathbf A$, one can define the left and the right Hodge duals as
\begin{align}
    {}^*\!A_{\mu\nu\alpha_1\ldots\alpha_p}&\triangleq\frac{1}{2}\epsilon\tdu{\mu\nu}{\rho\sigma}A_{\rho \sigma\alpha_1\ldots\alpha_p},\checked\\
    A^*_{\alpha_1\ldots\alpha_p\mu\nu}&\triangleq\frac{1}{2}\epsilon\tdu{\mu\nu}{\rho\sigma}A_{\alpha_1\ldots\alpha_p\rho\sigma}.\checked
\end{align}
They correspond, respectively, to the Hodge dualization on the two first, resp. the two last, indices of $\mathbf A$. The definition of the bidual $\mathbf{ {}^*\! A^*}$ follows directly from the definitions above. Finally, given the product of two antisymmetrized vectors, one can similarly define
\begin{equation}
    l^{[\mu}m^{\nu]*}\triangleq\frac{1}{2}\epsilon^{\mu\nu\rho\sigma}l_\rho m_\sigma.\checked
\end{equation}
The dual tensors obey the following properties:
\begin{align}
    \text{(I)}&\quad A^*_{\mu\nu}={}^*\!A_{\mu\nu},\checked\\
    \text{(II)}&\quad A^{**}_{\mu\nu}={}^{**}\!A_{\mu\nu}=-A_{\mu\nu},\checked\\
    \text{(III)}&\quad{}^*\!A_{[\mu\nu]}B^{[\mu\nu]}=A_{[\mu\nu]}{}^*\!B^{[\mu\nu]}.\checked
\end{align}
For any metric we have
\begin{equation}
\mbox{}^*R\tud{\alpha}{\mu\alpha\nu}=\frac{1}{2}\epsilon\tudu{\alpha}{\beta}{\beta\gamma}R_{[\beta\gamma\alpha]\nu}=0.
\end{equation}
Bianchi's identities $R_{\alpha\beta [\mu\nu ; \sigma]}=0$ can be equivalently written as
\begin{align}
R\tudud{*}{\alpha\beta\gamma}{\sigma}{;\sigma}= 0.   
\end{align}

Let us turn to the more specific context of the MPT theory. Using all the previous definitions, it is not complicated to check that the following properties hold:
\begin{align}
    \text{(I)}&\quad S^{\alpha\beta}=2S^{[\alpha}\hat p^{\beta]*},\checked \label{subsS}\\
    \text{(II)}&\quad \Pi^{\alpha[\beta}S^{\gamma]*}=S^{\alpha[\beta}\hat p^{\gamma]}.\checked
\end{align}
This allows us to write
\begin{align}
    D\tud{\mu}{\alpha}p^\alpha&=\frac{1}{2\mu}S^{\mu[\nu}\hat p^{\alpha]}R_{\nu\alpha\rho\sigma}S^{\rho\sigma}\checked\\
    &=\frac{1}{\mu}\Pi^{\mu[\nu}S^{\alpha]*}R_{\nu\alpha\rho\sigma}S^{[\rho}\hat p^{\sigma]*}\checked\\
    &=\frac{1}{\mu}\Pi^{\mu\alpha}{}^*\!R^*_{\alpha\beta\gamma\delta}S^\beta S^\gamma \hat p^\delta.\checked
\end{align}
Similarly, one can show that
\begin{equation}
    d=-\frac{2}{\mu^2}{}^*\!R^*_{\alpha\beta\gamma\delta} S^\alpha \hat p^\beta S^\gamma \hat p^\delta.\checked
\end{equation}
Putting all the pieces together, one can proceed to the desired rewriting of Eq. \eqref{velocity}:
\begin{align}
\frac{\mu^2}{\mathfrak m}\qty(1-\frac{d}{2})v^\mu&=\qty(1-\frac{d}{2})p^\mu+D\tud{\mu}{\alpha}p^\alpha\checked\\
    &=p^\mu-\frac{1}{\mu}\qty(\hat p^\mu\hat p^\alpha-\Pi^{\mu\alpha}){}^*\!R^*_{\alpha\beta\gamma\delta}S^\beta S^\gamma \hat p^\delta\\
    &=p^\mu+\frac{1}{\mu}g^{\mu\alpha}{}^*\!R^*_{\alpha\beta\gamma\delta}S^\beta S^\gamma \hat p^\delta\checked\\
    &=p^\mu+\frac{1}{\mu}{}^*\!{R^*}\tud{\mu}{\beta\gamma\delta}S^\beta S^\gamma \hat p^\delta.\checked
\end{align}
This relation will be a fundamental building block of the forthcoming computations.

\subsection{MPT equations at linear order in the spin}
We take a break and look back at our fundamental motivation: extreme mass-ratio inspirals involving a spinning secondary. The most important parameter for describing an EMRI is the ratio between the mass $\mu$ of the secondary and the mass $M$ of the primary, which is by assumption a small number:
\begin{align}
    \eta\triangleq\frac{\mu}{M}\ll 1.
\end{align}
Let us assume that the EMRI central object is a Kerr black hole. As detailed e.g. in \cite{Zelenka:2019nyp}, the spin term in the MPT equations for such a background spacetime scales as
\begin{align}
    \frac{\mathcal S}{\mu M}\leq\frac{\mu^2}{\mu M}=\eta,
\end{align}
in any astrophysically realistic situation. Therefore, for timescales shorter than the radiation-reaction time $1/\eta$, the linear approximation in the spin is a valid approximation, which admits perturbative corrections in the spin. 

Neglecting all $\mathcal O(\mathcal S^2)$ terms, Eq. \eqref{kinetic_mass} simply becomes $\mathfrak m=\mu$, which leads to the usual relation between the impulsion and the 4-velocity,
\begin{align}
    p^\mu=\mu v^\mu.\label{pv}
\end{align}
Once linearized in spin, the MPT equations \eqref{MPT1}-\eqref{MPT2} reduce to
\begin{align}
    \frac{D p^\mu}{\dd\tau}&=f^{(1)\mu}_{S}\triangleq-\frac{1}{2\mu}R\tud{\mu}{\nu\alpha\beta}p^\nu S^{\alpha\beta},\label{MPT_lin_1}\\
    \frac{D S^\mu}{\dd\tau}&=0,\label{MPT_lin_2}
\end{align}
which are, respectively, the forced geodesic equation with force $f^{(1)\mu}_{S}=\mathcal O\qty(\mathcal S^1)$ and the parallel transport equation of the spin vector studied e.g. in \cite{Ruangsri:2015cvg,vandeMeent:2019cam}.

\section{Building conserved quantities}\label{sec:invariants}

The MPT equations take the form of a set of first order partial differential equations for the impulsions $p^\mu$ and the spin $S^\mu$. However, the very goal of anyone wanting to solve the MPT equations is to obtain the position of the spinning test-particle as a function of the proper time $X^\mu(\tau)$. With respect to the positions $X^\mu$, the MPT equations are a set of \textit{second order} PDEs. As it is always the case when studying such a dynamical system, much information can be obtained  if one is able to build \textit{first integrals} of the motion (or invariants or conserved quantities), \textit{i.e.} functions of the dynamical variables $\mathcal Q(p^\alpha,S^\alpha)$ that are constant along the motion:
\begin{align}
    \dot{\mathcal{Q}}(p^\alpha,S^\alpha)=0,\qquad \dot{}\triangleq\dv{\tau}.
\end{align}
As always in General Relativity, the existence of first integrals of the motion will be strongly related with the presence of symmetries of the background spacetime ({i.e.} the existence of Killing vectors or Killing(-Yano) tensors).

In this Section, we will discuss the construction of invariants for the MPT equations that are at most linear in the spin vector. After a review of the conserved quantities already discussed in the literature, we will formulate the problem of finding an invariant (of the non-linear MPT equations) at most linear in the spin as a set of general constraints, following the method introduced by Rüdiger \cite{doi:10.1098/rspa.1981.0046,doi:10.1098/rspa.1983.0012}. The ideas behind this procedure are conceptually simple, but the computations for the MPT equations turn out to be involved. This is the reason why we will open this section by explaining Rüdiger's method for the case of the geodesic equations.

\subsection{The geodesic case}
As a warm-up, let us recall how first integrals can be constructed for geodesic motion, namely when the spin-dipole is vanishing. In this case, the linear momentum is tangent to the worldline, $p^\mu=\mu v^\mu$ and the MPT equations reduce to the geodesic ones:
\begin{equation}
    \frac{D p^\mu}{\dd\tau}=0,\qquad\frac{D}{\dd\tau}\triangleq v^\alpha\nabla_\alpha.\label{geodesic_equation}
\end{equation}

In most GR textbooks and lectures, the problem is tackled from the perspective ``\textit{symmetry implies conservation}'': one first introduces the notion of Killing vector fields ($\nabla_{(\alpha}\xi_{\beta)}=0$) and \textit{then} prove the well-known property stating that, given any geodesic of 4-impulsion $p^\mu$ and a Killing vector field $\xi^\mu$ of the background spacetime, the quantity $C_\xi\triangleq \xi_\alpha p^\alpha$ is constant along the geodesic. One also shows that this property generalizes in the presence of a Killing tensor, and that the invariant mass $\mu^2=-p_\alpha p^\alpha$ is also constant along the geodesic trajectory. Of course, when we assert that a scalar function $C(p^\alpha)$ is constant along the geodesic motion (or is conserved, or is invariant, or\ldots), we have in mind the statement that it remains constant along the proper time evolution, which is equivalent to the statement that its covariant derivative along the geodesic path vanishes, namely
\begin{equation}
    \dot{{C}}(p^\alpha)=0\quad\Leftrightarrow\quad p^\mu\nabla_\mu  C(p^\alpha)=0.\label{conservation_equation}
\end{equation}

In what follows, and as a prelude for the MPT case, we will tackle the problem in the opposite way (``\textit{conservation requires symmetry}''): given an arbitrary geodesic, is it possible to construct invariants of the motion that are polynomial quantities of the impulsion, \textit{i.e.} that are composed of monomials of the form
\begin{equation}
 C_\mathbf{K}^{(n)}\triangleq K_{\alpha_1\ldots\alpha_n}p^{\alpha_1}\ldots p^{\alpha_n}\label{geodesic_invariant}
\end{equation}
where, at this point, $\mathbf K$ is an arbitrary, by definition totally symmetric tensor of rank $n$? We will show that requiring the conservation of $ C_\mathbf{K}^{(n)}$ will require either $\mathbf K$ to be a Killing vector/tensor or either that $C_\mathbf{K}^{(2)}$ is the invariant mass. 

In order to work out the most general constraint on $\mathbf K$, we plug the definition of $C_\mathbf{K}$ \eqref{geodesic_invariant} into the conservation equation \eqref{conservation_equation}. Using the geodesic equation \eqref{geodesic_equation} and relabelling the indices, one  gets
\begin{equation}
p^\mu\nabla_\mu K_{\alpha_1\ldots\alpha_n} p^{\alpha_1}\ldots p^{\alpha_n}=0.
\end{equation}
The crucial point is that the dynamical variables $p^\alpha$ are \textit{independent} among themselves. The above relation must hold for any values of the independent $p^\alpha$, yielding the general constraint
\begin{equation}
    \nabla_{(\mu}K_{\alpha_1\ldots \alpha_n)}=0.\label{genK}
\end{equation}
The only possible cases for solving this constraint are the following:
\begin{itemize}
    \item for $n=1$, $K_\mu$ must be a Killing vector, $\nabla_{(\mu}K_{\nu)}=0$;
    \item for $n=2$, either $K_{\mu\nu}$ must be a rank-2 Killing tensor ($\nabla_{(\mu}K_{\nu\rho)}=0$), either one takes $K_{\mu\nu}=g_{\mu\nu}$ which leads to the conservation of the invariant mass, $ C_\mathbf{g}^{(2)}=-\mu^2$;
    \item for any $n\geq3$, $\mathbf K$ must be a rank-n Killing tensor.
\end{itemize}

Before turning to the spinning particle case, let us make a couple of remarks:

\begin{enumerate}
    \item As stated above, the viewpoint adopted here is reversed with respect to the `traditional' one: we have proven that the existence of conserverd quantities along geodesic trajectories that are \textit{polynomial} in the impulsions require the existence of symmetries of the background spacetime (except for the invariant mass $\mu$ which is always conserved).
    \item Any linear combination of the invariants defined above remains of course invariant. Nevertheless, the conservation can be checked separately at each order in $\mathbf p$ because the application of the conservation condition \eqref{conservation_equation} doesn't change the order in $\mathbf p$ of the terms contained in the resulting expression.
    \item The invariant related to a Killing tensor is  relevant only if the latter is irreducible, \textit{i.e.} if it cannot be written as the product of Killing vectors. Otherwise, the invariant at order $n$ in $\mathbf p$ is just a product of invariants of lower order. 

\end{enumerate}

\subsection{The spinning case}
We will now turn on the spin, and discuss the invariants that can be build for MPT equations.

\subsubsection{State-of-the-art}
Several conserved quantities for the MPT equations have been discussed in the literature, most of the time in the context of the background spacetime being given by the Kerr metric \cite{Saijo:1998mn}:
\begin{itemize}
    \item When the Tulczyjew SSCs are enforced, the invariant mass
    \begin{align}
        \mu^2=-p_\alpha p^\alpha
    \end{align}
    and the spin parameter
    \begin{align}
        S=\sqrt{S_\alpha  S^\alpha}
    \end{align}
    are constant of the motion, as asserted by Eqs. \eqref{conservation_S} and \eqref{conservation_mu}.
    \item If there exists a Killing vector field $\xi_\alpha$ of the background spacetime, one can upgrade the geodesic case construction and show that 
    \begin{align}
    \mathcal C_\xi\triangleq\xi_\mu p^\mu+\frac{1}{2}\nabla_\mu \xi_\nu S^{\mu\nu}=C_\xi+\frac{1}{2}\nabla_\mu \xi_\nu S^{\mu\nu}\label{defC}
    \end{align}
    is an invariant of the motion \cite{Semerak:1999qc}. It naturally arises from generalized Killing equations \cite{Harte:2008xq}. 
    \end{itemize}
    
    These are the only ``obvious'' invariants. To go further, one should apply the general procedure described above: write down the most general expression for the invariant we are looking for and then work out the constraints implied by its conservation along the motion. 
    
    In the presence of both $p^\mu$ and $S^\mu$, one can introduce a grading such that $[p^\mu]=[S^\mu]=[\mu]=[S^{\mu\nu}]=1$ and $[g_{\mu\nu}]=0$ and consider invariants of order $n$ along such grading ($n=1$ linear, $n=2$ quadratic, etc). 
    Such a general scheme has been undertaken by R. Rüdiger in the early 80's, both for linear invariants \cite{doi:10.1098/rspa.1981.0046} and for quadratic ones \cite{doi:10.1098/rspa.1983.0012}:
    \begin{itemize}
        \item Restricting to invariants linear in $p^\mu$, $S^\mu$, Rüdiger found that, if there exists a Killing-Yano (KY) tensor $Y_{\alpha\beta}$ of the background spacetime, then the quantity
        \begin{align}
            \mathcal Q_{\mathbf Y}\triangleq S_{\alpha\beta}\,^*Y^{\alpha\beta}\label{rudiger_linear}
        \end{align}
        is a \textit{quasi-invariant} of motion (i.e. conserved at linear order in the spin),  
        \begin{align}
           \dot{\mathcal{Q}}_{\mathbf Y}=\mathcal O(\mathcal S^2). 
        \end{align}
         It would be an invariant of the motion at all orders in the spin provided that $\mathbf Y$ satisfy a set of additional conditions \cite{doi:10.1098/rspa.1981.0046} which have been shown by Santos and Batista \cite{Batista:2020cto} to be equivalent to requiring that $\mathbf Y$ is a covariantly constant Killing-Yano tensor on the background spacetime, which rules out the Kerr background. However, no systematic analysis has been performed yet on deforming the invariant $\mathcal Q_{\mathbf Y}$  with quadratic corrections in order to attempt to construct quasi-invariants at quadratic or higher order while not further restricting the background. 
         
         In this work, we will simply consider the quasi-invariant $ {\mathcal{Q}}_{\mathbf Y}$ as one non-trivial quasi-conserved quantity relevant for the description of EMRIs.


        \item Similarly, restricting to quadratic order, Rüdiger showed that the quantity
        \begin{align}
            \mathcal Q_R=-L_\alpha L^\alpha-2\mu S^\alpha\partial_\alpha\mathcal Z-2\mu^{-1} L_{\alpha}S^\alpha \xi_\beta p^\beta\label{rudiger}
        \end{align}
        where $L_\alpha\triangleq Y_{\alpha\lambda} p^\lambda$, $\mathcal Z\triangleq\frac{1}{4}Y^*_{\alpha\beta} Y^{\alpha\beta}$ and $\xi^\alpha\triangleq-\frac{1}{3}\nabla_\lambda Y^{*\lambda\alpha}$ is also a quasi-invariant, \textit{i.e.} $\dot{\mathcal Q}_R=\mathcal O(\mathcal S^2)$, if $\mathbf Y$ is a Killing-Yano tensor.  
    \end{itemize}
Both non-trivial quasi-invariants appeared in the independent Hamilton-Jacobi method treatment of the equations of motion by Witzany \cite{witzany2019spinperturbed}, which illustrates the relevance of these quasi-invariants for the description of motion.

There is however not yet a proof that Rüdiger's quadratic invariant $\mathcal Q_R$ is the unique solution to the corresponding set of constraints in the presence of a Killing-Yano tensor. In what follows, we aim to (i) work out and simplify as much as possible Rüdiger's constraint equations; (ii) characterize the algebraic properties to be solved for obtaining the general solution and (iii) discuss the existence of additional quasi-invariants for the linearized MPT equations in a Kerr background.

\subsubsection{General set of constraints for a quadratic invariant}

Our principal motivation being the study of the \textit{linearized} MPT equations, let us consider a quadratic invariant that is at most linear in $\mathbf S$:
\begin{equation}
    \mathcal Q\triangleq K_{\mu\nu}p^\mu p^\nu+L_{\mu\nu\rho}S^{\mu\nu}p^\rho.\label{quadratic_invariant}
\end{equation}
The tensors $\mathbf K$ and $\mathbf L$, which are by definition independent of $\mathbf p$ and $\mathbf S$, satisfy the algebraic symmetries $K_{\mu\nu}=K_{(\mu\nu)}$ and $L_{\mu\nu\rho}=L_{[\mu\nu]\rho}$.
\begin{widetext}
 Using the MPT equations \eqref{MPT1}-\eqref{MPT2} and the expression \eqref{velocity} for the 4-velocity, the conservation equation can be written
\begin{align}
    \dot{\mathcal Q}
    &=\Xi\qty(p^\lambda+\frac{1}{\mu}\sRs\tud{\lambda}{\kappa\theta\sigma}S^\kappa S^\theta \hat p^\sigma)\nonumber\\
    &\quad\times\qty[\qty(\nabla_\lambda K_{\mu\nu}-2L_{\lambda\mu\nu})p^\mu p^\nu + \qty(\nabla_\lambda L_{\alpha\beta\mu}-K_{\mu\rho}R\tud{\rho}{\lambda\alpha\beta})S^{\alpha\beta}p^\mu-\frac{1}{2}L_{\alpha\beta\rho}R\tud{\rho}{\lambda\gamma\delta}S^{\alpha\beta}S^{\gamma\delta}]\overset{!}{=}0,
\end{align}
with $\dot{\mathcal Q}\triangleq\dv{\tau}Q$ and where the coefficient $\Xi\triangleq\frac{\mathfrak m^2}{\mu^2\qty(1-d/2)}$ is non-vanishing. Let us introduce the tensors
\begin{align}
    U_{\alpha\beta\gamma}&\triangleq\nabla_\gamma K_{\alpha\beta}-2L_{\gamma(\alpha\beta)},\\
    V_{\alpha\beta\gamma\delta}&\triangleq\nabla_\delta L_{\alpha\beta\gamma}-K_{\lambda\gamma}R\tud{\lambda}{\delta\alpha\beta}+\frac{2}{3}K_{\lambda\rho}{R\tud{\lambda}{\delta[\alpha}}^\rho g_{\beta]\gamma},\\
    W_{\alpha\beta\gamma\delta\epsilon}&\triangleq-\frac{1}{2}L_{\alpha\beta\lambda}R\tud{\lambda}{\gamma\delta\epsilon}.
\end{align}
The tensors $\mathbf U$, $\mathbf V$ and $\mathbf W$ obey the algebraic symmetries $U_{\alpha\beta\gamma}=U_{(\alpha\beta)\gamma}$, $V_{\alpha\beta\gamma\delta}=V_{[\alpha\beta]\gamma\delta}$, $W_{\alpha\beta\gamma\delta\epsilon}=W_{[\alpha\beta]\gamma\delta\epsilon}$. Notice that the orthogonality conditions $S^{\alpha\beta}p_\beta=0$ imply
\begin{align}
    \frac{2}{3}K_{\lambda\rho}{R\tud{\lambda}{\delta[\alpha}}^\rho g_{\beta]\gamma}S^{\alpha\beta}p^\gamma=0.
\end{align}
The conservation equation reads as
\begin{align}
    \dot{\mathcal Q}={\Xi}\qty(p^\lambda+\frac{1}{\mu}\sRs\tud{\lambda}{\kappa\theta\sigma}S^\kappa S^\theta \hat p^\sigma)\qty[U_{\mu\nu\lambda}p^\mu p^\nu+V_{\alpha\beta\mu\lambda}S^{\alpha\beta}p^\mu+W_{\alpha\beta\lambda\gamma\delta}S^{\alpha\beta}S^{\gamma\delta}]\overset{!}{=}0.
\end{align}

We will go through a number of steps in order to express this condition in terms of the independent variables $s_\alpha$ and $\hat p^\mu$.
First, let us expand all terms and express the spin-related quantities in terms of the independent variables $s^\alpha$. For this purpose, we will make use of the identities
\begin{align}
    p^{[\alpha}S^{\beta]}&=p^{[\alpha}s^{\beta]},\qquad S^{\alpha\beta}=2S^{[\alpha}\hat p^{\beta]*}=2s^{[\alpha}\hat p^{\beta]*},\qquad S^\alpha=\Pi^\alpha_\beta s^\beta.\label{id}
\end{align}
The conservation equation becomes
\begin{align}
    \dot{\mathcal Q}&={\frac{\Xi}{\mu}}\Bigg[\mu^4 U_{\mu\nu\rho}\hat p^\mu \hat p^\nu \hat p^\rho +2 \mu^3\sV_{\alpha\mu\nu\rho}s^\alpha \hat p^\mu \hat p^\nu \hat p^\rho
    +\mu^2\qty(4\sWs_{\alpha\mu\nu\beta\rho}+\sRs\tud{\lambda}{\kappa\alpha\rho}U_{\mu\nu\lambda}\Pi^\kappa_\beta) s^\alpha s^\beta \hat p^\mu \hat p^\nu \hat p^\rho\nonumber\\ 
    &\quad+ 2\mu \sRs\tud{\lambda}{\kappa\alpha\rho}\sV_{\gamma\mu\nu\lambda}\Pi^\kappa_\beta s^\alpha s^\beta s^\gamma \hat p^\mu \hat p^\nu \hat p^\rho 
    +4 \sRs\tud{\lambda}{\kappa\alpha\rho} \sWs_{\delta\mu\lambda\gamma\nu}\Pi^\kappa_\beta s^\alpha s^\beta s^\gamma s^\delta \hat p^\mu \hat p^\nu \hat p^\rho\Bigg]\overset{!}{=}0.
\end{align}
Second, we will remove the projectors. One has the identity
\begin{align}
    \sRs\tud{\lambda}{\kappa\alpha\rho}\Pi_\beta^\kappa s^\alpha s^\beta=- I\tud{\lambda\alpha\beta}{\rho\sigma\kappa}s_\alpha s_\beta \hat p^\sigma \hat p^\kappa
\end{align}
where we have defined
\begin{align}
    I\tud{\lambda\alpha\beta}{\rho\sigma\kappa}\triangleq {\sRs\tud{\lambda}{\kappa\rho}}^\alpha\delta_\sigma^\beta+\sRs\tud{\lambda\alpha\beta}{\rho}g_{\sigma\kappa}.
\end{align}
The proof is easily carried out, using the fact that $\hat p_\mu \hat p^\mu=-1$: 
\begin{align}
     \sRs\tud{\lambda}{\kappa\alpha\rho}\Pi_\beta^\kappa s^\alpha s^\beta&= \sRs\tud{\lambda}{\kappa\alpha\rho}\qty(\delta^\kappa_\beta+\hat p^\kappa \hat p_\beta) s^\alpha s^\beta\\
     &=\qty[\qty(-\hat p^\sigma \hat p^\kappa g_{\sigma\kappa}) \sRs\tud{\lambda}{\beta\alpha\rho}+\sRs\tud{\lambda}{\kappa\alpha\rho}\hat p^\kappa \delta_\beta^\sigma\hat p_\sigma]s^\alpha s^\beta\\
     &=-\qty(\sRs{\tud{\lambda}{\kappa\rho}}^\alpha\delta^\beta_\sigma+\sRs\tud{\lambda\alpha\beta}{\rho}g_{\sigma\kappa})s_\alpha s_\beta \hat p^\sigma \hat p^\kappa.
\end{align}
Using this identity, the conservation equation finally reads as
\begin{align}
   \dot{\mathcal Q}&={\frac{\Xi}{\mu}}\Bigg[\mu^4 U_{\mu\nu\rho}\hat p^\mu \hat p^\nu \hat p^\rho+2\mu^3 \sV\tud{\alpha}{\mu\nu\rho}s_\alpha\hat p^\mu \hat p^\nu \hat p^\rho-\mu^2\qty(I\tud{\lambda\alpha\beta}{\rho\sigma\kappa}U_{\mu\nu\lambda}+4\sWs\tudud{\alpha}{\mu\nu}{\beta}{\rho}\, g_{\sigma\kappa})s_\alpha s_\beta \hat p^\mu \hat p^\nu \hat p^\rho \hat p^\sigma \hat p^\kappa \nonumber\\
   &\quad-2\mu I\tud{\lambda\alpha\beta}{\rho\sigma\kappa}\sV\tud{\gamma}{\mu\nu\lambda}s_\alpha s_\beta s_\gamma \hat p^\mu \hat p^\nu \hat p^\rho \hat p^\sigma \hat p^\kappa 
   -4 I\tud{\lambda\alpha\beta}{\rho\sigma\kappa}\sWs\tudud{\gamma}{\mu\lambda}{\delta}{\nu}\,s_\alpha s_\beta s_\gamma s_\delta \hat p^\mu \hat p^\nu \hat p^\rho \hat p^\sigma \hat p^\kappa\Bigg] \overset{!}{=}0.\label{conservation_eqn}
\end{align}
\end{widetext}

In principle, finding a quantity which is \textit{exactly} conserved along the motion generated by the MPT equations requires condition \eqref{conservation_eqn} to be satisfied \textit{exactly.} Because the variables $\hat p^\mu$ and $s_\alpha$ are independent, this requirement is equivalent to the following set of five constraints, each of them arising at a different order in the spin parameter:

\begin{align}
    \mathcal O(\mathcal S^0):&~U_{(\mu\nu\rho)}=0,\label{C1}\\
    \mathcal O(\mathcal S^1):&~\sV\tud{\alpha}{(\mu\nu\rho)}=0,\label{C2}\\
    \mathcal O(\mathcal S^2):&~I\tud{\lambda(\alpha\beta)}{(\mu\nu\rho}U_{\sigma\kappa)\lambda}+4\sWs\tudud{(\alpha}{(\mu\nu}{\beta)}{\rho}\,g_{\sigma\kappa)}=0,\label{C3}\\
    \mathcal O(\mathcal S^3):&~I\tud{\lambda(\alpha\beta}{(\mu\nu\rho}\sV\tud{\gamma)}{\sigma\kappa)\lambda}=0,\label{C4}
\end{align}
\begin{align}
    \mathcal O(\mathcal S^4):&~I\tud{\lambda(\alpha\beta}{(\mu\nu\rho}\sWs\tudud{\gamma}{\sigma|\lambda|}{\delta)}{\kappa)}=0.\label{C5}
\end{align}
However, for physically relevant situations (\textit{e.g.} when the background spacetime is Kerr or Schwarschild), it will not be possible to fulfill all the constraints \eqref{C1} to \eqref{C5}. Nevertheless, we will be able to work out an explicit solution to the two first constraints \eqref{C1} and \eqref{C2} for any Ricci-flat spacetime admitting a hidden symmetry encoded under the form of a rank-two Killing-Yano tensor. In this case, the conservation equation \eqref{conservation_eqn} will \textit{not} be satisfied exactly, but takes the form
\begin{align}
    \dot{\mathcal Q}=\mathcal O(\mathcal S^2),
\end{align}
\textit{i.e.} $\mathcal Q$ is a quasi-invariant in the sense defined above.

Notice that the $\mathcal O(\mathcal S^0)$ constraint \eqref{C1} simply reduces to
\begin{equation}
    \nabla_{(\alpha}K_{\beta\gamma)}=0,
\end{equation}
\textit{i.e.} $\mathbf K$ must be a Killing tensor of the background spacetime.

The $\mathcal O(\mathcal S^1)$ constraint \eqref{C2} is more difficult to work out. In Section \ref{sec:C2}, we will proceed to a clever rewriting of this constraint, which will then be particularized to spacetimes admitting a Killing-Yano tensor in Section \ref{sec:KY}. Section \ref{sec:unicity} will aim to solve it generally. Finally, all these results will be particularized to a Kerr background in Section \ref{sec:Kerr}.

\section{Conservation equation at linear order in the spin}\label{sec:C2}
We will now proceed to the aforementioned rewriting of the constraint \eqref{C2} by introducing a new set of variables. The three first parts of this section are devoted to the derivation of preliminary results, that will be crucial for working out the main result.

\begin{widetext}
\subsection{Dual form of $\mathbf V$}
We want to compute the dual form of the tensor
\begin{align}
    V_{\alpha\beta\gamma\delta}&\triangleq\nabla_\delta L_{\alpha\beta\gamma}-K_{\lambda\gamma}R\tud{\lambda}{\delta\alpha\beta}+\frac{2}{3}K_{\lambda\rho}{R\tud{\lambda}{\delta[\alpha}}^\rho g_{\beta]\gamma}
\end{align}
with respect to its two first indices. One has
\begin{align}
    \sV_{\alpha\beta\gamma\delta}=\nabla_\delta \sL_{\alpha\beta\gamma}-K_{\lambda\gamma} {R^*}\tud{\lambda}{\delta\alpha\beta}+\frac{2}{3}K_{\lambda\rho}R\tudu{\lambda}{\delta[\alpha}{\rho}g_{\beta]*\gamma}.
\end{align}

The last term of this equality can be written as
\begin{align}
    \frac{2}{3}K_{\lambda\rho}R\tudu{\lambda}{\delta[\alpha}{\rho}g_{\beta]*\gamma}&=\frac{1}{3}K^{\lambda\rho}\epsilon\tdu{\alpha\beta}{\mu\nu}R_{\lambda\delta\mu\rho}g_{\nu\gamma}
    =\frac{1}{3}K^{\lambda\rho}\epsilon\tdud{\alpha\beta}{\mu}{\gamma}R_{\lambda\delta\mu\rho}\\
    &=\frac{1}{3}K^{\lambda[\mu}\epsilon\tdud{\alpha\beta}{\nu]}{\gamma}R^{**}_{\lambda\delta\mu\nu}
    =\frac{1}{3}K^{\lambda[\mu}\epsilon\tdud{\alpha\beta}{\nu]*}{\gamma}R^*_{\lambda\delta\mu\nu}\\
    &=-\frac{1}{6}\epsilon^{\sigma\mu\nu\rho}\epsilon_{\sigma\alpha\beta\gamma}K_{\lambda\rho}R\tud{*\lambda}{\delta\mu\nu}\\
    &=\frac{1}{3}\qty(K_{\lambda\gamma}R\tud{*\lambda}{\delta\alpha\beta}+K_{\lambda\alpha}R\tud{*\lambda}{\delta\beta\gamma}+K_{\lambda\beta}R\tud{*\lambda}{\delta\gamma\alpha})\\
    &=\frac{1}{3}K_{\gamma\lambda}R\tud{*\lambda}{\delta\alpha\beta}+\frac{2}{3}R\tud{*\lambda}{\delta\gamma[\alpha}K_{\beta]\lambda}.
\end{align}
This finally yields
\begin{align}
    \boxed{\sV_{\alpha\beta\gamma\delta}=\nabla_\delta\sL_{\alpha\beta\gamma}-\frac{2}{3}K_{\lambda\gamma}R\tud{*\lambda}{\delta\alpha\beta}+\frac{2}{3}R\tud{*\lambda}{\delta\gamma[\alpha}K_{\beta]\lambda}.}
\end{align}
\end{widetext}

\subsection{Rüdiger variables}

Following Rüdiger \cite{doi:10.1098/rspa.1983.0012}, let us introduce
\begin{align}
    \tilde X_{\alpha\beta\gamma}\triangleq L_{\alpha\beta\gamma}-\frac{1}{3}\qty(\lambda_{\alpha\beta\gamma}+g_{\gamma[\alpha}\nabla_{\beta]}K),\label{tildeX}
\end{align}
where we have made use of the notations
\begin{align}
    \lambda_{\alpha\beta\gamma}\triangleq 2\nabla_{[\alpha}K_{\beta]\gamma},\qquad K\triangleq K^\alpha_{\;\;\alpha}.\label{def_lambda}
\end{align}
The irreducible parts $X_\alpha$ and $X_{\alpha\beta\gamma}$ of $\tilde X_{\alpha\beta\gamma}$ are defined through the relation
\begin{align}
    \tilde X_{\alpha\beta\gamma}\triangleq X_{\alpha\beta\gamma}+\epsilon_{\alpha\beta\gamma\delta}X^\delta,\qquad \text{ with }X_{[\alpha\beta\gamma]}\overset{!}{=}0.\label{irreducible_parts}
\end{align}
They provide an equivalent description, since Eq. \eqref{irreducible_parts} can be inverted as
\begin{align}
    X_{\alpha\beta\gamma}&=\tilde X_{\alpha\beta\gamma}-\tilde X_{[\alpha\beta\gamma]},\\
    X^\alpha&=\frac{1}{6}\epsilon^{\alpha\beta\gamma\delta}\tilde X_{\beta\gamma\delta}.
\end{align}

Finally, a simple computation shows that the dual of $\mathbf{\tilde X}$ is given by
\begin{align}
    {}^*\tilde X_{\alpha\beta\gamma}={}^*X_{\alpha\beta\gamma}-2g_{\gamma[\alpha}X_{\beta]}.\label{dual_tilde_X}
\end{align}

\begin{widetext}
\subsection{The structural equation}
This third preliminary part will be devoted to the proof of the \textit{structural equation} \cite{doi:10.1098/rspa.1983.0012}
\begin{align}
    &\boxed{\nabla_\delta\lambda_{\alpha\beta\gamma}=2\qty(R\tud{\lambda}{\delta\alpha\beta}K_{\gamma\lambda}-R\tud{\lambda}{\delta\gamma[\alpha}K_{\beta]\lambda})+\mu_{\alpha\beta\gamma\delta}}\label{structural}
\end{align}
with
\begin{align}
    \boxed{\mu_{\alpha\beta\gamma\delta}\triangleq\frac{1}{2}\qty[K_{\beta\gamma;(\alpha\delta)}+K_{\alpha\delta;(\beta\gamma)}-K_{\alpha\gamma;(\beta\delta)}-K_{\beta\delta;(\alpha\gamma)}-3\qty(K_{\lambda[\alpha}R\tud{\lambda}{\beta]\gamma\delta}+K_{\lambda[\gamma}R\tud{\lambda}{\delta]\alpha\beta})].}\label{mu}
\end{align}
$\boldsymbol\mu$ possesses the same algebraic symmetries than the Riemann tensor.
We remind the reader that $\lambda_{\alpha\beta\gamma}\triangleq 2\nabla_{[\alpha}K_{\beta]\gamma}$. We will use indifferently the notations $\nabla_\alpha \mathbf T$ or $\mathbf T_{;\alpha}$ for the covariant derivative of the tensor $\mathbf T$.
The proof goes as a lengthy rewriting of the original expression:
\begin{align}
    \nabla_\delta\lambda_{\alpha\beta\gamma}&=\nabla_\delta\nabla_\alpha K_{\beta\gamma}-\nabla_\delta\nabla_\beta K_{\alpha\gamma}\\
    &=\nabla_{(\delta}\nabla_{\alpha)} K_{\beta\gamma}+\nabla_{[\delta}\nabla_{\alpha]} K_{\beta\gamma}-\nabla_{(\delta}\nabla_{\beta)} K_{\alpha\gamma}-\nabla_{[\delta}\nabla_{\beta]} K_{\alpha\gamma}\\
    &=\frac{1}{2}\nabla_{(\alpha}\nabla_{\delta)}K_{\beta\gamma}-\frac{1}{2}\nabla_{(\beta}\nabla_{\delta)}K_{\alpha\gamma}+\frac{1}{2}\qty(\nabla_{(\alpha}\nabla_{\delta)}K_{\beta\gamma}-\nabla_{(\beta}\nabla_{\delta)}K_{\alpha\gamma})+\frac{1}{2}\comm{\nabla_\delta}{\nabla_\alpha}K_{\beta\gamma}-\frac{1}{2}\comm{\nabla_\delta}{\nabla_\beta}K_{\alpha\gamma}.
\end{align}
We proceed to the following rewriting of twice the quantity in brackets contained in the above expression:
\begin{align}
    &\nabla_\alpha\nabla_\delta K_{\beta\gamma}+\nabla_\delta\nabla_\alpha K_{\beta\gamma}-\nabla_\beta\nabla_\delta K_{\alpha\gamma}-\nabla_\delta\nabla_\beta K_{\alpha\gamma}\\
    &=2\nabla_\alpha\nabla_\delta K_{\beta\gamma}-2\nabla_\beta\nabla_\delta K_{\alpha\gamma}+\comm{\nabla_\delta}{\nabla_\alpha}K_{\beta\gamma}-\comm{\nabla_\delta}{\nabla_\beta}K_{\alpha\gamma}\\
    &=2\qty(\nabla_\beta\nabla_\alpha K_{\gamma\delta}+\nabla_\beta\nabla_\gamma K_{\alpha\delta}-\nabla_\alpha\nabla_\beta K_{\gamma\delta}-\nabla_\alpha\nabla_\gamma K_{\beta\delta})+\comm{\nabla_\delta}{\nabla_\alpha}K_{\beta\gamma}-\comm{\nabla_\delta}{\nabla_\beta}K_{\alpha\gamma}\\
    &=2\qty(\nabla_{(\beta}\nabla_{\gamma)}K_{\alpha\delta}-\nabla_{(\alpha}\nabla_{\gamma)}K_{\beta\delta})+\comm{\nabla_\delta}{\nabla_\alpha}K_{\beta\gamma}-\comm{\nabla_\delta}{\nabla_\beta}K_{\alpha\gamma}\nonumber\\
    &\quad-2\comm{\nabla_\alpha}{\nabla_\beta}K_{\gamma\delta}-\comm{\nabla_\alpha}{\nabla_\gamma}K_{\beta\delta}+\comm{\nabla_\beta}{\nabla_\gamma}K_{\alpha\delta}.
\end{align}
This yields
\begin{align}
    \nabla_\delta\lambda_{\alpha\beta\gamma}&=\frac{1}{2}\qty(K_{\beta\gamma;(\alpha\delta)}+K_{\alpha\delta;(\beta\gamma)}-K_{\alpha\gamma;(\beta\delta)}-K_{\beta\delta;(\alpha\gamma)})\nonumber\\
    &\quad+\underbrace{\frac{3}{4}\comm{\nabla_\delta}{\nabla_\alpha}K_{\beta\gamma}-\frac{3}{4}\comm{\nabla_\delta}{\nabla_\beta}K_{\alpha\gamma}-\frac{1}{2}\comm{\nabla_\alpha}{\nabla_\beta}K_{\gamma\delta}-\frac{1}{4}\comm{\nabla_\alpha}{\nabla_\gamma}K_{\beta\delta}+\frac{1}{4}\comm{\nabla_\beta}{\nabla_\gamma}K_{\alpha\delta}}_{\triangleq\heart}\label{structural_int}.
\end{align}
The quantity $\heart$ can be rearranged in the following way:
\begin{align}
    4\heart&=3\comm{\nabla_\delta}{\nabla_\alpha}K_{\beta\gamma}-3\comm{\nabla_\delta}{\nabla_\beta}K_{\alpha\gamma}-2\comm{\nabla_\alpha}{\nabla_\beta}K_{\gamma\delta}-\comm{\nabla_\alpha}{\nabla_\gamma}K_{\beta\delta}+\comm{\nabla_\beta}{\nabla_\gamma}K_{\alpha\delta}\\
    &=\qty(3 R\tud{\lambda}{\gamma\delta\beta}-R\tud{\lambda}{\delta\beta\gamma})K_{\alpha\lambda}+\qty(R\tud{\lambda}{\delta\alpha\gamma}-3R\tud{\lambda}{\gamma\delta\alpha})K_{\beta\lambda}\nonumber\\
    &\quad+\qty(3R\tud{\lambda}{\alpha\delta\beta}-3R\tud{\lambda}{\beta\delta\alpha}+2R\tud{\lambda}{\delta\alpha\beta})K_{\gamma\lambda}+\qty(2R\tud{\lambda}{\gamma\alpha\beta}+R\tud{\lambda}{\beta\alpha\gamma}-R\tud{\lambda}{\alpha\beta\gamma})K_{\delta\lambda}\\
    &=5R\tud{\lambda}{\delta\alpha\beta}K_{\gamma\lambda}+4\qty(R\tud{\lambda}{\delta\alpha\gamma}K_{\beta\lambda}-R\tud{\lambda}{\delta\beta\gamma}K_{\alpha\lambda})+3\qty(R\tud{\lambda}{\alpha\gamma\delta}K_{\beta\lambda}-R\tud{\lambda}{\beta\gamma\delta}K_{\alpha\lambda}+R\tud{\lambda}{\gamma\alpha\beta}K_{\delta\lambda})\\
    &=5R\tud{\lambda}{\delta\alpha\beta}K_{\gamma\lambda}+8R\tud{\lambda}{\delta[\alpha|\gamma}K_{|\beta]\lambda}-6K_{\lambda[\alpha}R\tud{\lambda}{\beta]\gamma\delta}+3 R\tud{\lambda}{\gamma\alpha\beta}K_{\delta\lambda}\underbrace{-3R\tud{\lambda}{\delta\alpha\beta}K_{\gamma\lambda}+3R\tud{\lambda}{\delta\alpha\beta}K_{\gamma\lambda}}_{=0}\\
    &=8R\tud{\lambda}{\delta\alpha\beta}K_{\gamma\lambda}-8 R\tud{\lambda}{\delta\gamma[\alpha}K_{\beta]\lambda}-6K_{\lambda[\alpha}R\tud{\lambda}{\beta]\gamma\delta}-6 K_{\lambda[\gamma}R\tud{\lambda}{\delta]\alpha\beta}.
\end{align}
Consequently,
\begin{align}
    \heart=2\qty(R\tud{\lambda}{\delta\alpha\beta}K_{\gamma\lambda}- R\tud{\lambda}{\delta\gamma[\alpha}K_{\beta]\lambda})-\frac{3}{2}\qty(K_{\lambda[\alpha}R\tud{\lambda}{\beta]\gamma\delta}+ K_{\lambda[\gamma}R\tud{\lambda}{\delta]\alpha\beta}).
\end{align}
Inserting this result into Eq. \eqref{structural_int} gives the structural equation \eqref{structural} and consequently concludes the proof.

We will conclude this section by working out the dual form of the structural equation \eqref{structural}. One has
\begin{align}
    \nabla_\delta{}^*\!\lambda_{\alpha\beta\gamma}&=2R\tud{*\lambda}{\delta\alpha\beta}K_{\gamma\lambda}-\epsilon\tdu{\alpha\beta}{\mu\nu}R\tud{\lambda}{\delta\gamma\mu}K_{\nu\lambda}+{}^*\!\mu_{\alpha\beta\gamma\delta}\\
    &=2R\tud{*\lambda}{\delta\alpha\beta}K_{\gamma\lambda}+\epsilon\tdu{\alpha\beta}{\mu\nu}R\tud{**\lambda}{\delta\gamma\mu}K_{\nu\lambda}+{}^*\!\mu_{\alpha\beta\gamma\delta}.
\end{align}
It is now easier to compute
\begin{align}
    \nabla_\delta{}^*\!\lambda\tdu{\alpha\beta}{\gamma}&=2R\tud{*\lambda}{\delta\alpha\beta}K\tud{\gamma}{\lambda}-\frac{1}{2}\epsilon_{\mu\alpha\beta\nu}\epsilon^{\mu\gamma\rho\sigma}R\tud{*\lambda}{\delta\rho\sigma}K\tud{\nu}{\lambda}+{}^*\!\mu\tdud{\alpha\beta}{\gamma}{\delta}\\
    &=2R\tud{*\lambda}{\delta\alpha\beta}K\tud{\gamma}{\lambda}+3\,\delta^{[\gamma}_\alpha\delta^\rho_\beta\delta_\nu^{\sigma]}R\tud{*\lambda}{\delta\rho\sigma}K\tud{\nu}{\lambda}+{}^*\!\mu\tdud{\alpha\beta}{\gamma}{\delta},
\end{align}
which yields
\begin{align}
    \nabla_\delta{}^*\!\lambda_{\alpha\beta\gamma}&=2R\tud{*\lambda}{\delta\alpha\beta}K_{\gamma\lambda}+3\,g_{\gamma[\alpha|}R\tud{*\lambda}{\delta|\beta\nu]}K\tud{\nu}{\lambda}+{}^*\!\mu_{\alpha\beta\gamma\delta}\\
    &=2R\tud{*\lambda}{\delta\alpha\beta}K_{\gamma\lambda}+\qty(R\tud{*\lambda}{\delta\alpha\beta}K_{\gamma\lambda}+R\tudu{*\lambda}{\delta\beta}{\rho}K_{\lambda\rho}g_{\alpha\gamma}-R\tudu{*\lambda}{\delta\alpha}{\rho}K_{\lambda\rho}g_{\beta\gamma}) +{}^*\!\mu_{\alpha\beta\gamma\delta}.
\end{align}
Rearranging the different terms leads to the final expression
\begin{align}
    \boxed{
    \nabla_\delta{}^*\!\lambda_{\alpha\beta\gamma}=3R\tud{*\lambda}{\delta\alpha\beta}K_{\gamma\lambda}-2R\tudu{*\lambda}{\delta[\alpha}{\rho}g_{\beta]\gamma}K_{\lambda\rho} +{}^*\!\mu_{\alpha\beta\gamma\delta}.
    }\label{covDlambda}
\end{align}

\subsection{Reduction of the second constraint}
We will now gather the results obtained in the three previous subsections to express the constraint \eqref{C2} in terms of the irreducible variables introduced above. Let us remind that Eq. \eqref{C2} reads as
\begin{align}
    \sV\tud{\alpha}{(\beta\gamma\delta)}=0.
\end{align}
Using Eqs. \eqref{tildeX}, \eqref{dual_tilde_X} and \eqref{covDlambda}, we can rewrite
\begin{align}
    \sV_{\alpha\beta\gamma\delta}&=\nabla_\delta{}^*\!\tilde X_{\alpha\beta\gamma}+\frac{1}{3}\nabla_\delta\qty({}^*\!\lambda_{\alpha\beta\gamma}+g_{\gamma[\alpha}\nabla_{\beta]*}K)-\frac{2}{3}K_{\lambda\gamma}R\tud{*\lambda}{\delta\alpha\beta}+\frac{2}{3}R\tud{*\lambda}{\delta\gamma[\alpha}K_{\beta]\lambda}\\
    &=\nabla_\delta\qty({}^*\!X_{\alpha\beta\gamma}-2g_{\gamma[\alpha}X_{\beta]})+\frac{1}{3}\underbrace{\qty(R\tud{*\lambda}{\delta\alpha\beta}K_{\gamma\lambda}+2R\tud{*\lambda}{\delta\gamma[\alpha}K_{\beta]\lambda})}_{\triangleq\heart}\nonumber\\
    &\quad+\frac{1}{3}\nabla_\delta\qty( g_{\gamma[\alpha}\nabla_{\beta]*}K)+\frac{1}{3}{}^*\!\mu_{\alpha\beta\gamma\delta}-\frac{2}{3}R\tudu{*\lambda}{\delta[\alpha}{\rho}g_{\beta]\gamma}K_{\lambda\rho}.
\end{align}
On the one hand, we have
\begin{align}
    \heart&=R\tud{*\lambda}{\delta\alpha\beta}K_{\gamma\lambda}+R\tud{*\lambda}{\delta\gamma\alpha}K_{\beta\lambda}-R\tud{*\lambda}{\delta\gamma\beta}K_{\alpha\lambda}
    =R\tud{*\lambda}{\delta\beta\gamma}K_{\alpha\lambda}+2R\tud{*\lambda}{\delta\alpha[\beta}K_{\gamma]\lambda}.
\end{align}
And on the other hand, we can write
\begin{align}
    \nabla_\delta\qty( g_{\gamma[\alpha}\nabla_{\beta]*}K)&=\frac{1}{2}\epsilon\tdu{\alpha\beta}{\mu\nu}g_{\gamma\mu}\nabla_\delta\nabla_\nu K
    =\frac{1}{2}\epsilon\tdu{\alpha\beta\gamma}{\mu}\nabla_\delta\nabla_\mu K.
\end{align}
Gathering these expressions leads to
{\small
\begin{align}
    \sV_{\alpha\beta\gamma\delta}&=\nabla_\delta{}^*\!X_{\alpha\beta\gamma}-2g_{\gamma[\alpha|}\nabla_\delta X_{|\beta]}-\frac{2}{3}R\tudu{*\lambda}{\delta[\alpha}{\rho}g_{\beta]\gamma}K_{\lambda\rho}+\frac{1}{3}R\tud{*\lambda}{\delta\beta\gamma}K_{\alpha\lambda}+\frac{2}{3}R\tud{*\lambda}{\delta\alpha[\beta}K_{\gamma]\lambda}+\frac{1}{6}\epsilon\tdu{\alpha\beta\gamma}{\mu}\nabla_\delta\nabla_\mu K+\frac{1}{3}{}^*\!\mu_{\alpha\beta\gamma\delta}.
\end{align}}
\noindent
When symmetrizing the three last indices, the four last terms of this expression vanish. The constraint \eqref{C2} takes the final form
\begin{align}
    \boxed{
    {}^*\!X_{\alpha(\beta\gamma;\delta)}+X_{\alpha;(\beta}g_{\gamma\delta)}-g_{\alpha(\beta}X_{\gamma;\delta)}+\frac{1}{3}\qty(g_{\alpha(\beta}R\tudu{*\lambda}{\gamma\delta)}{\rho}-R\tudu{*\lambda}{\alpha(\beta}{\rho}g_{\gamma\delta)})K_{\lambda\rho}=0.\label{reduced_C2}
    }
\end{align}
Our principal motivation being the motion of spinning particles in Kerr spacetime, we will now focus on spacetimes possessing a Killing-Yano (KY) tensor. It will turn out that the constraint \eqref{reduced_C2} can still be dramatically simplified in such a framework.

\end{widetext}

\section{Spacetimes admitting a Killing-Yano tensor}\label{sec:KY}
We now particularize our analysis to spacetimes equipped with a Killing-Yano (KY) tensor, \textit{i.e.} a rank-2, antisymmetric tensor $Y_{\mu\nu}=Y_{[\mu\nu]}$ obeying the Killing-Yano equation:
\begin{equation}
    \boxed{\nabla_{(\alpha}Y_{\beta)\gamma}=0.}\label{KYdef}
\end{equation}
In this case, the constraint \eqref{C1} is automatically fulfilled, because
\begin{equation}\label{defK}
    K_{\alpha\beta}\triangleq Y_{\alpha\lambda}Y\tud{\lambda}{\beta}
\end{equation}
is a Killing tensor.

The aim of this section is twofold. First, we will review some general properties of KY tensors useful for the continuation of our analysis. Even if most of them were previously mentioned in the literature \cite{doi:10.1098/rspa.1981.0046,doi:10.1098/rspa.1981.0056}, the goal of the present exposition is to provide a self-contained summary of these results and of their derivations. Second, we will work out an involved identity that will become a cornerstone for solving the constraint \eqref{reduced_C2}. This so-called \textit{central identity} is the generalization of a result mentioned by Rüdiger in \cite{doi:10.1098/rspa.1983.0012}. Rüdiger only quickly sketched the proof of his identity, while we aim here to provide a more pedagogical derivation of this central result.

\subsection{Some general properties of KY tensors}
Let us review some basic properties of KY tensors, sticking to our conventions and simplifying the notations used in the literature \cite{doi:10.1098/rspa.1981.0046,doi:10.1098/rspa.1981.0056}.

\subsubsection{An equivalent form of the KY equation}
The symmetries $\nabla_{(\alpha}Y_{\beta)\gamma}=\nabla_\alpha Y_{(\beta\gamma)}=0$ ensure the quantity $\nabla_\alpha Y_{\beta\gamma}$ to be totally antisymmetric in its three indices. Consequently, there exists a vector $\boldsymbol\xi$ such that
\begin{equation}
    \boxed{\nabla_\alpha Y_{\beta\gamma}=\epsilon_{\alpha\beta\gamma\lambda}\xi^\lambda.\label{KY_equiv}}
\end{equation}
The value of $\boldsymbol \xi$ can be found by contracting Eq. \eqref{KY_equiv} with $\epsilon^{\mu\alpha\beta\gamma}$ and making use of the contraction formula for the Levi-Civita tensor \eqref{contraction}. We obtain
\begin{equation}
    \boxed{\xi^\alpha=-\frac{1}{3}\nabla_\lambda Y^{*\lambda\alpha}.\label{xi}}
\end{equation}
Notice that Eq. \eqref{KY_equiv} with $\boldsymbol \xi$ given by Eq. \eqref{xi} is totally equivalent to the Killing-Yano equation.

\subsubsection{Dual KY equation}
Let us derive the equivalent to the Killing-Yano equation for the dual of the KY tensor $\mathbf Y^*$. One has
\begin{align}
    \nabla_\alpha Y^{*\beta\gamma}&=\frac{1}{2}\epsilon^{\beta\gamma\mu\nu}\nabla_\alpha Y_{\mu\nu}
    =\frac{1}{2}\epsilon^{\mu\nu\beta\gamma}\epsilon_{\mu\nu\alpha\lambda}\xi^\lambda\\
    &=-2\delta^{[\beta}_\alpha\delta^{\gamma]}_\lambda\xi^\lambda
    =-2\delta^{[\beta}_\alpha\xi^{\gamma]},
\end{align}
leading to
\begin{equation}
    \boxed{\nabla_\alpha Y^*_{\beta\gamma}=-2g_{\alpha[\beta}\xi_{\gamma]}.}\label{KY_bis}
\end{equation}
In turn, this leads to the conformal Killing-Yano equation for the dual tensor $\mathbf {Y}^*$.

\subsubsection{Integrability conditions for the KY equation}
We will work out some necessary conditions for the tensor $\mathbf Y$ to be a Killing-Yano tensor, \textit{i.e.} relations that must hold for $\mathbf{Y}$ to satisfy the KY equation. We will refer to them as \textit{integrability conditions} for the Killing-Yano equation.

We begin by proving some preliminary results:
\begin{lemma}\label{lemma:divless}
One has
\begin{align}
    \boxed{\nabla^\alpha\xi_\alpha=0.}
\end{align}
\end{lemma}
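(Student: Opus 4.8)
The plan is to prove $\nabla^\alpha\xi_\alpha=0$ directly from the characterization \eqref{xi} of $\boldsymbol\xi$ in terms of the divergence of the dual Killing-Yano tensor. Starting from $\xi^\alpha=-\frac{1}{3}\nabla_\lambda Y^{*\lambda\alpha}$, I would take a further divergence to get
\begin{align}
    \nabla_\alpha\xi^\alpha=-\frac{1}{3}\nabla_\alpha\nabla_\lambda Y^{*\lambda\alpha}.
\end{align}
The right-hand side is a contraction of a second covariant derivative of an antisymmetric tensor $Y^{*\lambda\alpha}=Y^{*[\lambda\alpha]}$ over the pair of indices $(\alpha,\lambda)$. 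The standard trick here is to split $\nabla_\alpha\nabla_\lambda$ into its symmetric and antisymmetric parts in $(\alpha,\lambda)$: the symmetric part contracts to zero against the antisymmetric $Y^{*\lambda\alpha}$, so only the commutator survives,
\begin{align}
    \nabla_\alpha\xi^\alpha=-\frac{1}{6}\comm{\nabla_\alpha}{\nabla_\lambda}Y^{*\lambda\alpha}.
\end{align}

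Next I would expand the commutator using the Ricci identity on each index of $Y^{*\lambda\alpha}$, which produces two Riemann-curvature terms of the schematic form $R\tud{\lambda}{\rho\alpha\lambda}Y^{*\rho\alpha}$ and $R\tud{\alpha}{\rho\alpha\lambda}Y^{*\lambda\rho}$. The first term involves a contraction of the Riemann tensor on its first and last indices in a pattern that, together with the first Bianchi identity (or simply the symmetry $R_{\lambda\rho\alpha\lambda}$ summed over $\lambda$), vanishes or collapses to a Ricci contraction; the second term is manifestly $-R_{\rho\lambda}Y^{*\lambda\rho}$, a contraction of the symmetric Ricci tensor against the antisymmetric $Y^{*\lambda\rho}$, hence zero. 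Keeping careful track of the index positions and the sign conventions fixed in the excerpt ($\comm{\nabla_\alpha}{\nabla_\beta}A_\mu=-R\tud{\lambda}{\mu\alpha\beta}A_\lambda$), every surviving curvature term is a Ricci-type contraction paired with an antisymmetric tensor and therefore drops, giving $\nabla^\alpha\xi_\alpha=0$.

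Alternatively — and this is probably the cleanest route, so I would present it as the main argument — I would use the already-established equation \eqref{KY_bis}, $\nabla_\alpha Y^*_{\beta\gamma}=-2g_{\alpha[\beta}\xi_{\gamma]}$. Tracing this over $\alpha$ and $\beta$ gives $\nabla^\beta Y^*_{\beta\gamma}=-2\cdot\frac{1}{2}(g\tud{\beta}{\beta}\xi_\gamma-\xi_\gamma)=-(4-1)\xi_\gamma=-3\xi_\gamma$, which is just a consistency check recovering \eqref{xi}. More usefully, one can contract \eqref{KY_bis} differently or differentiate it once more; but the quickest finish is to observe that \eqref{KY_bis} says $\nabla_\alpha Y^*_{\beta\gamma}$ is built purely from $g$ and $\boldsymbol\xi$, and then applying $\nabla^\alpha$ and using the antisymmetry of $Y^*$ in $(\beta,\gamma)$ together with a relabeling forces $\nabla^\alpha\xi_\alpha$ to appear on both sides with coefficients that only cancel if it vanishes. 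The main obstacle, such as it is, is purely bookkeeping: getting the curvature-index gymnastics and the sign conventions exactly right when expanding the commutator, and making sure one invokes the Ricci-flatness hypothesis only where genuinely needed — in fact this particular lemma should hold on any spacetime, since the curvature terms all collapse by symmetry alone, so I would flag that no vacuum assumption is required here.
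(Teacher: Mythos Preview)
Your main argument is correct and is precisely the paper's proof: take the divergence of \eqref{xi}, antisymmetrize to reduce $\nabla_\alpha\nabla_\lambda Y^{*\lambda\alpha}$ to the commutator, expand via the Ricci identity, and observe that the result is a Ricci--antisymmetric contraction $\tfrac{1}{3}R^{\alpha\beta}Y^*_{\alpha\beta}=0$, with no vacuum assumption needed. Your ``alternative'' route via \eqref{KY_bis} is too vague as stated to constitute a proof, but it is unnecessary since the first argument already matches the paper exactly.
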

\begin{proof}
We proceed by applying the Ricci identity to the expression:
\begin{align}
    \nabla^\alpha\xi_\alpha&=-\frac{1}{3}\nabla_\alpha\nabla_\lambda Y^{*\lambda\alpha}
    =-\frac{1}{6}\comm{\nabla_\alpha}{\nabla_\lambda}Y^{*\lambda\alpha}\\
    &=\frac{1}{6}g^{\lambda\mu}g^{\alpha\nu}\qty(R\tud{\rho}{\mu\alpha\lambda}Y^*_{\rho\nu}+R\tud{\rho}{\nu\alpha\lambda}Y^*_{\mu\rho})\\
    &=\frac{1}{3}R^{\alpha\beta}Y^*_{\alpha\beta}
    =0.
\end{align}
\end{proof}

\begin{lemma}\label{lemma:sym}
For any antisymmetric tensor $A_{\alpha\beta}=A_{[\alpha\beta]}$, one has
\begin{align}
    \boxed{R\tudu{\alpha}{\mu\nu}{\beta}A^{\mu\nu}=R\tudu{[\alpha}{\mu\nu}{\beta]}A^{\mu\nu}.}
\end{align}
\end{lemma}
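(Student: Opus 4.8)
The statement is equivalent to saying that the part of $R\tudu{\alpha}{\mu\nu}{\beta}$ which is symmetric in the outer pair $(\alpha\beta)$ drops out upon contraction with an antisymmetric tensor, and I would prove it directly from the algebraic (first) Bianchi identity. The plan is as follows. First, since $A^{\mu\nu}=A^{[\mu\nu]}$, only the piece of $R\tudu{\alpha}{\mu\nu}{\beta}$ antisymmetric in the middle pair $\mu\nu$ survives the contraction, so it is enough to examine the behaviour of $R\tudu{\alpha}{\mu\nu}{\beta}A^{\mu\nu}$ under $\alpha\leftrightarrow\beta$.

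Next, I would write the first Bianchi identity in its cyclic form on the last three indices, $R_{\alpha\mu\nu\beta}+R_{\alpha\nu\beta\mu}+R_{\alpha\beta\mu\nu}=0$, and contract it with $A^{\mu\nu}$. Using the antisymmetry of the Riemann tensor in its last pair together with the antisymmetry of $\mathbf A$, the middle term $R_{\alpha\nu\beta\mu}A^{\mu\nu}$ can be brought, after relabelling the dummy indices $\mu\leftrightarrow\nu$, to exactly $R_{\alpha\mu\nu\beta}A^{\mu\nu}$. The contracted identity therefore collapses to $2\,R_{\alpha\mu\nu\beta}A^{\mu\nu}=-\,R_{\alpha\beta\mu\nu}A^{\mu\nu}$.

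The conclusion is then immediate: the right-hand side is manifestly antisymmetric under $\alpha\leftrightarrow\beta$, because $R_{\alpha\beta\mu\nu}=-R_{\beta\alpha\mu\nu}$; hence $R_{\alpha\mu\nu\beta}A^{\mu\nu}$ is antisymmetric in $(\alpha\beta)$, that is $R\tudu{\alpha}{\mu\nu}{\beta}A^{\mu\nu}=R\tudu{[\alpha}{\mu\nu}{\beta]}A^{\mu\nu}$ after raising $\alpha,\beta$ with the metric. I do not expect any real obstacle here; the only point requiring a little care is the index bookkeeping that reduces the two "pair-crossing" terms of the Bianchi identity to a single term, and making sure the relabelling of dummy indices is performed consistently.
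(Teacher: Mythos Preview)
Your proof is correct but follows a slightly different route from the paper's. The paper argues directly from the pair-exchange symmetry $R_{\alpha\beta\gamma\delta}=R_{\gamma\delta\alpha\beta}$ (together with antisymmetry in each pair): one writes $R\tudu{\beta}{\mu\nu}{\alpha}A^{\mu\nu}=R\tdud{\nu}{\alpha\beta}{\mu}A^{\mu\nu}=R\tudu{\alpha}{\mu\nu}{\beta}A^{\nu\mu}=-R\tudu{\alpha}{\mu\nu}{\beta}A^{\mu\nu}$, a two-step index reshuffle that never invokes the algebraic Bianchi identity. You instead contract the cyclic Bianchi identity with $A^{\mu\nu}$ and obtain the relation $R_{\alpha\mu\nu\beta}A^{\mu\nu}=-\tfrac{1}{2}R_{\alpha\beta\mu\nu}A^{\mu\nu}$, from which the antisymmetry in $(\alpha\beta)$ is read off. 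That intermediate relation is precisely the content of the paper's \emph{next} lemma (Lemma~\ref{lemma:indices_R}), which the authors establish after the present one and, in fact, by appealing to it. Your route therefore derives both lemmas in one stroke and in the reverse logical order; it is equally valid and arguably more economical, while the paper's argument has the virtue of being a one-liner that uses only the block-exchange symmetry of the Riemann tensor.
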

\begin{proof}
The proof is straightforward:
\begin{align}
    R\tudu{\beta}{\mu\nu}{\alpha}A^{\mu\nu}&=R\tdud{\nu}{\alpha\beta}{\mu}A^{\mu\nu}=R\tudu{\alpha}{\mu\nu}{\beta}A^{\nu\mu}\\
    &=-R\tudu{\alpha}{\mu\nu}{\beta}A^{\mu\nu}.
\end{align}
\end{proof}
\begin{lemma}\label{lemma:indices_R}
For any antisymmetric tensor $A_{\alpha\beta}=A_{[\alpha\beta]}$, we have
\begin{align}
    \boxed{A_{\mu\nu}R\tudu{\mu}{\alpha\beta}{\nu}=-\frac{1}{2}A^{\mu\nu}R_{\alpha\beta\mu\nu}.}
\end{align}
\end{lemma}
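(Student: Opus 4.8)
The plan is to reduce the identity to the first (algebraic) Bianchi identity together with the pair–exchange symmetry of the Riemann tensor and the antisymmetry of $\mathbf A$. First I would put the left-hand side in fully lowered form, $A_{\mu\nu}R\tudu{\mu}{\alpha\beta}{\nu}=A^{\mu\nu}R_{\mu\alpha\beta\nu}$, so that the statement to prove becomes $A^{\mu\nu}R_{\mu\alpha\beta\nu}=-\frac{1}{2}A^{\mu\nu}R_{\alpha\beta\mu\nu}$.

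Next I would contract the first Bianchi identity $R_{\mu\alpha\beta\nu}+R_{\mu\beta\nu\alpha}+R_{\mu\nu\alpha\beta}=0$, cyclic in the last three indices, with $A^{\mu\nu}$. The third term is immediately $A^{\mu\nu}R_{\mu\nu\alpha\beta}=A^{\mu\nu}R_{\alpha\beta\mu\nu}$ by the pair exchange symmetry. For the middle term I would use pair exchange to write $R_{\mu\beta\nu\alpha}=R_{\nu\alpha\mu\beta}$, relabel the dummy indices $\mu\leftrightarrow\nu$, use $A^{\nu\mu}=-A^{\mu\nu}$ and the antisymmetry in the last pair $R_{\mu\alpha\nu\beta}=-R_{\mu\alpha\beta\nu}$; the net effect is $A^{\mu\nu}R_{\mu\beta\nu\alpha}=A^{\mu\nu}R_{\mu\alpha\beta\nu}$, so the middle term coincides with the first. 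The contracted Bianchi identity then reads $2\,A^{\mu\nu}R_{\mu\alpha\beta\nu}+A^{\mu\nu}R_{\alpha\beta\mu\nu}=0$, which is exactly the claimed relation.

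I do not expect any genuine obstacle here: the whole content is the first Bianchi identity, and the only care needed is the bookkeeping of index relabellings and of the sign flips produced by combining the antisymmetry of $\mathbf A$ with the symmetries of the Riemann tensor. An equivalent route would be to first observe $A^{\mu\nu}R_{\mu\alpha\beta\nu}=A^{\mu\nu}R_{[\mu|\alpha\beta|\nu]}$ because $\mathbf A$ is antisymmetric, and then invoke the Bianchi identity on the antisymmetrized index pair; the direct contraction above is, however, the most transparent.
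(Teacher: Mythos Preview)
Your proof is correct and follows essentially the same route as the paper: contract the first Bianchi identity with $A^{\mu\nu}$, use the pair-exchange and antisymmetry properties of the Riemann tensor together with the antisymmetry of $\mathbf A$, and conclude $2A^{\mu\nu}R_{\mu\alpha\beta\nu}+A^{\mu\nu}R_{\alpha\beta\mu\nu}=0$. The only cosmetic difference is that the paper phrases the step ``middle term equals first term'' by invoking its preceding Lemma~\ref{lemma:sym}, whereas you perform the equivalent relabeling and sign-tracking explicitly.
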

\begin{proof}
The proof is again pretty simple:
\begin{align}
    A_{\mu\nu}R\tudu{\mu}{\alpha\beta}{\nu}&=A^{\mu\nu}\qty(R_{\alpha\beta\nu\mu}+R_{\alpha\nu\mu\beta})\\
    &=-A^{\mu\nu}R_{\alpha\beta\mu\nu}+A^{\mu\nu}R_{\nu\alpha\beta\mu}.
\end{align}
The conclusion is reached using Lemma \ref{lemma:sym}.
\end{proof}

We can now work out the first integrability condition. One has
\begin{align}
    \nabla_\alpha\xi_\beta&=-\frac{1}{3}\nabla_\alpha\nabla_\lambda Y\tud{*\lambda}{\beta}\\
    &=-\frac{1}{3}\nabla^\lambda\nabla_\alpha Y^*_{\lambda\beta}-\frac{1}{3}\comm{\nabla_\alpha}{\nabla_\lambda}Y\tud{*\lambda}{\beta}\\
    &=\frac{2}{3}\nabla^\lambda\qty(g_{\alpha[\lambda}\xi_{\beta]})-\frac{1}{3}g^{\lambda\rho}\comm{\nabla_\alpha}{\nabla_\lambda}Y^*_{\rho\beta}\\
    &=\frac{1}{3}\nabla_\alpha\xi_\beta-\frac{1}{3}g_{\alpha\beta}\nabla^\lambda\xi_\lambda\nonumber\\
    &\quad+\frac{1}{3}\qty(R\tud{\lambda}{\alpha}Y^*_{\lambda\beta}+R\tudu{\lambda}{\beta\alpha}{\rho}Y^*_{\rho\lambda}).
\end{align}
This gives rise to the Killing-Yano equation \textit{first integrability condition}:
\begin{align}
    \boxed{\nabla_\alpha\xi_\beta=\frac{1}{2}\qty(R\tud{\lambda}{\alpha}Y^*_{\lambda\beta}+R\tudu{\lambda}{\alpha\beta}{\rho}Y^*_{\lambda\rho})\label{int_1bis}}
\end{align}
or, equivalently, using Lemma \ref{lemma:indices_R}:
\begin{align}
    \boxed{\nabla_\alpha\xi_\beta=\frac{1}{2} R\tud{\lambda}{\alpha}Y^*_{\lambda\beta}-\frac{1}{4}R_{\alpha\beta\mu\nu}Y^{*\mu\nu}.\label{int_1}}
\end{align}
Symmetrizing this equation gives the reduced form
\begin{align}
    \boxed{\nabla_{(\alpha}\xi_{\beta)}=\frac{1}{2}Y^*_{\lambda(\alpha}R\tud{\lambda}{\beta)}.}\label{int_1_reduced}
\end{align}
In particular, for Ricci-flat spacetimes, $\xi^\mu$ is a Killing vector. 

A second integrability condition can be written as follows.
Taking the derivative of the defining equation $\nabla_{(\alpha}Y_{\beta)\gamma}=0$ yields
\begin{align}
    \nabla_\alpha\nabla_\beta Y_{\gamma\delta}+\nabla_\alpha\nabla_\gamma Y_{\beta\delta}=0.
\end{align}
Out of this equation, we can write the three (equivalent) identities
\begin{align}
    \nabla_\alpha\nabla_\beta Y_{\gamma\delta}+\nabla_\gamma\nabla_\alpha Y_{\beta\delta}+\comm{\nabla_\alpha}{\nabla_\gamma}Y_{\beta\delta}&=0,\\
    \nabla_\gamma\nabla_\alpha Y_{\beta\delta}+\nabla_\beta\nabla_\gamma Y_{\alpha\delta}+\comm{\nabla_\gamma}{\nabla_\beta}Y_{\alpha\delta}&=0,\\
    \nabla_\beta\nabla_\gamma Y_{\alpha\delta}+\nabla_\alpha\nabla_\beta Y_{\gamma\delta}+\comm{\nabla_\beta}{\nabla_\alpha}Y_{\gamma\delta}&=0.
\end{align}
\begin{widetext}
Summing the first and the third and subtracting the second leads to
\begin{align}
    2\nabla_\alpha\nabla_\beta Y_{\gamma\delta}&=\comm{\nabla_\gamma}{\nabla_\alpha}Y_{\beta\delta}+\comm{\nabla_\gamma}{\nabla_\beta}Y_{\alpha\delta}+\comm{\nabla_\alpha}{\nabla_\beta}Y_{\gamma\delta}\\
    &=R\tud{\lambda}{\delta\beta\gamma}Y_{\alpha\lambda}+R\tud{\lambda}{\delta\alpha\gamma}Y_{\beta\lambda}+R\tud{\lambda}{\delta\beta\alpha}Y_{\gamma\lambda}-\qty(R\tud{\lambda}{\beta\gamma\alpha}+R\tud{\lambda}{\alpha\gamma\beta}+R\tud{\lambda}{\gamma\alpha\beta})Y_{\lambda\delta}\\
    &=2R\tud{\lambda}{\alpha\beta\gamma}Y_{\lambda\delta}+R\tud{\lambda}{\delta\beta\gamma}Y_{\alpha\lambda}+R\tud{\lambda}{\delta\alpha\gamma}Y_{\beta\lambda}+R\tud{\lambda}{\delta\beta\alpha}Y_{\gamma\lambda}.
\end{align}
This gives rise to the \textit{second integrability condition}
\begin{align}
    \boxed{\nabla_\alpha\nabla_\beta Y_{\gamma\delta}=R\tud{\lambda}{\alpha\beta\gamma}Y_{\lambda\delta}+\frac{1}{2}\qty(R\tud{\lambda}{\delta\beta\gamma}Y_{\alpha\lambda}+R\tud{\lambda}{\delta\alpha\gamma}Y_{\beta\lambda}+R\tud{\lambda}{\delta\beta\alpha}Y_{\gamma\lambda}).}\label{int_2}
\end{align}
A reduced form can be obtained by contracting the equation above with $g^{\gamma\delta}$. We obtain
\begin{align}
    0&=R\tudu{\lambda}{\alpha\beta}{\rho}Y_{\lambda\rho}+\frac{1}{2}\qty(R\tud{\lambda}{\beta}Y_{\alpha\lambda}+R\tud{\lambda}{\alpha}Y_{\beta\lambda}+R\tud{\lambda\rho}{\beta\alpha}Y_{\rho\lambda}).
\end{align}
\end{widetext}
Using Lemma \ref{lemma:indices_R} shows that the first and the fourth terms of the right-hand side of this relation cancel. We obtain the \textit{reduced form of the second integrability condition}
\begin{align}
    \boxed{
    Y_{\lambda(\alpha}R\tud{\lambda}{\beta)}=0.
    }
\end{align}
We can also symmetrize the indices $\gamma\delta$ in Eq. \eqref{int_2} to obtain the \emph{symmetrized second integrability condition}
\begin{align}
\boxed{R\tud{\lambda}{\alpha\beta(\gamma}Y_{\delta) \lambda}-\frac{1}{2}Y_{\lambda (\gamma}R\tud{\lambda}{\delta)\alpha\beta}+R\tud{\lambda}{(\gamma\delta)(\alpha}Y_{\beta) \lambda}=0}.\label{int3}
\end{align}

\subsection{The central identity}

Our so-called central identity will consists into a clever rewriting of the expression $K_{\lambda(\beta}R\tud{*\lambda}{\gamma\delta)\alpha}$. Its reduced version first introduced by Rüdiger \cite{doi:10.1098/rspa.1983.0012} is a rewriting of the contracted expression $K_{\lambda\rho}R\tudu{*\lambda}{\alpha\beta}{\rho}$.
\subsubsection{The KY scalar}
Let us define the scalar quantity
\begin{equation}
    \mathcal Z\triangleq\frac{1}{4}Y^*_{\alpha\beta}Y^{\alpha\beta}.\label{defZ}
\end{equation}
Its first covariant derivative takes the form
\begin{align}
    \nabla_\mu\mathcal Z&=\frac{1}{2}\nabla_\mu Y^*_{\alpha\beta}Y^{\alpha\beta}\\
    &=\xi_{[\alpha}g_{\beta]\mu}Y^{\alpha\beta}\\
    &=\xi_\alpha Y\tud{\alpha}{\mu}.\label{DZ}
\end{align}
Its second covariant derivative can be expressed as (notice that $\nabla_\alpha\nabla_\beta\mathcal Z=\nabla_\beta\nabla_\alpha\mathcal Z$)
\begin{align}
    \nabla_\mu\nabla_\nu\mathcal Z&=\nabla_\mu\xi_\alpha Y\tud{\alpha}{\nu}+\xi^\alpha\epsilon_{\mu\alpha\nu\rho}\,\xi^\rho\\
    &=\nabla_\mu\xi_\alpha Y\tud{\alpha}{\nu}=\nabla_\nu\xi_\alpha Y\tud{\alpha}{\mu},\label{DDZ}
\end{align}
where the last equality follows from symmetry of the right-hand side.

The following identity is also useful:
\begin{align}
    \boxed{
    Y_{\alpha[\beta}Y_{\gamma]\delta}=-\frac{1}{2}Y_{\beta\gamma}Y_{\alpha\delta}-\frac{1}{2}\mathcal Z \,\epsilon_{\alpha\beta\gamma\delta}.
    }\label{Y_indices}
\end{align}
The proof consists into noticing that the combination $Y_{\alpha\beta}Y_{\gamma\delta}-Y_{\alpha\gamma}Y_{\beta\delta}+Y_{\beta\gamma}Y_{\alpha\delta}$ is totally antisymmetric in all its indices. It must consequently be proportional to the Levi-Civita tensor:
\begin{align}
    Y_{\alpha\beta}Y_{\gamma\delta}-Y_{\alpha\gamma}Y_{\beta\delta}+Y_{\beta\gamma}Y_{\alpha\delta}=\mathcal A\, \epsilon_{\alpha\beta\gamma\delta},
\end{align}
where the constant $\mathcal A$ remains to be determined. This is achieved by contracting the equation above with $\epsilon^{\alpha\beta\gamma\delta}$, yielding
\begin{align}
    3\epsilon^{\alpha\beta\gamma\delta}Y_{\alpha\beta}Y_{\gamma\delta}=-4!\,\mathcal A.
\end{align}
Using the definition of $\mathcal Z$ leads to
\begin{align}
    \mathcal A=-\mathcal Z,
\end{align}
which gives the desired result.

\subsubsection{Derivation of the central identity}

The trick for deriving the central identity is to define the tensor
\begin{align}
    \mathcal T_{\mu\nu\rho\sigma}=\epsilon_{\mu\alpha\beta\sigma}\nabla^\alpha\nabla^\beta Y_{\nu\lambda}Y\tud{\lambda}{\rho},\label{generalized_T}
\end{align}
to perform two different rewritings of this expression and finally to equate them. 
Notice that we recover the tensor used in Rüdiger's proof \cite{doi:10.1098/rspa.1983.0012} by contracting the last two indices of Eq. \eqref{generalized_T}. 
\begin{widetext}
First, applying the Ricci identity to Eq. \eqref{generalized_T} and making use of Eq. \eqref{Y_indices} yields 
\begin{align}
    \mathcal T_{\mu\nu\rho\sigma}&=\frac{1}{2}\epsilon_{\mu\alpha\beta\sigma}\comm{\nabla^\alpha}{\nabla^\beta}Y_{\nu\lambda}Y\tud{\lambda}{\rho}
    =-\qty(R\tud{*\kappa}{\nu\mu\sigma}K_{\kappa\rho}+R\tud{*\kappa\lambda}{\mu\sigma}Y_{\nu[\kappa}Y_{\lambda]\rho})\\
    &=R\tud{*\kappa}{\nu\sigma\mu}K_{\kappa\rho}+\frac{1}{2}R\tud{*\kappa\lambda}{\mu\sigma}\qty(Y_{\kappa\lambda}Y_{\nu\rho}+\mathcal Z\epsilon_{\nu\kappa\lambda\rho}).
\end{align}
Second, using Eqs. \eqref{xi}, \eqref{KY_bis} and Lemma \ref{lemma:divless}, we rewrite $\mathcal T_{\mu\nu\rho\sigma}$ as follows:
\begin{align}
\mathcal T\tudu{\mu}{\nu\rho}{\sigma}&=-\frac{1}{2}\epsilon^{\mu\alpha\beta\sigma}\epsilon_{\nu\lambda\gamma\delta}\nabla_\alpha\nabla_\beta Y^{*\gamma\delta}Y\tud{\lambda}{\rho}\\
&=-12\delta^{[\sigma}_\nu Y\tud{\mu}{\rho}\nabla_\alpha\nabla_\beta Y^{*\alpha\beta]}\\
&=-\delta^\sigma_\nu\qty(Y\tud{\mu}{\rho}\nabla_\alpha\nabla_\beta Y^{*\alpha\beta}+Y\tud{\alpha}{\rho}\nabla_\alpha\nabla_\beta Y^{*\beta\mu}+Y\tud{\beta}{\rho}\nabla_\alpha\nabla_\beta Y^{*\mu\alpha})\nonumber\\
&\quad + \delta^\mu_\nu\qty(Y\tud{\sigma}{\rho}\nabla_\alpha\nabla_\beta Y^{*\alpha\beta}+Y\tud{\alpha}{\rho}\nabla_\alpha\nabla_\beta Y^{*\beta\sigma}+Y\tud{\beta}{\rho}\nabla_\alpha\nabla_\beta Y^{*\sigma\alpha})\nonumber\\
&\quad-\delta^\alpha_\nu\qty(Y\tud{\sigma}{\rho}\nabla_\alpha\nabla_\beta Y^{*\mu\beta}+Y\tud{\mu}{\rho}\nabla_\alpha\nabla_\beta Y^{*\beta\sigma}+Y\tud{\beta}{\rho}\nabla_\alpha\nabla_\beta Y^{*\sigma\mu})\nonumber\\
&\quad+\delta^\beta_\nu\qty(Y\tud{\sigma}{\rho}\nabla_\alpha\nabla_\beta Y^{*\mu\alpha}+Y\tud{\mu}{\rho}\nabla_\alpha\nabla_\beta Y^{*\alpha\sigma}+Y\tud{\alpha}{\rho}\nabla_\alpha\nabla_\beta Y^{*\sigma\mu})\\
&=\delta^\sigma_\nu\qty(3Y\tud{\alpha}{\rho}\nabla_\alpha\xi^\mu+2Y\tud{\beta}{\rho}\delta^{[\mu}_\beta\nabla_\alpha\xi^{\alpha]}){ -}\delta^\mu_\nu\qty(3Y\tud{\alpha}{\rho}\nabla_\alpha\xi^\sigma+2Y\tud{\beta}{\rho}\delta_\beta^{[\sigma}\nabla_\alpha\xi^{\alpha]})\nonumber\\
&\quad-3Y\tud{\sigma}{\rho}\nabla_\nu\xi^\mu+3Y\tud{\mu}{\rho}\nabla_\nu\xi^\sigma+2Y\tud{\beta}{\rho}\delta_\beta^{[\sigma}\nabla_\nu\xi^{\mu]}-2\qty(Y\tud{\sigma}{\rho}\delta_\nu^{[\mu}\nabla_\alpha\xi^{\alpha]}+Y\tud{\mu}{\rho}\delta_\nu^{[\alpha}\nabla_\alpha\xi^{\sigma]}+Y\tud{\alpha}{\rho}\delta_\nu^{[\sigma}\nabla_\alpha\xi^{\mu]})
\\
&=\delta^\sigma_\nu Y\tud{\lambda}{\rho}\nabla_\lambda\xi^\mu-\delta^\mu_\nu Y\tud{\lambda}{\rho}\nabla_\lambda\xi^\sigma+ Y\tud{\mu}{\rho}\nabla_\nu\xi^\sigma-Y\tud{\sigma}{\rho}\nabla_\nu\xi^\mu.
\end{align}
Equating the two expressions of $\mathcal{T}_{\mu\nu\rho\sigma}$ obtained leads to
\begin{align}
    R\tud{*\lambda}{\nu\mu\sigma}K_{\rho\lambda}&=\frac{1}{2}R\tud{*\kappa\lambda}{\mu\sigma}\qty(Y_{\kappa\lambda}Y_{\nu\rho}+\mathcal Z\epsilon_{\nu\kappa\lambda\rho})+g_{\mu\nu}Y\tud{\lambda}{\rho}\nabla_\lambda\xi_\sigma-g_{\nu\sigma}Y\tud{\lambda}{\rho}\nabla_\lambda\xi_\mu+Y_{\sigma\rho}\nabla_\nu\xi_\mu-Y_{\mu\rho}\nabla_\nu\xi_\sigma.\label{central_relation}
\end{align}
This is the cornerstone equation for deriving both the central identity and its reduced form.

\paragraph{Non-reduced form.}
Fully symmetrizing Eq. \eqref{central_relation} in $(\mu\nu\rho)$ leads to
\begin{align}
    K_{\lambda(\beta}R\tud{*\lambda}{\gamma\delta)\alpha}=Y\tud{\lambda}{(\beta}g_{\gamma\delta)}\xi_{\alpha;\lambda}-g_{\alpha(\beta}Y\tud{\lambda}{\gamma}\xi_{\delta);\lambda}+Y_{\alpha(\beta}\xi_{\gamma;\delta)}.
\end{align}
We make use of the reduced integrability condition \eqref{int_1_reduced} to write the last term as
\begin{align}
    Y_{\alpha(\beta}\xi_{\gamma;\delta)}&=\frac{1}{2}Y_{\alpha(\beta|}Y^*_{\lambda|\gamma}R\tud{\lambda}{\delta)}\\
    &=\frac{1}{2}Y_{\alpha(\beta}Y\tud{*\lambda}{\gamma}G_{\delta)\lambda}+\frac{R}{4}Y_{\alpha(\beta}Y^*_{\gamma\delta)}\\
    &=\frac{1}{2}Y_{\alpha(\beta}Y\tud{*\lambda}{\gamma}G_{\delta)\lambda}\label{last}
\end{align}
where $G_{\alpha\beta}\triangleq R_{\alpha\beta}-\frac{R}{2}g_{\alpha\beta}$ is the Einstein tensor. This gives rise to the central identity:
\begin{align}
\boxed{
K_{\lambda(\beta}R\tud{*\lambda}{\gamma\delta)\alpha}=Y\tud{\lambda}{(\beta}g_{\gamma\delta)}\xi_{\alpha;\lambda}-g_{\alpha(\beta}Y\tud{\lambda}{\gamma}\xi_{\delta);\lambda}+\frac{1}{2}Y_{\alpha(\beta}Y\tud{*\lambda}{\gamma}G_{\delta)\lambda}.
}\label{generalized_central}
\end{align}

\paragraph{Reduced form.} Rüdiger's reduced form of the central identity can be derived by contracting Eq. \eqref{central_relation} with $g^{\rho\sigma}$ and using Eqs. \eqref{int_1_reduced} and \eqref{DDZ}:
\begin{align}
    R\tudu{*\lambda}{\nu\mu}{\rho}K_{\lambda\rho}&=\frac{1}{2}R\tudu{*\kappa\lambda}{\mu}{\rho}\qty(Y_{\kappa\lambda}Y_{\nu\rho}+\mathcal Z\epsilon_{\nu\kappa\lambda\rho})+g_{\mu\nu}Y^{\lambda\rho}\nabla_\lambda\xi_\rho-Y\tud{\lambda}{\nu}\nabla_\lambda\xi_\mu-Y_{\mu\rho}\nabla_\nu\xi^\rho\\
    &=\frac{1}{2}R\tudu{*\kappa\lambda}{\mu}{\rho}\qty(Y_{\kappa\lambda}Y_{\nu\rho}+\mathcal Z\epsilon_{\nu\kappa\lambda\rho})-g_{\mu\nu}Y^{\lambda\rho}\nabla_\rho\xi_\lambda+Y\tud{\lambda}{\nu}\nabla_\mu\xi_\lambda-Y\tud{\lambda}{\nu}Y^*_{\rho(\lambda}R\tud{\rho}{\mu)}-Y_{\mu\rho}\nabla_\nu\xi^\rho\\
    &=\frac{1}{2}R\tudu{*\kappa\lambda}{\mu}{\rho}\qty(Y_{\kappa\lambda}Y_{\nu\rho}+\mathcal Z\epsilon_{\nu\kappa\lambda\rho})+2\nabla_\mu\nabla_\nu\mathcal Z-g_{\mu\nu}\Delta\mathcal Z-Y\tud{\lambda}{\nu}Y^*_{\rho(\lambda}G\tud{\rho}{\mu)}.
\end{align}
The first term of the above equation can be simplified by noticing that, on the one hand, making use of Lemma \ref{lemma:indices_R},
\begin{align}
\frac{1}{2}R\tudu{*\kappa\lambda}{\mu}{\rho}Y_{\kappa\lambda}Y_{\nu\rho}&=
\frac{1}{4}\epsilon\tdu{\mu}{\alpha\beta\lambda}R\tud{\sigma\rho}{\alpha\beta}Y_{\rho\sigma}Y_{\lambda\nu}
=-\frac{1}{4}\epsilon\tdu{\mu}{\alpha\beta\lambda}R\tud{\sigma\rho}{\alpha\beta}Y^{**}_{\rho\sigma}Y_{\lambda\nu}\\
&=-\frac{1}{8}\epsilon_{\mu\alpha\beta\lambda}\epsilon^{\rho\sigma\gamma\delta}R\tud{\alpha\beta}{\sigma\rho}Y^*_{\gamma\delta}Y\tud{\lambda}{\nu}
=3R\tud{\alpha\beta}{[\alpha\mu}Y^*_{\beta\lambda]}Y\tud{\lambda}{\nu}\\
&=\frac{1}{4}Y\tud{\lambda}{\nu}\qty(4R\tud{\beta}{\mu}Y^*_{\beta\lambda}+4R\tud{\beta}{\lambda}Y^*_{\mu\beta}-2R\tud{\alpha\beta}{\mu\lambda}Y^*_{\alpha\beta}-2RY^*_{\mu\lambda})\\
&=\frac{1}{4}Y\tud{\lambda}{\nu}\qty[4\qty(R\tudu{\alpha}{\mu\lambda}{\beta}Y^*_{\alpha\beta}+R\tud{\beta}{\mu}Y^*_{\beta\lambda})+4R\tud{\beta}{\lambda}Y^*_{\mu\beta}-2RY^*_{\mu\lambda}]\\
&\overset{\eqref{int_1}}{=}2Y\tud{\lambda}{\nu}\nabla_\mu\xi_\lambda+R\tud{\beta}{\lambda}Y\tud{\lambda}{\nu}Y^*_{\mu\beta}-\frac{1}{2}RY\tud{\lambda}{\nu}Y^*_{\mu\lambda}\\
&=2\nabla_\mu\nabla_\nu\mathcal Z+G\tud{\beta}{\lambda}Y\tud{\lambda}{\nu}Y^*_{\mu\beta}.
\end{align}
On the other hand, the term $\frac{1}{2}R\tudu{*\kappa\lambda}{\mu}{\rho}\mathcal Z\epsilon_{\nu\kappa\lambda\rho}$ can be reduced thanks to the identity
\begin{align}
    \frac{1}{4}\mathcal Z\epsilon^{\mu\alpha\beta\lambda}\epsilon_{\lambda\rho\sigma\nu}R\tud{\sigma\rho}{\alpha\beta}&=\frac{3}{2}\mathcal Z \delta^\mu_{[\rho}R\tud{\sigma\rho}{\sigma\nu]}
    =\frac{1}{2}\mathcal Z\qty(2 R\tud{\mu}{\nu}-\delta^\mu_\nu R)
    =\mathcal Z G\tud{\mu}{\nu}.
\end{align}
Putting all pieces together, we obtain the reduced central identity
\begin{align}
    \boxed{
    R\tudu{*\lambda}{\nu\mu}{\rho}K_{\lambda\rho}=4\nabla_\mu\nabla_\nu\mathcal Z-g_{\mu\nu}\Delta \mathcal Z-Y\tud{*\lambda}{(\rho}G_{\mu)\lambda}Y\tud{\rho}{\nu}+G\tud{\beta}{\lambda}Y\tud{\lambda}{\nu}Y^*_{\mu\beta}+\mathcal Z G_{\mu\nu}.\label{central_id_general}
    }
\end{align}
This is the generalization of Rüdiger's reduced central  identity \cite{doi:10.1098/rspa.1983.0012} to non Ricci-flat spacetimes.

\end{widetext}

\subsubsection{Central identity in Ricci-flat spacetimes.} 

Let us now particularize our analysis to vacuum spacetimes, i.e. Ricci-flat spacetimes $R_{\alpha\beta}=G_{\alpha\beta}=0$. This includes in particular the astrophysically relevant Kerr spacetime.

Using the reduced integrability condition \eqref{int_1_reduced}, the central identity \eqref{generalized_central} becomes
\begin{align}
    K_{\lambda(\beta}R\tud{*\lambda}{\gamma\delta)\alpha}&=Y\tud{\lambda}{(\beta}g_{\gamma\delta)}\xi_{\alpha;\lambda}-g_{\alpha(\beta}Y\tud{\lambda}{\gamma}\xi_{\delta);\lambda}\\
    &=-Y\tud{\lambda}{(\beta}g_{\gamma\delta)}\xi_{\lambda;\alpha}+g_{\alpha(\beta}Y\tud{\lambda}{\gamma|}\xi_{\lambda;|\delta)}.
\end{align}
Making use of Eq. \eqref{DDZ}, it takes the final form
\begin{align}
    \boxed{
    K_{\lambda(\beta}R\tud{*\lambda}{\gamma\delta)\alpha}=-\nabla_\alpha\nabla_{(\beta}\mathcal Z g_{\gamma\delta)}+g_{\alpha(\beta}\nabla_\gamma\nabla_{\delta)}\mathcal Z,\label{central_id_gen_RF}
    }
\end{align}
which does not appear in Rüdiger \cite{doi:10.1098/rspa.1983.0012}. 
The reduced central identity \eqref{central_id_general} becomes
\begin{align}
    \boxed{
    R\tudu{*\lambda}{\mu\nu}{\rho}K_{\lambda\rho}=4\nabla_\mu\nabla_\nu\mathcal Z-g_{\mu\nu}\Delta\mathcal Z,\label{central_id}
    }
\end{align}
as obtained by Rüdiger \cite{doi:10.1098/rspa.1983.0012}.

\section{Solutions to the $ \mathcal O(\mathcal S)$ constraint in Ricci-flat spacetimes}\label{sec:unicity}
In this Section, one will gather the results obtained in the two previous sections of this paper. Our aim will be to solve the constraint \eqref{reduced_C2} in Ricci-flat spacetimes admitting a KY tensor. The Ricci-flatness assumption will enable us to make use of the simple expressions provided by the central identity \eqref{central_id_gen_RF} and its reduced form \eqref{central_id}.

\begin{widetext}
\subsection{Simplification of the constraint}
Plugging Eq. \eqref{central_id} into Eq. \eqref{reduced_C2} leads -- after a few easy manipulations -- to the constraint
\begin{align}
    {}^*\,X_{\alpha(\beta\gamma;\delta)}+\nabla_{(\beta|}\qty[X_\alpha-\frac{4}{3}\nabla_\alpha\mathcal Z]g_{|\gamma\delta)}-g_{\alpha(\beta}\nabla_{\delta}\qty[X_{\gamma)}-\frac{4}{3}\nabla_{\gamma)}\mathcal Z]=0.\label{final_constraint_KY}
\end{align}

\newpage

\end{widetext}

It is straightforward to see that it admits the following non-trivial solution
\begin{align}
    \boxed{X_{\alpha\beta\gamma}=0,\qquad X_{\alpha}=\frac{4}{3}\nabla_\alpha\mathcal Z.}\label{sol_C2}
\end{align}
As we will see later, this solution will lead to Rüdiger's invariant \eqref{rudiger}. In what follows, we will seek a more general solution to Eq. \eqref{final_constraint_KY}, which does not assume $X_{\alpha\beta\gamma}=0$. The first step is to simplify more the constraint \eqref{final_constraint_KY}. We begin by rewriting it in the much simpler form
\begin{align}
    \qty[{}^* X_{\alpha(\beta\gamma}+Y_\alpha g_{(\beta\gamma}-g_{\alpha(\beta}Y_\gamma]_{;\delta)}=0\label{constr_red}
\end{align}
by introducing the shifted variable
\begin{align}
    Y_\alpha\triangleq X_\alpha-\frac{4}{3}\nabla_\alpha\mathcal Z.
\end{align}
We subsequently rewrite Eq. \eqref{constr_red} in order to remove the dual operator from the tensor $\mathbf X$. Let us define the Hodge dual $\tilde Y_{\alpha\beta\gamma}$ of $Y_\alpha$:
\begin{align}
    \tilde Y_{\alpha\beta\gamma}\triangleq\epsilon_{\mu\alpha\beta\gamma}Y^\mu\qquad\Leftrightarrow\qquad Y_\alpha\triangleq-\frac{1}{6}\epsilon_{\alpha\mu\nu\rho}\tilde Y^{\mu\nu\rho}.
\end{align}
Contracting Eq. \eqref{constr_red} with $\epsilon^{\alpha\mu\nu\rho}$ and using the usual properties of the Levi-Civita tensor, we get
\begin{align}
-3\delta^{[\mu}_{(\beta}X\tud{\nu\rho]}{\gamma;\delta)}+\tilde Y\tud{\mu\nu\rho}{;(\beta}g_{\gamma\delta)}-\epsilon\tdu{(\beta}{\mu\nu\rho}Y_{\gamma;\delta)}=0.
\end{align}
Now, using the fact that
\begin{align}
    \epsilon^{\beta\mu\nu\rho}Y_{\gamma;\delta}=4\delta_\gamma^{[\beta}\tilde Y\tud{\mu\nu\rho]}{;\delta},
\end{align}
the last term of the previous equation reads as
\begin{align}
    -\epsilon\tdu{(\beta}{\mu\nu\rho}Y_{\gamma;\delta)}=-g_{(\beta\gamma}\tilde Y\tud{\mu\nu\rho}{;\delta)}+3\delta^{[\mu}_{(\gamma}\tilde Y\tud{\nu\rho]}{\beta;\delta)}.
\end{align}
Putting all pieces together, Eq. \eqref{final_constraint_KY} becomes equivalent to
\begin{align}
    \delta_{(\beta}^{[\mu}\qty(X\tud{\nu\rho]}{\gamma}-\tilde Y\tud{\nu\rho]}{\gamma})_{;\delta)}=0.\label{c_red_int}
\end{align}
It follows from Eq. \eqref{tildeX} and from the definition of $L_{\alpha\beta\gamma}=L_{[\alpha\beta]\gamma}$ that $X_{\alpha\beta\gamma}$ is antisymmetric on its two first indices, $X_{\alpha\beta\gamma}=X_{[\alpha\beta]\gamma}$.
Let us decompose $X_{\alpha\beta\gamma}$ into its traces and trace-free parts:
\begin{align}
    X_{\alpha\beta\gamma}=A_\alpha g_{\beta\gamma}+B_{\beta}g_{\alpha\gamma}+C_\gamma g_{\alpha\beta}+D^\text{tf}_{\alpha\beta\gamma}.\label{defXD}
\end{align}
Note that the constraint $X_{[\alpha\beta\gamma]}=0$ reduces to $D^\text{tf}_{[\alpha\beta\gamma]}=0$.
Moreover, since $X_{\alpha\beta\gamma}=X_{[\alpha\beta]\gamma}$, it implies that one can set $C_\gamma=0$. Plugging the above decomposition into Eq. \eqref{c_red_int} and using the identity $\delta_{(\alpha}^{[\mu}\delta_{\beta)}^{\nu]}=0$, the constraint \eqref{c_red_int} becomes
\begin{align}
    \delta^{[\mu}_{(\beta}\qty(D\tud{\text{tf}\,\nu \rho]}{\gamma}-\tilde Y\tud{\nu\rho]}{\gamma})_{;\delta)}=0.\label{tf_constraint}
\end{align}
This implies that the 1-forms $A_\alpha$ and $B_\beta$ determining the trace part of $X_{\alpha\beta\gamma}$ are left unconstrained. However, these trace parts produce terms into the conserved quantity \eqref{quadratic_invariant} containing a $p_\mu S^{\mu\nu}$ factor, which vanish due to the spin supplementary condition \eqref{ssc}.
All in all, we can, without loss of generality, set $A_\alpha=B_\beta=0=C_\gamma$, \textit{i.e.} consider that $X_{\alpha\beta\gamma}$ reduces to its traceless part. The constraint equation can be written in the short form
\begin{align}
    \boxed{
    \delta_{(\beta}^{[\mu}W\tud{\nu\rho]}{\gamma;\delta)}=0,\label{final_cst}
    }
\end{align}
where we have defined $W_{\alpha\beta\gamma}\triangleq D^{\text{tf}}_{\alpha\beta\gamma}-\tilde Y_{\alpha\beta\gamma}$, which is by definition traceless and antisymmetric in its two first indices. 
Contracting the constraint \eqref{final_cst} with $-\frac{1}{2} \epsilon_{\mu\nu\rho\alpha}$ leads to the equivalent condition
\begin{align}
    \boxed{^*\!W_{\alpha(\beta\gamma;\delta)}=0,}\label{final_cst_star}
\end{align}
where
\begin{align}
     ^*\!W_{\alpha\beta\gamma}=\,^*\!D^\text{tf}_{\alpha\beta\gamma}-2g_{\gamma[\alpha}Y_{\beta]}.\label{decomp_star}
\end{align}

Given a solution to the simplified constraint \eqref{final_cst_star}, the quasi-invariant $K_{\mu\nu}p^\mu p^\nu + L_{\alpha\beta\gamma}S^{\alpha\beta}p^\gamma$ 
can be constructed using Eqs. \eqref{tildeX}, \eqref{irreducible_parts} and \eqref{defXD} which leads to 
\begin{equation}
  L_{\alpha\beta\gamma}=D^{\text{tf}}_{\alpha\beta\gamma}  +\frac{1}{3}\lambda_{\alpha\beta\gamma}+\epsilon_{\alpha\beta\gamma\delta}(Y^\delta + \frac{4}{3}\nabla^\delta Z).\label{recon}
\end{equation}

\subsection{The R\"udiger quasi-invariant $\mathcal Q_R$}

Let us first recover R\"udiger's quasi-invariant \eqref{rudiger}.
In terms of our new variables, Rüdiger's solution \eqref{sol_C2} is simply
\begin{align}
    D^{\text{tf}}_{\alpha\beta\gamma}=0,\qquad Y_{\alpha}=0. \label{sol_new_var}
\end{align}
Substituting in \eqref{recon}, it leads to the quasi-invariant
\begin{align}
    \mathcal Q_R &=K_{\mu\nu}p^\mu p^\nu+\frac{1}{3}\lambda_{\mu\nu\rho}S^{\mu\nu}p^\rho +\frac{4}{3}\epsilon_{\mu\nu\rho\sigma}S^{\mu\nu}p^\rho \nabla_\sigma \mathcal Z.\label{Rudinv}
\end{align}

Let us work on each term of the right-hand side separately:
\begin{itemize}
    \item We recall the definition 
$L_\alpha\triangleq Y_{\alpha\lambda}p^\lambda$.  The definition \eqref{defK} leads to 
    \begin{align}
        K_{\mu\nu}p^\mu p^\nu=-L_\alpha L^\alpha .
    \end{align}
     \item Using the definitions \eqref{def_lambda} and \eqref{defK} and the properties \eqref{KY_equiv} and \eqref{DZ} we have
     \begin{widetext}
    \begin{align}
        \lambda_{\mu\nu\rho}S^{\mu\nu}p^\rho &= 2\nabla_\mu K_{\nu\rho}S^{\mu\nu}p^\rho=-2\qty(\epsilon_{\mu\nu\lambda\sigma}\epsilon^{\mu\nu\alpha\beta}Y\tud{\lambda}{\rho}+\epsilon_{\mu\lambda\rho\sigma}\epsilon^{\mu\nu\alpha\beta}Y\tdu{\nu}{\lambda})\xi^\sigma\hat p_\alpha S_\beta p^\rho\\
        &=2\qty(4Y\tud{\lambda}{\rho}\xi^\sigma \hat p_{[\lambda}S_{\sigma]}p^\rho+6Y\tdu{[\lambda}{\lambda}\hat p_\rho S_{\sigma]}\xi^\sigma p^\rho)\\
        &=2\qty(-3 Y\tud{\lambda}{\rho}p^\rho\xi^\sigma\hat p_\sigma S_\lambda-Y\tdu{\sigma}{\lambda}S_\lambda\xi^\sigma \hat p_\rho p^\rho+\underbrace{Y\tdu{\sigma}{\lambda}\hat p_\lambda\xi^\sigma S_\rho p^\rho}_{=0})\\
        &=2\mu\xi^\sigma Y\tdu{\sigma}{\lambda}S_\lambda-6\mu^{-1} L_\alpha S^\alpha \xi_\beta p^\beta=2\mu S^\alpha\partial_\alpha\mathcal Z-6\mu^{-1} L_\alpha S^\alpha \xi_\beta p^\beta.\label{lambdaterm}
    \end{align}
     \end{widetext}
    Using \eqref{id}, the factor $L_\alpha S^\alpha$ can be written in a much more enlightening way:
\begin{align}
    L_\alpha S^\alpha&=Y_{\alpha\lambda}p^\lambda S^\alpha=Y^*_{\alpha\lambda}p^{[\alpha}S^{\lambda]*}=-\frac{\mu}{2}Y^*_{\alpha\lambda}S^{\alpha\lambda} \nonumber \\ 
    &=-\frac{\mu}{2}\mathcal Q_Y,\label{LS}
\end{align}
where $\mathcal Q_Y$ is Rüdiger's linear invariant \eqref{rudiger_linear}, which is also conserved up to linear order in the spin. 

\item Using the definition of the spin vector \eqref{defSmu} together with the spin supplementary condition allows to write
    \begin{align}
        \epsilon_{\mu\nu\rho\sigma} S^{\mu\nu}p^\rho \nabla^\sigma\mathcal Z&=4\delta_\rho^{[\alpha}\delta_\sigma^{\beta]}\nabla^\sigma\mathcal Z \hat p_\alpha S_\beta p^\rho\nonumber \\ 
        &=4\hat p_{[\rho}S_{\sigma]}\nabla^\sigma \mathcal Z p^\rho \nonumber \\ 
        &=2S_\sigma\nabla^\sigma\mathcal{Z}\hat p_\rho p^\rho=-2\mu S^{\alpha}\partial_\alpha\mathcal{Z}.\label{third}
    \end{align}

\end{itemize}

Putting all the pieces together, we get the following form for $Q_R$, 
\begin{align}
\mathcal Q_R=-L_\alpha L^\alpha-2\mu S^\alpha\partial_\alpha\mathcal Z+\mathcal Q_Y\,\xi_\beta p^\beta.\label{QR}
\end{align}
which is identically the quadratic R\"udiger invariant \eqref{rudiger}. For Ricci-flat spacetimes, $\xi^\mu$ is a Killing vector and $\xi_\alpha p^\alpha$ can be upgraded to an invariant as \eqref{defC} with a $O(S^1)$ correction. This implies that the last term in \eqref{QR} is trivial since it is a product of quasi-invariant at linear order in $S$.

\subsection{Trivial solutions to the algebraic constraints}

Let us now discuss more general solutions to the simplified algebraic constraints \eqref{final_cst_star}. By linearity, we can substract the R\"udiger quasi-invariant \eqref{Rudinv} from the definition of quasi-invariants \eqref{quadratic_invariant} where $L_{\alpha\beta\gamma}$ is given in Eq. \eqref{recon}. Given a solution to the simplified algebraic constraints \eqref{final_cst_star}, one therefore obtains a quasi-invariant of the form
\begin{align}
    \mathcal Q(D_{\alpha\beta\gamma}^{\text{tf}},Y_\alpha)&=\qty(D^\text{tf}_{\alpha\beta\gamma}+\epsilon_{\alpha\beta\gamma\lambda}Y^\lambda)S^{\alpha\beta}p^\gamma\\
    &=2\qty(\,^*\!D^\text{tf}_{\alpha\beta\gamma}+\,^*\!\epsilon_{\alpha\beta\gamma\lambda}Y^\lambda)S^\alpha\hat p^\beta p^\gamma\\
    &=2\mu\,^*\!D^\text{tf}_{\alpha(\beta\gamma)}S^\alpha\hat p^\beta\hat p^\gamma-2\mu S_\alpha Y^\alpha\label{inv_star}
\end{align}
after using Eq. \eqref{subsS}. Such quasi-invariant is homogeneous in $S$. 

Looking at \eqref{final_cst_star}, it is appealing to first attempt to generate a quasi-invariant through solving the stronger algebraic constraint
\begin{align}
    ^*W_{\alpha(\beta\gamma)}=0.\label{strong_cst}
\end{align}
However, this procedure only leads to identically vanishing quasi-invariants. Indeed, by virtue of Eq. \eqref{decomp_star}, one then has
\begin{align}
    ^*D^\text{tf}_{\alpha(\beta\gamma)}=g_{\alpha(\gamma}Y_{\beta)}-g_{\beta\gamma}Y_\alpha.
\end{align}
This yields
\begin{align}
    \mathcal Q(D_{\alpha\beta\gamma}^{\text{tf}},Y_\alpha)&=-2\mu S_\alpha Y^\alpha \hat p_\beta \hat p^\beta-2\mu S_\alpha Y^\alpha=0.\label{trivid}
\end{align}
This implies that new non-trivial quasi-invariants can only be generated by making a non-trivial use of the symmetrized covariant derivative $_{;\delta)}$ in Eq. \eqref{final_cst_star}.

\subsection{Invariants homogeneously linear in $\mathcal S$}

Non-trivial invariants of the form \eqref{inv_star} can only be generated by finding a solution to the differential constraint \eqref{final_cst_star}. As we will see in this section, this problem and the form of the generated invariant can be recast in a very simple form. The first step is to express the invariant \eqref{inv_star} as a function of $^*W_{\alpha\beta\gamma}$ only. Contracting Eq. \eqref{decomp_star} with $g^{\beta\gamma}$ yields
\begin{align}
    \boxed{
    Y_{\alpha}=\frac{1}{3}\,^*W\tdu{\alpha\lambda}{\lambda}.\label{Y}
    }
\end{align}
Rearranging Eq. \eqref{decomp_star} and making use of Eq. \eqref{Y}, we get
\begin{align}
    \boxed{
    ^*D_{\alpha\beta\gamma}^\text{tf}=\,^*W_{\alpha\beta\gamma}+\frac{2}{3}g_{\gamma[\alpha}\,^*W\tdu{\beta]\lambda}{\lambda}.
    }
\end{align}
Plugging these two expressions into Eq. \eqref{inv_star} and using the orthogonality condition $\hat p_\alpha S^\alpha=0$ leads to the simple expression
\begin{align}
    \boxed{
    \mathcal Q(W_{\alpha\beta\gamma})=2\mu \,^*W_{\alpha(\beta\gamma)}S^\alpha\hat p^\beta\hat p^\gamma.\label{inv_star_final}
    }
\end{align}
Now, because the dualization is an invertible operation, the giving of $^*W_{\alpha(\beta\gamma)}$ is equivalent to the giving of the symmetrized part in its two last indices of a rank-3 tensor $N_{\alpha\beta\gamma}$ antisymmetric in its two first indices, such that
\begin{align}
    N_{\alpha\beta\gamma}\equiv\,^*W_{\alpha\beta\gamma},
\end{align}
which obeys
\begin{align}
    N_{\alpha(\beta\gamma;\delta)}=0.\label{sfinalc}
\end{align}
Note that one \textit{cannot} impose that $N_{\alpha\beta\gamma}$ is also symmetric in $\beta\gamma$ otherwise it would vanish because one would have $N_{\alpha\beta\gamma}=-N_{\beta\alpha\gamma}=-N_{\beta\gamma\alpha}=N_{\gamma\beta\alpha}=N_{\gamma\alpha\beta}=-N_{\alpha\gamma\beta}=-N_{\alpha\beta\gamma}$.

The tensor $T_{\alpha\beta\gamma}\triangleq N_{\alpha(\beta\gamma)}$ obeys the cyclic identity 
\begin{align}\label{cyc}
T_{\alpha\beta\gamma}+T_{\beta\gamma\alpha}+T_{\gamma\alpha\beta}=0
\end{align} 
but since $T_{\alpha\beta\gamma}$ is not symmetric in its two first indices, it is not totally symmetric and the condition defining the mixed-symmetry tensor \eqref{sfinalc} is \emph{distinct} from the condition defining a Killing tensor $K_{(\alpha\beta\gamma;\delta)}=0$ where $K_{\alpha\beta\gamma}$ is totally symmetric $K_{(\alpha\beta\gamma)}=K_{\alpha\beta\gamma}$. In terms of representation of the permutation group, $T_{\alpha\beta\gamma}$ is a $\{2,1\}$ Young diagram.

Consequently, the following statement holds:
\begin{prop}
Let $(\mathcal M,g_{\mu\nu})$ be a (3+1)-dimensional Ricci-flat spacetime admitting a Killing-Yano tensor such that there exists on $\mathcal M$ a mixed-symmetry Killing tensor, \emph{i.e.}, a rank-3 tensor $T_{\alpha\beta\gamma}$ which is a $\{2,1\}$ Young tableau, \emph{i.e.}, built as $T_{\alpha\beta\gamma}= N_{\alpha(\beta\gamma)}$ such that $N_{\alpha\beta\gamma}=N_{[\alpha\beta]\gamma}$,  satisfying the (differential) constraint
\begin{align}
    T_{\alpha(\beta\gamma;\delta)}=0.\label{finalc}
\end{align}
Then, the quantity
\begin{align}
    \mathcal{N}\triangleq T_{\alpha\beta\gamma}S^\alpha\hat p^\beta\hat p^\gamma\label{cons_qty}
\end{align}
is a (homogeneously linear in $\mathcal S$) quasi-invariant for the linearized MPT equations on $\mathcal M$, \emph{i.e.}
\begin{align}
    \dv{\mathcal{N}}{\tau}=\mathcal O(\mathcal S^2).
\end{align}
\end{prop}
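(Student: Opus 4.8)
The plan is to compute $\dot{\mathcal N}$ directly from the MPT equations while tracking the order in $\mathcal S$ of every contribution, and to show that imposing the differential constraint~\eqref{finalc} kills all the pieces that are not already $\mathcal O(\mathcal S^2)$. Since $\mathcal N$ is a scalar transported along the worldline, $\dot{\mathcal N}=\frac{D\mathcal N}{\dd\tau}$ with $\frac{D}{\dd\tau}=v^\mu\nabla_\mu$, so the Leibniz rule (together with $T_{\alpha\beta\gamma}=T_{\alpha(\beta\gamma)}$) gives
\begin{align}
\dot{\mathcal N}=\qty(v^\delta\nabla_\delta T_{\alpha\beta\gamma})S^\alpha\hat p^\beta\hat p^\gamma+T_{\alpha\beta\gamma}\,\frac{DS^\alpha}{\dd\tau}\,\hat p^\beta\hat p^\gamma+2\,T_{\alpha\beta\gamma}S^\alpha\,\frac{D\hat p^\beta}{\dd\tau}\,\hat p^\gamma.\label{Nleibniz}
\end{align}
With the grading $[\hat p^\mu]=[v^\mu]=0$, $[S^\mu]=[S^{\mu\nu}]=1$, one has $\mathcal N=\mathcal O(\mathcal S)$ and I want $\dot{\mathcal N}=\mathcal O(\mathcal S^2)$.

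First I would collect the needed estimates. (i) $\mu$ is an exact constant of motion (Eq.~\eqref{conservation_mu}), so $\frac{D\hat p^\mu}{\dd\tau}=\mu^{-1}\frac{Dp^\mu}{\dd\tau}$; the right-hand side of~\eqref{MPT1} is $\propto S^{\alpha\beta}=\mathcal O(\mathcal S)$, hence $\frac{D\hat p^\mu}{\dd\tau}=\mathcal O(\mathcal S)$. (ii) Eqs.~\eqref{velocity}--\eqref{kinetic_mass} give $D\tud{\mu}{\nu}=\mathcal O(\mathcal S^2)$ and $\mathfrak m=\mu+\mathcal O(\mathcal S^2)$, hence $v^\mu=\hat p^\mu+\mathcal O(\mathcal S^2)$ (this is Eq.~\eqref{pv} at leading order). (iii) From~\eqref{MPT2}, $\frac{DS^{\mu\nu}}{\dd\tau}=2p^{[\mu}v^{\nu]}=2\mu\,\hat p^{[\mu}\hat p^{\nu]}+\mathcal O(\mathcal S^2)=\mathcal O(\mathcal S^2)$, and differentiating $S^\alpha=\frac12\epsilon^{\alpha\beta\gamma\delta}\hat p_\beta S_{\gamma\delta}$ then yields $\frac{DS^\mu}{\dd\tau}=\mathcal O(\mathcal S^2)$ (the covariant content of~\eqref{MPT_lin_2}). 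Feeding these into~\eqref{Nleibniz}, the second term is $\mathcal O(\mathcal S^2)$ by (iii), the third is $\mathcal O(\mathcal S)\cdot\mathcal O(\mathcal S)=\mathcal O(\mathcal S^2)$ by (i), and in the first term $v^\delta=\hat p^\delta+\mathcal O(\mathcal S^2)$ reduces it to $(\nabla_\delta T_{\alpha\beta\gamma})S^\alpha\hat p^\beta\hat p^\gamma\hat p^\delta+\mathcal O(\mathcal S^3)$. Since $\hat p^\beta\hat p^\gamma\hat p^\delta$ is totally symmetric, the contraction retains only the part of $\nabla_\delta T_{\alpha\beta\gamma}$ symmetric in $(\beta\gamma\delta)$, i.e.\ $T_{\alpha(\beta\gamma;\delta)}$, which vanishes by~\eqref{finalc}. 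Therefore $\dot{\mathcal N}=\mathcal O(\mathcal S^2)$.

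Equivalently, one can obtain the statement by assembling the reconstruction of Sections~\ref{sec:invariants}--\ref{sec:unicity}: identifying $N_{\alpha\beta\gamma}\equiv{}^*W_{\alpha\beta\gamma}$, one reads off $Y_\alpha$ and $D^{\text{tf}}_{\alpha\beta\gamma}$ from Eq.~\eqref{Y} and its companion inversion, builds $L_{\alpha\beta\gamma}$ via~\eqref{recon} with $K_{\alpha\beta}=Y_{\alpha\lambda}Y\tud{\lambda}{\beta}$, and notes that the hypothesis $T_{\alpha(\beta\gamma;\delta)}=0$ is exactly the simplified second constraint~\eqref{final_cst_star}; then $\mathcal Q=K_{\mu\nu}p^\mu p^\nu+L_{\alpha\beta\gamma}S^{\alpha\beta}p^\gamma$ satisfies~\eqref{C1} and~\eqref{C2}, so $\dot{\mathcal Q}=\mathcal O(\mathcal S^2)$, and by~\eqref{inv_star_final} $\mathcal Q-\mathcal Q_R=2\mu\,\mathcal N$, which with the exact conservation of $\mu$ again gives $\dot{\mathcal N}=\mathcal O(\mathcal S^2)$. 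The one point demanding care in the direct argument is the order counting: what makes the correction terms genuinely subleading is that $v^\mu-\hat p^\mu=\mathcal O(\mathcal S^2)$ and $\frac{DS^\mu}{\dd\tau}=\mathcal O(\mathcal S^2)$ (not merely $\mathcal O(\mathcal S)$), and these hinge on the Tulczyjew SSC through~\eqref{velocity}--\eqref{kinetic_mass} and $\frac{DS^{\mu\nu}}{\dd\tau}=2p^{[\mu}v^{\nu]}$. No genuinely new calculation is required beyond these estimates: the substantive work --- the central identity and the reduction of the $\mathcal O(\mathcal S)$ constraint to the mixed-symmetry equation~\eqref{finalc} --- has already been carried out in Sections~\ref{sec:C2}--\ref{sec:unicity}, so the proposition is essentially a repackaging of those results.
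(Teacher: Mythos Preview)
Your proposal is correct, and in fact you give two arguments: a direct Leibniz-rule computation and a reconstruction via Sections~\ref{sec:invariants}--\ref{sec:unicity}. The paper's own justification is only the second of these: the proposition is stated as a summary of the preceding reduction, where one starts from the ansatz \eqref{quadratic_invariant}, imposes \eqref{C1}--\eqref{C2}, reduces \eqref{C2} to \eqref{final_cst_star}, and identifies the surviving quasi-invariant as \eqref{inv_star_final}. Your direct argument is genuinely shorter and more transparent for the forward implication: it shows that the conclusion $\dot{\mathcal N}=\mathcal O(\mathcal S^2)$ follows from \eqref{finalc} and the order estimates $v^\mu-\hat p^\mu=\mathcal O(\mathcal S^2)$, $DS^\mu/\dd\tau=\mathcal O(\mathcal S^2)$, $D\hat p^\mu/\dd\tau=\mathcal O(\mathcal S)$ alone, without ever invoking Ricci-flatness, the Killing--Yano tensor, or the central identity. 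Those hypotheses enter the paper only because the proposition is being presented as part of a classification (one wants the \emph{converse}: any quasi-invariant of the stated form arises this way), for which the heavy machinery of Sections~\ref{sec:C2}--\ref{sec:unicity} is essential. So your direct route buys simplicity and a sharper sufficiency statement; the paper's route buys the full if-and-only-if.
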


\subsection{Trivial mixed-symmetry Killing tensors}

If a mixed-symmetry Killing tensor $T_{\alpha\beta\gamma}$ is found, the only point to be addressed before claiming the existence of a new quasi-invariant is to check its non-triviality,   \textit{i.e.},  it should not be the product of two others quasi-invariants. We define a \emph{trivial mixed-symmetry Killing tensor} $T_{\alpha\beta\gamma}$ as mixed-symmetry Killing tensor that generates a trivial quasi-invariant, \textit{i.e.} an invariant which can be written as the product of other quasi-invariants.\footnote{It remains to be investigated if such a tensor necessarily takes the form of a sum of direct products of tensors such that Eq. \eqref{finalc} holds. We have not found any simple argument for proving this assertion.} A \emph{non-trivial mixed-symmetry Killing tensor} is a mixed-symmetry Killing tensor which is not trivial. 

If the spacetime admits a Killing-Yano tensor and a Killing vector, we can construct a trivial mixed-symmetry Killing tensor of the form 
\begin{equation}
T_{\alpha\beta\gamma}=Y_{\alpha(\beta}\xi_{\gamma )} .\label{Ntrivial}
\end{equation}
built as $T_{\alpha\beta\gamma} = N^{(1)}_{\alpha(\beta\gamma)}=N^{(2)}_{\alpha(\beta\gamma)}$ from either 
\begin{equation}
N^{(1)}_{\alpha\beta\gamma}=Y_{\alpha\beta}\xi_\gamma    , \quad \text{or}\quad 
N^{(2)}_{\alpha\beta\gamma}=Y_{\alpha\gamma}\xi_\beta - Y_{\beta\gamma}\xi_\alpha . 
\end{equation}
It is straightforward from Eq. \eqref{KYdef} and the Killing equation that they obey \eqref{finalc}.

Now, for this $T_{\alpha\beta\gamma}$ the associated quasi-invariant is the following product of quasi-invariants: 
\begin{equation}
Q=-\frac{1}{2} \mathcal Q_Y \; \xi^\alpha \hat p_\alpha ,
\end{equation}
where the linear R\"udiger quasi-invariant $\mathcal Q_Y$ is defined in Eq. \eqref{rudiger_linear} and obeys Eq. \eqref{LS}.

The question of existence of quasi-invariants beyond the ones found by R\"udiger therefore amounts to determine the existence of non-trivial mixed-symmetry Killing tensors. 

In order to gain some intuition about mixed-symmetry Killing tensors, it is useful to derive the general such tensor in Minkowski spacetime. For a Riemann flat spacetime, the covariant derivative becomes a coordinate derivative in Minkowskian coordinates (in any dimension). We can therefore consider the index $\alpha$ in Eq. \eqref{finalc} as a parametric index and consider the list of 2-components objects $K_{(\alpha)\beta\gamma}=T_{\alpha\beta\gamma}$ symmetric under the exchange of $\beta\gamma$, parametrized by $\alpha$. The constraint \eqref{finalc} is then equivalent to the Killing tensor equation for each $K_{(\alpha)\beta\gamma}$, $\alpha$ fixed. Since all Killing tensors in Minkowski spacetime are direct products of Killing vectors, we can write $K_{(\alpha)\beta\gamma}$ as a sum of terms of the form $K_{(\alpha)\beta\gamma}=M_{(\alpha)(i)(j)}\xi^{(i)}_\beta \xi^{(j)}_\gamma$ where $(i),(j)$ label the independent Killing vectors. In order to respect the symmetries of a mixed-symmetry tensor we can write in particular the resulting $T_{\alpha\beta\gamma}$ as a linear combination of terms  $X_{\alpha(\beta}\xi_{\gamma)}$ where $\xi^\gamma$ is a Killing vector. The symmetry properties of a  mixed-symmetry tensor imply that $X_{\alpha\beta}=X_{[\alpha\beta]}$. The constraint \eqref{finalc} finally reduces to the condition that $X_{\alpha\beta}$ is a Killing-Yano tensor. We have therefore proven that any mixed-symmetry tensor in Minkowski spacetime takes the trivial form \eqref{Ntrivial}. 

Now, Kerr spacetime admits a non-trivial Killing tensor and curvature and further analysis is required. We will address the existence of a mixed-symmetry tensor for Kerr spacetime in the following.

\section{Linear invariants for Kerr spacetime}\label{sec:Kerr}
\subsection{Generalities on the Kerr geometry}
In Boyer-Lindquist coordinates $x^\mu = (t,r,\theta,\phi)$, the Kerr metric is
\begin{align}
\dd s^2&=-\frac{\Delta}{\Sigma}\qty(\dd t -a\sin^2\theta\dd\phi )^2+\Sigma\qty(\frac{\dd r^2}{\Delta}+\dd\theta^2)\nonumber\\
&\quad+\frac{\sin^2\theta}{\Sigma}\qty((r ^2+a^2)\dd\phi -a\dd t )^2\label{eq:kerr_bl}
\end{align}
with
\begin{align}
\Delta(r )&\triangleq r ^2-2Mr +a^2,\quad \Sigma(r ,\theta)\triangleq r ^2+a^2\cos^2\theta.
\end{align}
Assuming the validity of the cosmic censorship conjecture, the angular momentum per unit mass $a$ of the hole is bounded by its mass $M$: $\abs{a}\leq M$.
The Kerr metric admits a Killing-Yano tensor $Y_{\mu\nu}=Y_{[\mu\nu]}$ given by \cite{10021842130}
\begin{align} \label{Kmunu}
&\frac{1}{2}Y_{\mu\nu} \dd x^\mu \wedge \dd x^\nu = a \cos\theta \dd\hat r \wedge (\dd\hat t - a \sin^2\theta \dd \hat \phi)\nonumber\\
&\quad+\hat  r \sin\theta \dd \theta \wedge \left( (\hat r^2+a^2) \dd \hat \phi - a \dd \hat t\right) .
\end{align}

The derived Killing tensor $K_{\mu\nu}=K_{(\mu\nu)}= Y_\mu^{\;\; \lambda}Y_{\lambda \nu}$ obeys
\begin{align}
    K^{\mu\nu}=-2 \Sigma \ell^{(\mu} n^{\nu)}-\hat r^2 g^{\mu\nu},
\end{align}
where the two principal null directions of Kerr are given by $ \ell^\mu \p_\mu =\Delta^{-1}( (\hat r^2+a^2)\p_{\hat t}+\Delta\partial_r +a \p_{\hat \phi})$ and $ n^\mu \p_\mu=(2 \Sigma)^{-1}( (\hat r^2+a^2)\p_{\hat t}-\Delta \p_{\hat r} +a \p_{\hat \phi} )$. 

We introduce the tetrad\footnote{Following \cite{Ruangsri:2015cvg}, we've introduced the symbol ``$\stackrel{.}{=}$'', whose meaning is ``the tensorial object of the left-hand side is represented in Boyer-Lindquist coordinates by the components given in the right-hand side''. Similarly, we introduce the symbol ``$\stackrel{,}{=}$'', bearing the same meaning, but in the tetrad basis.} \cite{Mino:1997bx}
\begin{align}
    e^{0}_\mu&\stackrel{.}{=}\mqty(\sqrt{\frac{\Delta}{\Sigma}},&0,&0,&-a\sin^2\theta\sqrt{\frac{\Delta}{\Sigma}}),\\
    e^{1}_\mu&\stackrel{.}{=}\mqty(0,&\sqrt{\frac{\Sigma}{\Delta}},&0,&0),\\
    e^{2}_\mu&\stackrel{.}{=}\mqty(0,&0,&\sqrt{\Sigma},&0),\\
    e^{3}_\mu&\stackrel{.}{=}\mqty(-\frac{a}{\sqrt{\Sigma}}\sin\theta,&0,&0,&\frac{r^2+a^2}{\sqrt{\Sigma}}\sin\theta).
\end{align}
We use lower-case Latin indices to denote tetrad components, \textit{e.g.} $e^{a}_\mu=\mqty(e^{0}_\mu,&e^{1}_\mu,&e^{2}_\mu,&e^{3}_\mu)$. The tetrad components $V^a$ of any vector $\mathbf V$ are given by $V^a=e^a_\mu V^\mu$. Tetrad indices are lowered and raised with the Minkowski metric $\eta_{ab}=\text{diag}\mqty(-1,&1,&1,&1)$. We choose the convention $\text{sign}(\epsilon^{\hat t \hat r \theta \phi })=+1$. 

The Kerr spacetime admits two Killing vector fields, namely the time translation Killing field $k^\mu\stackrel{.}{=}\mqty(1,&0,&0,&0)$ and the axis of axisymmetry, $l^\mu\stackrel{.}{=}\mqty(0,&0,&0,&1)$.

In this tetrad basis, the KY tensor and its dual take the elegant form
\begin{align}
    &\frac{1}{2} Y_{ab}\dd x^a\wedge\dd x^b\nonumber\\
    &\qquad=a\cos\theta\,\dd x^1\wedge\dd x^0+r\,\dd x^2\wedge\dd x^3,\label{Yddt}\\
    &\frac{1}{2}\,^*Y_{ab}\dd x^a\wedge\dd x^b\nonumber\\
    &\qquad=r\,\dd x^1\wedge\dd x^0+a\cos\theta\,\dd x^3\wedge\dd x^2.\label{YStarddt}
\end{align}

\subsection{Known quasi-conserved quantities}

We now explicit the various quasi-invariants for the linearized MPT equations on Kerr spacetime. We recall that the norms of the vector-variables
\begin{align}
    \mu^2=-p_a p^a,\qquad S^2\triangleq S_a S^a
\end{align}
are \textit{exactly} conserved along the motion.

\paragraph{Invariants generated by Killing fields.} From the existence of the two Killing fields $\mathbf k$ and $\mathbf l$, one can construct two linear invariants, namely the energy $E\triangleq -\mathcal C_k$ and the projection of the angular momentum along the direction of the BH spin, $\ell\triangleq\mathcal C_l$ where $\mathcal C_\xi$ is defined in Eq. \eqref{defC}. Their explicit expressions are given by \cite{Saijo:1998mn}

\begin{widetext}
\begin{align}
    E&=E^G+\frac{M}{\Sigma^2}\qty[\qty(r^2-a^2\cos^2\theta)S^{10}-2ar\cos\theta S^{32}],\\
    \ell&= \ell^G + \frac{a \sin^2\theta}{\Sigma^2}\qty[(r-M)\Sigma+2Mr^2]S^{10}\nonumber\\
    &\quad+\frac{a\sqrt{\Delta}\sin\theta\cos\theta}{\Sigma}S^{20}+\frac{r\sqrt{\Delta}\sin\theta}{\Sigma}S^{13}
    +\frac{\cos\theta}{\Sigma^2}\qty[(r^2+a^2)^2-a^2\Delta\sin^2\theta]S^{23}.
\end{align}
\end{widetext}

Here, $E^G$ and $\ell^G$ denote respectively the geodesic energy and angular momentum
\begin{align}
    E^G&=-k_\mu p^\mu=-p_t=\sqrt{\frac{\Delta}{\Sigma}}p^0+\frac{a\sin\theta}{\sqrt{\Sigma}}p^3,\label{energy}\\
    \ell^G &= l_\mu p^\mu=p_\phi= a\sin^2\theta\sqrt{\frac{\Delta}{\Sigma}}p^0+(a^2+r^2)\frac{\sin\theta}{\sqrt{\Sigma}}p^3.\label{angular_momentum}
\end{align}

\paragraph{Linear Rüdiger's quasi-invariant.} 
An elegant expression for $\mathcal Q_Y$ \eqref{rudiger_linear} is given through expressing the spin tensor components in the tetrad frame.  Using Eq. \eqref{YStarddt}, we find that Rüdiger's linear quasi-invariant is given by the simple expression
\begin{align}
    \mathcal Q_Y=2\qty(rS^{10}+a\cos\theta S^{32}).\label{linear_rudiger_kerr}
\end{align}

\paragraph{Quadratic Rüdiger's quasi-invariant.} Computing the various quadratic quasi-invariants that can be constructed amounts to evaluate the three scalar products $L_\alpha L^\alpha$, $S^\alpha\partial_\alpha\mathcal Z$ and $\xi_\alpha p^\alpha$. Three of those factors reduce to trivial expressions. First, because of the sign convention chosen for the relation between Killing and KY tensors, the product $L_\alpha L^\alpha$ reduces to the usual Carter constant $Q^G$,
\begin{align}
    L_\alpha L^\alpha =-K_{\alpha\beta}p^\alpha p^\beta = Q^G.
\end{align}

Second, a direct computation reveals that the vector $\xi^\alpha$ \eqref{xi} reduces to the timelike Kerr Killing vector:
\begin{align}
    \xi^\alpha=k^\alpha\stackrel{.}{=}\mqty(1,&0,&0,&0).
\end{align}
The product $\xi_\alpha p^\alpha$ is consequently equal to minus the geodesic energy $E^G$:
\begin{align}
    \xi_\alpha p^\alpha=-E^G.
\end{align}

Only the factor $S^\alpha\partial_\alpha\mathcal Z$ remains to be computed. The Killing-Yano scalar $\mathcal Z$ \eqref{defZ} reads
\begin{align}
    \mathcal Z=-ar\cos\theta.
\end{align}
The simplest expression for $S^\alpha\partial_\alpha\mathcal Z$ is provided through expressing the components of the spin vector in Boyer-Lindquist coordinates:
\begin{align}
    S^\alpha\partial_\alpha\mathcal Z&=a(r \sin\theta\, S^\theta - \cos\theta\, S^r)
\end{align}
and it can be written explicitly as 
\begin{align}
    S^\alpha\partial_\alpha\mathcal Z&= -\frac{3 a}{\mu\sqrt{\Sigma}}(\sqrt{\Delta}\cos\theta p^{[0}S^{23]}+r \sin\theta p^{[0}S^{13]}). \label{SDZ_tetrad}
\end{align}

Rüdiger's quadratic invariant \eqref{Rudinv} consequently takes the form
\begin{align}
    \mathcal Q_R\triangleq-Q^G-2\mu a \qty(r\sin\theta S^\theta-\cos\theta S^r)-\mathcal Q_Y E^G.
\end{align}

In summary, we have in our possession  four quasi-constants of the motion (in addition to $\mu^2$ and $\mathcal S$): $E$, $\ell$, $\mathcal Q_Y$ and $\mathcal Q_R$ that are conserved along the flow generated by the MPT equations at linear order in $S$. They consequently form a set of four linearly independent first integrals for the linearized MPT equations which, however, are \textit{not} in involution, as will be later proven in Appendix \ref{app:pb}.

\subsection{Looking for mixed-symmetry tensors}

Let us now address the question of the existence of a non-trivial mixed-symmetry Killing tensor, i.e. a tensor $T_{\alpha\beta\gamma}$ obeying the symmetries of a $\{2,1\}$ Young tableau and obeying Eq. \eqref{finalc}, on Kerr spacetime. It is highly relevant because such the existence of such a tensor is in one-to-one correspondence with the existence of another independent quasi-conserved quantity that could, possibly, be in involution with the others first integrals.
All the computations mentioned below being very cumbersome, they will not be reproduced here but are encoded in a \textit{Mathematica} notebook, which is available on simple request.
 
From the symmetry in $(\beta\gamma)$ alone, there are $4 \times 10=40$ components in $T_{\alpha\beta\gamma}$ in 4 spacetime dimensions but they are not all independent because $\mathbf T$ is defined from $\mathbf N$ which is antisymmetric in its first indices. From the cyclic identity \eqref{cyc} we deduce the following: (i) The components of $T_{\alpha\beta\gamma}$ with all indices set equal to $i=1,\; \dots,\; 4$ is 0, $T_{iii}=0$ (4 identities); (ii) In the presence of two distinct components $i,j=1,\; \dots ,\; 4$ we have $2T_{iij}+T_{jii}=0$ (12 identities); (iii) In the presence of three distinct components $i,j,k=1,\dots ,4$ we have $T_{ijk}+T_{jki}+T_{kij}=0$ (4 identities). There are no further algebraic symmetries. There are therefore 20 independent components, which we can canonically choose to be the union of the sets of components $\mathcal T_2$ (of order $\vert \mathcal T_2 \vert=12$) and $\mathcal T_3$ (of order $\vert \mathcal T_3 \vert=8$) that are defined as 
\begin{align}
\mathcal T_2 &=\{ T_{ijj} \vert i,j=1,\; \dots ,\; 4, \;\; j \neq i \}, \\ 
    \mathcal T_3 &= \{T_{123},T_{213},T_{124},T_{214},T_{134},T_{314},T_{234},T_{324} \}. 
\end{align}
The constraints \eqref{finalc} are 64 equations which can be splitted as follows: (i) 4 equations with 4 distinct indices $T_{i(jk;l)}=0$; (ii) 12 equations with 3 distinct indices of type $T_{i(ij;k)}=0$; (iii) 24 equations with 3 distinct indices of type $T_{i(jj;k)}=0$; (iv) 12 equations with 2 distinct indices of type $T_{i(jj;j)}=0$ and (v) 12 equations with 2 distinct indices of type $T_{i(ii;j)}=0$. 

Let us now specialize to the Kerr background and impose stationarity and axisymmetry, $\partial_t T_{\alpha\beta\gamma}=\partial_\phi T_{\alpha\beta\gamma}=0$. In that case, the 4 equations $T_{r(tt;\phi)}=0$, $T_{r(\phi\phi;t)}=0$, $T_{\theta (tt;\phi)}=0$, $T_{\theta(\phi\phi;t)}=0$ and the 6 equations $T_{t(\phi\phi;\phi)}=0$, $T_{r(\phi\phi;\phi)}=0$, $T_{\theta(\phi\phi;\phi)}=0$, $T_{r(tt;t)}=0$, $T_{\theta(tt;t)}=0$, $T_{\phi(tt;t)}=0$  are algebraic, and can be algebraically solved for 10 out of the 20 variables. There are two additional combinations of the remaining equations that allow to algebraically solve for 2 further variables. After removing redundant equations, we can finally algebraically reduce the system of 64 equations in 20 variables to 13 partial differential equations in 8 variables.

Further specializing to the Schwarzschild background, we found the general solution to the thirteen equations. There are exactly two regular solutions which are both trivial mixed-symmetry tensors of the form \eqref{Ntrivial} with either $\xi = \partial_t$ or $\xi = \partial_\phi$. Note that if we relax axisymmetry, there are two further trivial mixed-symmetry tensors \eqref{Ntrivial} with $\xi$ given by the two additional $SO(3)$ vectors.

For Kerr, we did not find the general solution of the 13 partial differential equations in 8 variables. However, we obtained the most general perturbative deformation in $a$ of the 2-parameter family of trivial mixed-symmetry tensors of the form \eqref{Ntrivial} with $\xi = \partial_t$ and $\partial_\phi$, assuming stationarity and axisymmetry. We obtained that the most general deformation is precisely a linear combination of the two trivial mixed-symmetry tensors of the form \eqref{Ntrivial} with $\xi = \partial_t$ and $\partial_\phi$. Moreover, we also checked that there does not exist a consistent linear deformation of a linear combination of the two $\phi$-dependent $SO(3)$ Schwarzschild trivial mixed-symmetry tensors. Since we physically expect continuity of the quasi-conserved quantities between Kerr and Schwarzschild, we ruled out the existence of new stationary and axisymmetric quasi-conserved quantities of the MPT equations. 

\subsection{Non-integrability of the linearized MPT system}\label{sec:hamilton}

We now stand in a comfortable position for discussing the integrability of the linearized MPT equations \eqref{MPT_lin_1}-\eqref{MPT_lin_2}. Recall that an Hamiltonian system described by $N$ generalized coordinates $\qty{q^A}$, their $N$ conjugated momenta $\qty{p_A}$ and the Hamiltonian $H$ is said to be \textit{completely integrable} (or integrable in the sense of Liouville) if there exist $N$ linearly independent first integrals of the motion $\qty{\mathcal I_A}$ which are in involution, \textit{i.e.} all the Poisson brackets between any two first integral should vanish \cite{arnold1989mathematical}:
\begin{align}
    \pb{\mathcal I_A}{\mathcal I_B}=0,\qquad\forall A,B\in 1,\ldots,N.
\end{align}

From an Hamiltonian perspective, the MPT equations form a system of dimension $N=4$ \cite{Witzany_2019}: in the momentum sector of the system, the constraint $\dot\mu^2=0$ leave us with three degrees of freedom. In the spin sector of the system, the constraints $\dot{\mathcal{S}}=0$, $\mathcal S^*=0$ and $p_\mu S^\mu=0$ leave us with only one remaining degree of freedom. The evolution is driven by the Hamiltonian
\begin{align}
    H_{\text{MPT,lin}}=\frac{1}{2}g^{\mu\nu}p_\mu p_\nu=-\frac{\mu^2}{2}
\end{align}
together with the Poisson brackets \cite{1972JMP....13..739K,1980GReGr..12..837F,1988IJTP...27..335T,Barausse:2009aa,2015PhRvD..92l4017R,dAmbrosi:2015ndl,Witzany_2019}
\begin{align}
    \pb{x^\mu}{x^\nu}&=0,\label{pb:xx}\\
    \pb{x^\mu}{p_\nu}&=\delta^\mu_\nu,\\
    \pb{p_\mu}{p_\nu}&=-\frac{1}{2}R_{\mu\nu\kappa\lambda}S^{\kappa\lambda},\label{pb:pp}\\
    \pb{S^{\mu\nu}}{p_\kappa}&=2\Gamma^{[\mu}_{\lambda\kappa}S^{\nu]\lambda},\label{pb:sp}\\
    \pb{S^{\mu\nu}}{x^\kappa}&=0,\\
    \pb{S^{\mu\nu}}{S^{\kappa\lambda}}&=g^{\mu\kappa}S^{\nu\lambda}-g^{\mu\lambda}S^{\nu\kappa}+g^{\nu\lambda}S^{\mu\kappa}-g^{\nu\kappa}S^{\mu\lambda}.\label{pb:ss}
\end{align}

Because it was shown that no additional independent quasi-conserved quantity could be found for the linearized MPT equations in Kerr, we can only exhibit the following set of four linearly independent non-trivial  quasi-constants of motion:
\begin{align}
    \mathcal I_A&=\qty(E,\,\ell,\,\mathcal Q_Y,\,\mathcal Q_R).
\end{align}
The quantities $\mu^2$ and $\mathcal S$ are not included in this set since they are exactly conserved. The invariant mass is the Hamiltonian $H_{\text{MPT,lin}}=-\mu^2/2$ and the spin magnitude is a Casimir element of the Poisson bracket algebra, as shown in Appendix \ref{app:pb}.

The Poisson brackets between the integrals $\mathcal I_A$ are computed in Appendix \ref{app:pb}. After analysis, it is found that only the Poisson bracket $\pb{\mathcal Q_Y}{\mathcal Q_R}$ is non-vanishing at order $\mathcal O(\mathcal S)$, as displayed in Table \ref{tab:pb}.
The four linearly independent first integrals $\mathcal I_A$ are consequently \textit{not} in involution at the linear level, and the linearized MPT equations do not form an integrable system in the sense of Liouville.

\begin{table}[t!]
    \begin{center}
        \begin{tikzpicture}[xscale=.75,yscale=.85]
        \cell{1.5}{0}{$\mathcal S$}{black!10};
        \cell{3}{0}{$E$}{black!15};
        \cell{4.5}{0}{$\ell$}{black!10};
        \cell{6}{0}{$\mathcal Q_Y$}{black!15};
        \cell{7.5}{0}{$\mathcal Q_R$}{black!10};
        
        \cell{0}{-.75}{$\mu^2$}{black!15};
        \cell{0}{-1.5}{$\mathcal S$}{black!10};
        \cell{0}{-2.25}{$E$}{black!15};
        \cell{0}{-3}{$\ell$}{black!10};
        \cell{0}{-3.75}{$\mathcal Q_Y$}{black!15};
        
        \cell{1.5}{-.75}{$0$}{green!60};
        \cell{3}{-.75}{$0$}{green!60};
        \cell{4.5}{-.75}{$0$}{green!60};
        \cell{6}{-.75}{$\mathcal O(\mathcal S^2)$}{green!20};
        \cell{7.5}{-.75}{$\mathcal O(\mathcal S^2)$}{green!20};
        
        \cell{3}{-1.5}{$0$}{green!60};
        \cell{4.5}{-1.5}{$0$}{green!60};
        \cell{6}{-1.5}{$0$}{green!60};
        \cell{7.5}{-1.5}{$0$}{green!60};
        
        \cell{4.5}{-2.25}{$0$}{green!60};
        \cell{6}{-2.25}{$0$}{green!60};
        \cell{7.5}{-2.25}{$\mathcal O(\mathcal S^2)$}{green!20};
        
        \cell{6}{-3}{$0$}{green!60};
        \cell{7.5}{-3}{$\mathcal O(\mathcal S^2)$}{green!20};
        
        \cell{7.5}{-3.75}{$\mathcal O(\mathcal S)$}{red!60};
        
        
        \cell{1.5}{-1.5}{}{black!5};
        
        \cell{1.5}{-2.25}{}{black!5};
        \cell{3}{-2.25}{}{black!5};
        
        \cell{1.5}{-3}{}{black!5};
        \cell{3}{-3}{}{black!5};
        \cell{4.5}{-3}{}{black!5};
        
        \cell{1.5}{-3.75}{}{black!5};
        \cell{3}{-3.75}{}{black!5};
        \cell{4.5}{-3.75}{}{black!5};
        \cell{6}{-3.75}{}{black!5};
        
        \end{tikzpicture}
    \end{center}
    \caption{Poisson brackets between the first integrals $\mathcal I_A$ of linearized MPT equations. The bracket that is generally non-vanishing at order $\mathcal O(\mathcal S)$ is represented in red, the other ones in green.}
    \label{tab:pb}
\end{table}

\section{Perspectives}

We reduced the existence of a new quasi-constant of motion of the Mathisson-Papapetrou-Tulczyjew equations, \emph{i.e.}, conserved at linear order in the spin, for Ricci-flat spacetimes admitting a Killing-Yano tensor to the existence of a mixed-symmetry Killing tensor on the background spacetime. The result applies in particular to Kerr spacetime, where it was shown that under the assumption of stationarity and axisymmetry only  trivial mixed-symmetry Killing tensors exist which lead to products of known quasi-invariants. The absence of complete integrability in the sense of Liouville prevents us to rule out chaos at linear order in the spin, which is nevertheless suggested from the conclusions of \cite{Ruangsri:2015cvg,Zelenka:2019nyp,Witzany:2019nml}. An open question is how many of the two independent quasi-conserved quantities at linear order in the spin found by R\"udiger \cite{doi:10.1098/rspa.1981.0046,doi:10.1098/rspa.1983.0012} admit a generalization that is conserved at quadratic order. Even more interesting is whether a complete set of invariants can be built for the Mathisson-Papapetrou-Dixon equations including the relevant quadrupolar corrections to the MPT equations \cite{1975CMaPh..42...65B,Deriglazov:2017jub,Deriglazov:2018vwa,Porto:2008jj,Steinhoff:2012rw}.

Even though we used the spin supplementary condition (SSC) of Tulczjew to derive our new constraint for the existence of additional non-trivial quasi-invariants, our reasoning also holds using the Mathisson-Pirani SSC \cite{Mathisson:1937zz,Pirani:1956tn} or the Kyrian-Semer\'ak SSC \cite{Kyrian:2007zz} as a direct consequence of Eq. \eqref{pv}. The explicit connection between the quasi-invariants detailed in this paper with the quasi-invariants of the spinning particle  determined by a Grassmann odd spin \cite{Cariglia:2011qb,Kubiznak:2011ay} remains to be established.

The occurrence of Einstein's tensor in the central identity \eqref{generalized_central} that we derived suggests that Rüdiger's quadratic invariant will admit a generalization to the Kerr-Newmann spacetime which also admits a Killing-Yano tensor, once the Einstein tensor is replaced with the electromagnetic stress-energy tensor. This remains to be investigated.

It also remains to be investigated whether the mixed-symmetry Killing tensors that we defined as trivial necessarily take the form of a sum of direct products of Killing(-Yano) vectors and tensors. Even if we suspect that this is true, we have not found a proof of such an assertion.

The Hamilton-Jacobi equation for the MPT equations was separated in \cite{Witzany:2019nml} with separation constants of motion identical to the energy, angular momentum and the two R\"udiger quasi-invariants. Assuming no resonances, Witzany \cite{Witzany:2019nml} was also able to derive action-angle variables and fundamental frequencies of the MPT system at linear order in the spin. However, the construction did not allow for a separation of the orbital equations of motion. This result is consistent with our findings that Liouville integrability does not hold due to a single non-vanishing Poisson bracket between the two R\"udiger quasi-invariants, even though the explicit relationship between these statements remains to be deepened.


\begin{acknowledgments}
We thank J. Vines and V. Witzany for pointing out two algebraic errors in a previous version of this manuscript which led to a major revision. We warmly thank L. Stein, J. Vines, V. Witzany for enlightening discussions during the Capra meeting 2021.  A. D. warmly thanks his friends from Clerheid for having welcomed him during a large part of the period devoted to this work, and for having thus provided him with a wonderful working environment.
A. D. is a Research Fellow and G. C. is Senior Research Associate of the F.R.S.-FNRS. G. C. acknowledges support from the FNRS research credit No. J003620F, the IISN convention No. 4.4503.15, and the COST Action GWverse No. CA16104.
\end{acknowledgments}

\appendix

\section{Poisson brackets}\label{app:pb}

In this Appendix, we will compute all the independent Poisson brackets of the form $\pb{\mathcal I_{A}}{\mathcal I_{B}}$, with $\mathcal I_{A}=\qty(\,E,\,\ell,\,\mathcal Q_Y,\,\mathcal Q_R)$. The computation below is build upon the fundamental relations \eqref{pb:xx}-\eqref{pb:ss}. The results of the computations are summarized in Table \ref{tab:pb}.

We begin by deriving an useful result about Poisson brackets: let $f(X)$ be an analytic function of some dynamical quantity $X$ such that the coefficients $f_n$ of the Taylor expansion $f(X)=\sum_{n=0}^{+\infty}f_n\frac{X^n}{n!}$ are constant, and let $Y$ be a dynamical quantity such that $\pb{X}{Y}\neq 0$. Then, from the Leibniz rule for Poisson brackets we have $\pb{X^n}{Y}=nX^{n-1}\pb{X}{Y}$ and one has the chain rule property
\begin{align}
    \pb{f(X)}{Y}&=\sum_{n=0}^{+\infty}\frac{f_n}{n!}\pb{X^n}{Y}\\
    &=\sum_{n=1}^{+\infty}f_n\frac{X^{n-1}}{(n-1)!}\pb{X}{Y}\\
    &=\dv{f}{X}\pb{X}{Y}.
\end{align}
This relation is easily generalized to any analytic function of $n$ variables $X^\alpha$ ($\alpha=1,\ldots,n$):
\begin{align}
    \pb{f(X^\alpha)}{Y}&=\pdv{f}{X^\lambda}\pb{X^\lambda}{Y}.\label{pb:chain_rule}
\end{align}

\subsection{Semi-canonical basis}
Before going further, notice that the Poisson brackets can be simplified through the impulsion's shift
\begin{align}
    p_\mu\to P_\mu\triangleq p_\mu+\chi_\mu,\qquad\text{with}\qquad\chi_\mu\triangleq\frac{1}{2}e_{\nu a;\mu}e^\nu_bS^{ab}.
\end{align}
The variables $x^\mu$ and $P_\nu$ are canonically conjugated\footnote{It is also possible to introduce canonical coordinates for the spin sector of the system, see e.g. \cite{Witzany:2019nml} for details.} one to another. The only non-vanishing Poisson brackets are \cite{Witzany_2019}
\begin{align}
    \pb{x^\mu}{P_\nu}&=\delta^\mu_\nu,\label{pb:cc1}\\
    \pb{S^{ab}}{S^{cd}}&=\eta^{ac}S^{bd}-\eta^{ad}S^{bc}+\eta^{bd}S^{ac}-\eta^{bc}S^{ad}.\label{pb:cc2}
\end{align}
We can rewrite the second expression as
\begin{align}
\pb{S^{ab}}{S_{cd}}=4\delta^{[a}_{[c}S\tud{b]}{d]}.\label{pb:ss_t}
\end{align}

\subsection{$\pb{\mathcal I_{A}}{\mathcal S}$-type brackets}
From the algebra \eqref{pb:cc1} and \eqref{pb:cc2}, it is straightforward to notice that the quantities
\begin{align}
    \mathcal S^2=\frac{1}{2}S^{ab}S_{ab},\qquad\qty(\mathcal S^{*})^2=\epsilon_{abcd}S^{ab}S^{cd}\label{propS}
\end{align}
are Casimir element of the algebra, \textit{i.e.} their Poisson brackets with all other dynamical variables are vanishing, as noticed in \cite{Witzany_2019}. Under the Tulczyjew spin supplementary condition \eqref{ssc}, one has $\mathcal S^* = 0$ as a direct consequence of \eqref{defSmu} and we shall not consider $\mathcal S^*$ any further.  The property \eqref{propS} together with the chain rule 
\begin{align}
    \pb{\mathcal I_{A}}{\mathcal S}&=\frac{1}{2 \mathcal S}\pb{\mathcal I_{A}}{\mathcal S^2}
\end{align}
leads to 
\begin{align}
    \pb{\mathcal I_{A}}{\mathcal S}=0.
\end{align}

\subsection{$\pb{\mathcal I_{\hat A}}{E}$-type brackets}
\begin{widetext}

\paragraph{$\mathcal I_{\hat A}=\ell$.}
Using the identities  $\nabla_\alpha k^\mu=\Gamma^\mu_{\alpha t}$ and $\nabla_\alpha l^\mu=\Gamma^\mu_{\alpha\phi}$, the bracket reads
\begin{align}
    \pb{\ell}{E}&=\pb{p_t}{p_\phi}+\frac{1}{2}\nabla_\alpha k_\beta \pb{S^{\alpha\beta}}{p_\phi}-\frac{1}{2}\nabla_\alpha l_\beta \pb{S^{\alpha\beta}}{p_t}-\frac{1}{4}\nabla_\alpha l_\beta  \nabla_\gamma  k_\delta \pb{S^{\alpha\beta}}{S^{\gamma\delta}}\\
    &=-\frac{1}{2}R_{t\phi \alpha\beta}S^{\alpha\beta}+\qty(\nabla_\alpha l_\beta  \Gamma ^\alpha_{\lambda  t}-\nabla_\alpha k_\beta \Gamma ^\alpha_{\lambda \phi})S^{\lambda \beta}+\nabla_\alpha l^\lambda \nabla_\lambda  k_\beta  S^{\alpha\beta}\\
    &=-\frac{1}{2}R_{t\phi \alpha\beta}S^{\alpha\beta}-\nabla_\alpha l^\lambda \nabla_\lambda  k_\beta  S^{\alpha\beta}=0.
\end{align}
\end{widetext}
The last equality follows from the fact that the axisymmetry of Kerr spacetime together with the definition of the Riemann tensor enforce the relation
\begin{align}
    R_{t\phi \alpha\beta}S^{\alpha\beta}&=2\Gamma^\alpha_{t\lambda}\Gamma^\lambda_{\phi \beta}S\tdu{\alpha}{\beta}=-2\nabla_\alpha l^\lambda\nabla_\lambda k_\beta S^{\alpha\beta}
\end{align}
to hold.

\paragraph{$\mathcal I_{\hat A}=\mathcal Q_Y$.}
The axisymmetric character of Kerr spacetime enforces the tetrad $e^\mu_a$ to be independent of $t$ and $\phi$. This yields $\pb{p_{t,\phi}}{e^\mu_a}=-\partial_{t,\phi}e^\mu_a=0$. Consequently,
\begin{align}
    \pb{p_{t,\phi}}{S^{ab}}=\pb{p_{t,\phi}}{S^{\mu\nu}}e^a_\mu e^b_\nu.\label{ss_tetrad}
\end{align}
Using this last equation and the fact that $\mathcal Q_Y$ commute with $x^\mu$, we get
\begin{align}
    \pb{\mathcal Q_Y}{E}&=\pb{p_t}{\mathcal Q_Y}+\frac{1}{2}\nabla_a k_b\pb{S^{ab}}{\mathcal Q_Y}\\
    &\overset{\eqref{ss_tetrad}}{=}-4\bigg[r\qty(\Gamma^{[1}_{kt}-\nabla_kk^{[1})S^{0]k}\nonumber\\
    &\quad+a\cos\theta\qty(\Gamma^{[3}_{kt}-\nabla_kk^{[3})S^{2]k}\bigg]\\
    &=0.
\end{align}

\paragraph{$\mathcal I_{\hat A}=\mathcal Q_R$.} Using the identity
\begin{align}
    \pb{E^G}{E}&=\frac{1}{2}\nabla_\alpha k_\beta\Gamma^\alpha_{\lambda t}S^{\lambda\beta}\\
    &=-\frac{1}{2}\nabla_\beta k_\alpha\nabla_\lambda k^\alpha S^{\lambda\beta}=0
\end{align}
and the chain rule \eqref{pb:chain_rule}, one has
\begin{align}
    \pb{\mathcal Q_R}{E}&=2p_\mu K^{\mu\nu}\pb{p_\nu}{E}-2\mu\pb{S^\alpha\partial_\alpha\mathcal Z}{E}\\
    &=-p_\mu K^{\mu\nu}\qty(2\pb{p_\nu}{p_t}+\pb{p_\nu}{\nabla_\alpha k_\beta S^{\alpha\beta}})\nonumber\\
    \hspace{-12pt}&\quad+\mu\qty(2\pb{S^\alpha\partial_\alpha\mathcal Z}{p_t}+\pb{S^\alpha\partial_\alpha\mathcal Z}{\nabla_\mu k_\nu S^{\mu\nu}}).
\end{align}
Making use of the identity
\begin{align}
  R_{t\beta\mu\nu}S^{\mu\nu}&=\partial_\beta\qty(\nabla_\mu  k_\nu)S^{\mu\nu}+2\nabla_\alpha k_\nu\Gamma^\alpha_{\beta\lambda}S^{\nu\lambda},
\end{align}
the two first Poisson brackets of this expression can be shown to cancel mutually, and we are left with
\begin{align}
    \hspace{-6pt}\pb{\mathcal Q_R}{E}&=\frac{1}{2}\epsilon^{\alpha\beta\gamma\delta}S_{\gamma\delta}\,\partial_\alpha\mathcal Z\qty[R_{t\beta\mu\nu}-\partial_\beta\qty(\nabla_\mu k_\nu)]S^{\mu\nu}\\
    &=\epsilon^{\alpha\beta\gamma\delta}\partial_\alpha\mathcal Z S_{\gamma\delta}\nabla_\rho k_\nu\Gamma^\rho_{\lambda\beta}S^{\nu\lambda}\\
    &=\mathcal O(\mathcal S^2).
\end{align}

\subsection{$\pb{\mathcal I_{A}}{\ell}$-type brackets}

The computations are identical to the $\pb{\mathcal I_{A}}{E}$-type case, but with $k^\alpha\to l^\alpha$. We consequently find
\begin{align}
    \pb{\mathcal Q_Y}{\ell}&=0,\\
    \pb{\mathcal Q_R}{\ell}&=\mathcal O(\mathcal S^2).
\end{align}

\begin{widetext}

\subsection{$\pb{\mathcal I_{A}}{\mathcal Q_Y}$-type brackets}

The final bracket to be computed takes the form
\begin{align}
    \pb{\mathcal Q_R}{\mathcal Q_Y}&=2p_\mu K^{\mu\nu}\pb{p_\nu}{\mathcal Q_Y}-2\mu\,\partial_\alpha\mathcal Z\pb{S^\alpha}{\mathcal Q_Y}\\
    &=-2p_\mu K^{\mu\nu}\qty(\partial_\nu Y^*_{\alpha\beta} S^{\alpha\beta}+2\Gamma^\alpha_{\rho\nu}S^{\beta\rho}Y^*_{\alpha\beta})-4\epsilon\tud{\alpha\beta}{\gamma\delta}\partial_\alpha\mathcal Z p_\beta Y\tud{*\gamma}{\nu}S^{\delta\nu}+\mathcal O(\mathcal S^2)\\
    &=-2p_\mu K^{\mu\nu}\nabla_\nu Y^*_{\alpha\beta}S^{\alpha\beta}-2\epsilon^{\alpha\beta\gamma\delta}\partial_\alpha\mathcal Z p_\beta \epsilon_{\gamma\nu\rho\sigma}Y^{\rho\sigma}S\tdu{\delta}{\nu}+\mathcal O(\mathcal S^2)\\
    &=4p_\mu S^{\alpha\beta}\qty(K\tud{\mu}{\alpha}\xi_\beta+Y\tud{\mu}{\alpha}Y\tud{\lambda}{\beta}\xi_\lambda)+\mathcal O(\mathcal S^2).
\end{align}
This expression is generally non-vanishing at order $\mathcal O(\mathcal S)$.

\end{widetext}

\let\oldaddcontentsline\addcontentsline
\renewcommand{\addcontentsline}[3]{}

\providecommand{\href}[2]{#2}\begingroup\raggedright\endgroup

\let\addcontentsline\oldaddcontentsline


\begin{thebibliography}{10}

\bibitem{AmaroSeoane:2007aw}
P.~Amaro-Seoane, J.~R. Gair, M.~Freitag, M.~Coleman~Miller, I.~Mandel, C.~J.
  Cutler, and S.~Babak, ``{Astrophysics, detection and science applications of
  intermediate- and extreme mass-ratio inspirals},'' {\em Class. Quant. Grav.}
  {\bf 24} (2007) R113--R169,
\href{http://www.arXiv.org/abs/astro-ph/0703495}{{\tt astro-ph/0703495}}.

\bibitem{Babak:2017tow}
S.~Babak, J.~Gair, A.~Sesana, E.~Barausse, C.~F. Sopuerta, C.~P.~L. Berry,
  E.~Berti, P.~Amaro-Seoane, A.~Petiteau, and A.~Klein, ``{Science with the
  space-based interferometer LISA. V: Extreme mass-ratio inspirals},'' {\em
  Phys. Rev.} {\bf D95} (2017), no.~10, 103012,
\href{http://www.arXiv.org/abs/1703.09722}{{\tt 1703.09722}}.

\bibitem{Barack:2018yvs}
L.~Barack and A.~Pound, ``{Self-force and radiation reaction in general
  relativity},'' {\em Rept. Prog. Phys.} {\bf 82} (2019), no.~1, 016904,
  \href{http://www.arXiv.org/abs/1805.10385}{{\tt 1805.10385}}.

\bibitem{Pound:2021qin}
A.~Pound and B.~Wardell, ``{Black hole perturbation theory and gravitational
  self-force},'' \href{http://www.arXiv.org/abs/2101.04592}{{\tt 2101.04592}}.

\bibitem{2010GReGr..42.1011M}
M.~{Mathisson}, ``{Republication of: New mechanics of material systems},'' {\em
  General Relativity and Gravitation} {\bf 42} (Apr., 2010) 1011--1048.

\bibitem{1951RSPSA.209..248P}
A.~{Papapetrou}, ``{Spinning Test-Particles in General Relativity. I},'' {\em
  Proceedings of the Royal Society of London Series A} {\bf 209} (Oct., 1951)
  248--258.

\bibitem{Tulczyjew:1957aa}
W.~Tulczyjew, ``{On the energy-momentum tensor density for simple pole
  particles},'' {\em Bull. Acad. Polon. Sci. Cl.} {\bf III} (1957), no.~5, 279.

\bibitem{Tulczyjew:1959aa}
W.~Tulczyjew, ``{Motion of multipole particles in general relativity theory},''
  {\em Acta Phys.Polon.} {\bf 18} (1959) 393--409.

\bibitem{1963PhRvL..11..237K}
R.~P. {Kerr}, ``{Gravitational Field of a Spinning Mass as an Example of
  Algebraically Special Metrics},'' {\em Phys. Rev. Lett.} {\bf 11} (Sept.,
  1963) 237--238.

\bibitem{Penrose:1999vj}
R.~Penrose, ``{The question of cosmic censorship},'' {\em J. Astrophys.
  Astron.} {\bf 20} (1999) 233--248.

\bibitem{Isoyama:2012bx}
S.~Isoyama, R.~Fujita, N.~Sago, H.~Tagoshi, and T.~Tanaka, ``{Impact of the
  second-order self-forces on the dephasing of the gravitational waves from
  quasicircular extreme mass-ratio inspirals},'' {\em Phys. Rev. D} {\bf 87}
  (2013), no.~2, 024010, \href{http://www.arXiv.org/abs/1210.2569}{{\tt
  1210.2569}}.

\bibitem{Kidder:1992fr}
L.~E. Kidder, C.~M. Will, and A.~G. Wiseman, ``{Spin effects in the inspiral of
  coalescing compact binaries},'' {\em Phys. Rev. D} {\bf 47} (1993), no.~10,
  R4183--R4187, \href{http://www.arXiv.org/abs/gr-qc/9211025}{{\tt
  gr-qc/9211025}}.

\bibitem{Kidder:1995zr}
L.~E. Kidder, ``{Coalescing binary systems of compact objects to postNewtonian
  5/2 order. 5. Spin effects},'' {\em Phys. Rev. D} {\bf 52} (1995) 821--847,
  \href{http://www.arXiv.org/abs/gr-qc/9506022}{{\tt gr-qc/9506022}}.

\bibitem{Will:1996zj}
C.~M. Will and A.~G. Wiseman, ``{Gravitational radiation from compact binary
  systems: Gravitational wave forms and energy loss to second postNewtonian
  order},'' {\em Phys. Rev. D} {\bf 54} (1996) 4813--4848,
  \href{http://www.arXiv.org/abs/gr-qc/9608012}{{\tt gr-qc/9608012}}.

\bibitem{Gergely:1999pd}
L.~A. Gergely, ``{Spin spin effects in radiating compact binaries},'' {\em
  Phys. Rev. D} {\bf 61} (2000) 024035,
  \href{http://www.arXiv.org/abs/gr-qc/9911082}{{\tt gr-qc/9911082}}.

\bibitem{Mikoczi:2005dn}
B.~Mikoczi, M.~Vasuth, and L.~A. Gergely, ``{Self-interaction spin effects in
  inspiralling compact binaries},'' {\em Phys. Rev. D} {\bf 71} (2005) 124043,
  \href{http://www.arXiv.org/abs/astro-ph/0504538}{{\tt astro-ph/0504538}}.

\bibitem{Racine:2008kj}
E.~Racine, A.~Buonanno, and L.~E. Kidder, ``{Recoil velocity at 2PN order for
  spinning black hole binaries},'' {\em Phys. Rev. D} {\bf 80} (2009) 044010,
  \href{http://www.arXiv.org/abs/0812.4413}{{\tt 0812.4413}}.

\bibitem{Porto:2008jj}
R.~A. Porto and I.~Z. Rothstein, ``{Next to Leading Order Spin(1)Spin(1)
  Effects in the Motion of Inspiralling Compact Binaries},'' {\em Phys. Rev. D}
  {\bf 78} (2008) 044013, \href{http://www.arXiv.org/abs/0804.0260}{{\tt
  0804.0260}}. [Erratum: Phys.Rev.D 81, 029905 (2010)].

\bibitem{Bohe:2015ana}
A.~Boh\'e, G.~Faye, S.~Marsat, and E.~K. Porter, ``{Quadratic-in-spin effects
  in the orbital dynamics and gravitational-wave energy flux of compact
  binaries at the 3PN order},'' {\em Class. Quant. Grav.} {\bf 32} (2015),
  no.~19, 195010, \href{http://www.arXiv.org/abs/1501.01529}{{\tt 1501.01529}}.

\bibitem{Kastha:2019brk}
S.~Kastha, A.~Gupta, K.~G. Arun, B.~S. Sathyaprakash, and C.~Van Den~Broeck,
  ``{Testing the multipole structure and conservative dynamics of compact
  binaries using gravitational wave observations: The spinning case},'' {\em
  Phys. Rev. D} {\bf 100} (2019), no.~4, 044007,
  \href{http://www.arXiv.org/abs/1905.07277}{{\tt 1905.07277}}.

\bibitem{Witzany_2019}
V.~Witzany, J.~Steinhoff, and G.~Lukes-Gerakopoulos, ``Hamiltonians and
  canonical coordinates for spinning particles in curved space-time,'' {\em
  Classical and Quantum Gravity} {\bf 36} (Feb, 2019) 075003.

\bibitem{doi:10.1098/rspa.1981.0046}
R.~Rüdiger, ``Conserved quantities of spinning test particles in general
  relativity. {I},'' {\em Proceedings of the Royal Society of London. A.
  Mathematical and Physical Sciences} {\bf 375} (1981), no.~1761, 185--193.

\bibitem{doi:10.1098/rspa.1983.0012}
R.~Rüdiger, ``Conserved quantities of spinning test particles in general
  relativity. {II},'' {\em Proceedings of the Royal Society of London. A.
  Mathematical and Physical Sciences} {\bf 385} (1983), no.~1788, 229--239.

\bibitem{Batista:2020cto}
C.~Batista and E.~B.~d. Santos, ``{Conserved quantities for the free motion of
  particles with spin},'' {\em Phys. Rev. D} {\bf 101} (2020), no.~10, 104049,
  \href{http://www.arXiv.org/abs/2004.02966}{{\tt 2004.02966}}.

\bibitem{Zelenka:2019nyp}
O.~Zelenka, G.~Lukes-Gerakopoulos, V.~Witzany, and O.~Kop\'a\v{c}ek, ``{Growth
  of resonances and chaos for a spinning test particle in the Schwarzschild
  background},'' {\em Phys. Rev. D} {\bf 101} (2020), no.~2, 024037,
  \href{http://www.arXiv.org/abs/1911.00414}{{\tt 1911.00414}}.

\bibitem{Kunst:2015tla}
D.~Kunst, T.~Ledvinka, G.~Lukes-Gerakopoulos, and J.~Seyrich, ``{Comparing
  Hamiltonians of a spinning test particle for different tetrad fields},'' {\em
  Phys. Rev. D} {\bf 93} (2016), no.~4, 044004,
  \href{http://www.arXiv.org/abs/1506.01473}{{\tt 1506.01473}}.

\bibitem{Ruangsri:2015cvg}
U.~Ruangsri, S.~J. Vigeland, and S.~A. Hughes, ``{Gyroscopes orbiting black
  holes: A frequency-domain approach to precession and spin-curvature coupling
  for spinning bodies on generic Kerr orbits},'' {\em Phys. Rev. D} {\bf 94}
  (2016), no.~4, 044008, \href{http://www.arXiv.org/abs/1512.00376}{{\tt
  1512.00376}}.

\bibitem{Witzany:2019nml}
V.~Witzany, ``{Hamilton-Jacobi equation for spinning particles near black
  holes},'' {\em Phys. Rev. D} {\bf 100} (2019), no.~10, 104030,
  \href{http://www.arXiv.org/abs/1903.03651}{{\tt 1903.03651}}.

\bibitem{1970RSPSA.314..499D}
W.~G. {Dixon}, ``{Dynamics of Extended Bodies in General Relativity. I.
  Momentum and Angular Momentum},'' {\em Proceedings of the Royal Society of
  London Series A} {\bf 314} (Jan., 1970) 499--527.

\bibitem{1970RSPSA.319..509D}
W.~G. {Dixon}, ``{Dynamics of Extended Bodies in General Relativity. II.
  Moments of the Charge-Current Vector},'' {\em Proceedings of the Royal
  Society of London Series A} {\bf 319} (Nov., 1970) 509--547.

\bibitem{Harte_2015}
A.~I. Harte, ``Motion in classical field theories and the foundations of the
  self-force problem,'' {\em Equations of Motion in Relativistic Gravity}
  (2015) 327–398.

\bibitem{1977GReGr...8..197E}
J.~{Ehlers} and E.~{Rudolph}, ``{Dynamics of extended bodies in general
  relativity center-of-mass description and quasirigidity},'' {\em General
  Relativity and Gravitation} {\bf 8} (Mar., 1977) 197--217.

\bibitem{Suzuki:1997tg}
S.~Suzuki and K.~I. Maeda, ``{Spinning particle around black hole and
  gravitational wave},'' in {\em {8th Marcel Grossmann Meeting on Recent
  Developments in Theoretical and Experimental General Relativity, Gravitation
  and Relativistic Field Theories (MG 8)}}.
\newblock 6, 1997.

\bibitem{Deriglazov:2017jub}
A.~A. Deriglazov and W.~Guzm\'an~Ram\'\i{}rez, ``{Recent progress on the
  description of relativistic spin: vector model of spinning particle and
  rotating body with gravimagnetic moment in General Relativity},'' {\em Adv.
  Math. Phys.} {\bf 2017} (2017) 7397159,
  \href{http://www.arXiv.org/abs/1710.07135}{{\tt 1710.07135}}.

\bibitem{Deriglazov:2018vwa}
A.~A. Deriglazov and W.~Guzm\'an~Ram\'\i{}rez, ``{Frame-dragging effect in the
  field of non rotating body due to unit gravimagnetic moment},'' {\em Phys.
  Lett. B} {\bf 779} (2018) 210--213,
  \href{http://www.arXiv.org/abs/1802.08079}{{\tt 1802.08079}}.

\bibitem{Wald:1984rg}
R.~M. Wald, {\em {General Relativity}}.
\newblock Chicago Univ. Pr., Chicago, USA, 1984.

\bibitem{vandeMeent:2019cam}
M.~van~de Meent, ``{Analytic solutions for parallel transport along generic
  bound geodesics in Kerr spacetime},'' {\em Class. Quant. Grav.} {\bf 37}
  (2020), no.~14, 145007, \href{http://www.arXiv.org/abs/1906.05090}{{\tt
  1906.05090}}.

\bibitem{Saijo:1998mn}
M.~Saijo, K.-i. Maeda, M.~Shibata, and Y.~Mino, ``{Gravitational waves from a
  spinning particle plunging into a Kerr black hole},'' {\em Phys. Rev. D} {\bf
  58} (1998) 064005.

\bibitem{Semerak:1999qc}
O.~Semerak, ``{Spinning test particles in a Kerr field. 1.},'' {\em Mon. Not.
  Roy. Astron. Soc.} {\bf 308} (1999) 863--875.

\bibitem{Harte:2008xq}
A.~I. Harte, ``{Self-forces from generalized Killing fields},'' {\em Class.
  Quant. Grav.} {\bf 25} (2008) 235020,
  \href{http://www.arXiv.org/abs/0807.1150}{{\tt 0807.1150}}.

\bibitem{witzany2019spinperturbed}
V.~Witzany, ``Spin-perturbed orbits near black holes,'' 2019.

\bibitem{doi:10.1098/rspa.1981.0056}
W.~Dietz, R.~Rüdiger, and D.~Lynden-Bell, ``Space-times admitting
  {Killing-Yano} tensors. {I},'' {\em Proceedings of the Royal Society of
  London. A. Mathematical and Physical Sciences} {\bf 375} (1981), no.~1762,
  361--378.

\bibitem{10021842130}
R.~Floyd, ``The dynamics of {Kerr} fields,'' {\em PhD Thesis, London} (1973).

\bibitem{Mino:1997bx}
Y.~Mino, M.~Sasaki, M.~Shibata, H.~Tagoshi, and T.~Tanaka, ``{Black hole
  perturbation: Chapter 1},'' {\em Prog. Theor. Phys. Suppl.} {\bf 128} (1997)
  1--121, \href{http://www.arXiv.org/abs/gr-qc/9712057}{{\tt gr-qc/9712057}}.

\bibitem{arnold1989mathematical}
V.~Arnold, {\em Mathematical methods of classical mechanics}, vol.~60.
\newblock Springer, 1989.

\bibitem{1972JMP....13..739K}
H.~P. {K{\"u}nzle}, ``{Canonical Dynamics of Spinning Particles in
  Gravitational and Electromagnetic Fields},'' {\em Journal of Mathematical
  Physics} {\bf 13} (May, 1972) 739--744.

\bibitem{1980GReGr..12..837F}
Y.~{Feldman} and G.~E. {Tauber}, ``{The internal state of a gas of particles
  with spin},'' {\em General Relativity and Gravitation} {\bf 12} (Oct., 1980)
  837--856.

\bibitem{1988IJTP...27..335T}
G.~E. {Tauber}, ``{Canonical Formalism and Equations of Motion for a Spinning
  Particle in General Relativity},'' {\em International Journal of Theoretical
  Physics} {\bf 27} (Mar., 1988) 335--344.

\bibitem{Barausse:2009aa}
E.~Barausse, E.~Racine, and A.~Buonanno, ``{Hamiltonian of a spinning
  test-particle in curved spacetime},'' {\em Phys. Rev. D} {\bf 80} (2009)
  104025, \href{http://www.arXiv.org/abs/0907.4745}{{\tt 0907.4745}}. [Erratum:
  Phys.Rev.D 85, 069904 (2012)].

\bibitem{2015PhRvD..92l4017R}
W.~G. {Ram{\'\i}rez} and A.~A. {Deriglazov}, ``{Lagrangian formulation for
  Mathisson-Papapetrou-Tulczyjew-Dixon equations},'' {\em \prd} {\bf 92} (Dec.,
  2015) 124017, \href{http://www.arXiv.org/abs/1509.04926}{{\tt 1509.04926}}.

\bibitem{dAmbrosi:2015ndl}
G.~d'Ambrosi, S.~Satish~Kumar, and J.~W. van Holten, ``{Covariant hamiltonian
  spin dynamics in curved space\textendash{}time},'' {\em Phys. Lett. B} {\bf
  743} (2015) 478--483, \href{http://www.arXiv.org/abs/1501.04879}{{\tt
  1501.04879}}.

\bibitem{1975CMaPh..42...65B}
I.~{Bailey} and W.~{Israel}, ``{Lagrangian dynamics of spinning particles and
  polarized media in general relativity},'' {\em Communications in Mathematical
  Physics} {\bf 42} (Feb., 1975) 65--82.

\bibitem{Steinhoff:2012rw}
J.~Steinhoff and D.~Puetzfeld, ``{Influence of internal structure on the motion
  of test bodies in extreme mass ratio situations},'' {\em Phys. Rev. D} {\bf
  86} (2012) 044033, \href{http://www.arXiv.org/abs/1205.3926}{{\tt
  1205.3926}}.

\bibitem{Mathisson:1937zz}
M.~Mathisson, ``{Neue mechanik materieller systemes},'' {\em Acta Phys. Polon.}
  {\bf 6} (1937) 163--2900.

\bibitem{Pirani:1956tn}
F.~A.~E. Pirani, ``{On the Physical significance of the Riemann tensor},'' {\em
  Acta Phys. Polon.} {\bf 15} (1956) 389--405.

\bibitem{Kyrian:2007zz}
K.~Kyrian and O.~Semerak, ``{Spinning test particles in a Kerr field},'' {\em
  Mon. Not. Roy. Astron. Soc.} {\bf 382} (2007) 1922.

\bibitem{Cariglia:2011qb}
M.~Cariglia, P.~Krtous, and D.~Kubiznak, ``{Dirac Equation in Kerr-NUT-(A)dS
  Spacetimes: Intrinsic Characterization of Separability in All Dimensions},''
  {\em Phys. Rev. D} {\bf 84} (2011) 024008,
  \href{http://www.arXiv.org/abs/1104.4123}{{\tt 1104.4123}}.

\bibitem{Kubiznak:2011ay}
D.~Kubiznak and M.~Cariglia, ``{On Integrability of spinning particle motion in
  higher-dimensional black hole spacetimes},'' {\em Phys. Rev. Lett.} {\bf 108}
  (2012) 051104, \href{http://www.arXiv.org/abs/1110.0495}{{\tt 1110.0495}}.

\end{thebibliography}
\end{document}